\documentclass[12pt]{article}
\usepackage{url} 

\newcommand{\blind}{0}

\addtolength{\oddsidemargin}{-.5in}%
\addtolength{\evensidemargin}{-1in}%
\addtolength{\textwidth}{1in}%
\addtolength{\textheight}{1.7in}%
\addtolength{\topmargin}{-1in}%

\usepackage[utf8]{inputenc}
\usepackage{comment}
\usepackage{appendix}
\usepackage{natbib}
\bibliographystyle{abbrvnat}
\usepackage{threeparttable}
\usepackage{varioref}
\usepackage[linktocpage=true]{hyperref}
\usepackage{amsmath}
\usepackage[capitalise,noabbrev]{cleveref}
\usepackage{amssymb}
\usepackage{amsthm}
\usepackage{graphicx}
\usepackage{float}
\usepackage{soul}
\usepackage{algorithm}
\usepackage[noend]{algpseudocode}
\usepackage{comment}
\usepackage{lmodern}

\newcommand{\vect}[1]{{\boldsymbol{#1}}} 
\DeclareMathOperator*{\argmin}{arg\,min} 
\def\one{\mbox{1\hspace{-4.25pt}\fontsize{12}{14.4}\selectfont\textrm{1}}} 


  {
      \theoremstyle{plain} 
      \newtheorem{assumption}{Assumption}
      \newtheorem{theorem}{Theorem}
      \newtheorem{lemma}{Lemma}
 }
\begin{document}

\def\spacingset#1{\renewcommand{\baselinestretch}%
{#1}\small\normalsize} \spacingset{1}


\if0\blind
{
  \title{Flexible Conformal Highest Predictive Conditional Density Sets}
  \author{Max Sampson \\
    Department of Statistics and Actuarial Science, University of Iowa\\
    and \\
    Kung-Sik Chan \\
    Department of Statistics and Actuarial Science, University of Iowa}
  \maketitle
} \fi

\bigskip
\begin{abstract}
We  introduce our method, conformal highest conditional density sets (CHCDS), that forms conformal prediction sets using existing estimated conditional highest density predictive regions. We prove the validity of the method, and that conformal adjustment is negligible under some regularity conditions. In particular, if we correctly specify the underlying conditional density estimator, the conformal adjustment will be negligible. 
The conformal adjustment, however, always provides guaranteed nominal unconditional coverage, even when the underlying model is incorrectly specified. 
We compare the proposed method via simulation and a real data analysis to other existing methods. Our numerical results show that CHCDS is better than existing methods in scenarios where the error term is multi-modal, and just as good as existing methods when the error terms are unimodal.
\end{abstract}

\noindent%
{\it Keywords:}  Additive conformal adjustment;
Highest density prediction;
Kernel density estimation;
Multi-modal predictive distribution.
\vfill

\newpage
\spacingset{1.75} 

\section{Background}

The problem of estimating upper-level or highest density sets has been widely studied \citep{Polonik_1995, Cuevas1997_support_estimation, Rigollet_2009_density_set_estimation, chen2016density, Samworth_2010, Lei_Wasserman_Conformal_kernel}. It involves estimating $\{x: f(x) > \lambda^{(\alpha)} \}$ for some $\lambda^{(\alpha)} > 0$ and density $f$ based only on samples drawn from $f$, where $\lambda^{(\alpha)}$ is chosen such that $\int_{\{x:f(x) \leq \lambda^{(\alpha)} \}} f(y) dy = \alpha$. The unconditional density can also be replaced by a conditional density to find conditional density-level sets. Without further adjustment, one could use these sets to form prediction sets (herein, unadjusted highest density sets). Unadjusted highest density sets can be combined with conformal prediction to form adjusted highest density sets that guarantee finite sample coverage. Nearly all existing conformal methods that attempt to find highest density sets partition the data \citep{Lei_Wasserman_Conformal_kernel, izbicki2021cdsplit}, so the observed coverage probability can be highly variable within each partition \citep{Lei_Wasserman_Conformal_kernel, izbicki2021cdsplit, angelopoulos_gentle_introduction}. HPD-split uses conformal prediction and a conditional density estimator to approximate the highest density set, but requires numerical integration for each data point in the calibration set \citep{izbicki2021cdsplit}. It is well known that estimating the highest density sets is computationally easier (in one and multiple dimensions) than numerically integrating densities \citep{hyndman_conditional_density_1996}. Other conformal methods attempt to find the shortest interval, but these fail to capture error terms with multi-modality while requiring multiple models to be fit \citep{CHR, dcp}. More background on conformal prediction can be found in~\cref{sec: conformal prediction} of the Supplementary Material. See also \citep{conformal_book, dis_free_pred_COPS, angelopoulos_gentle_introduction} for an introduction to conformal prediction. 

In this paper we introduce a new conformal method for estimating the highest predictive density set or region that requires only an estimate of the conditional density, conformalized highest conditional density sets (CHCDS). Our method provides an easy to understand conformal adjustment to unadjusted highest density sets that guarantees finite sample coverage without partitioning the data. Our conformal score can be easily combined with any conditional density estimator to form conformal prediction regions. We compare our method to HPD-split, Conformal Histogram Regression (CHR), optimal Distributional Conformal Prediction (DCP), and Conformalized Quantile Regression (CQR) \citep{izbicki2021cdsplit, CHR, dcp, romanocqr}. The main advantages of CHCDS compared to existing methods are that it does not partition the data, it is computationally efficient as it requires no numerical integration and only requires fitting one model, and it can capture multi-model error terms.

\section{Algorithm idea} \label{sec:CHCDS_split}

Throughout we let $\vect{X}$ represent the covariates, $Y$ the response,  $(\vect{X}_1, Y_1), \ldots, (\vect{X}_n, Y_n)$ the observed data, and $\vect{X}_{n+1}$ the covariates for our prediction of interest.

We now describe our method, conformalized highest conditional density sets (CHCDS), which guarantees coverage for existing estimated conditional highest density predictive regions without creating partitions. As with other split conformal prediction methods, we begin by splitting our data into sets used for training and calibration. The training set is indexed by $\mathcal{I}_{tr}$ and the calibration set by $\mathcal{I}_{cal}$. Given any conditional density estimating function, $\mathcal{B}$, we fit $\hat{f}(\cdot\mid \vect{x})$ on the training data:

\[
\hat{f} \leftarrow \mathcal{B}[\{(Y_i, \vect{X}_i): i \in \mathcal{I}_{tr}\}].
\]

Now, using the trained model, compute unadjusted $1 - \alpha$ highest density prediction sets using $\hat{f}(\cdot\mid\vect{X}_i)$ for each $i \in \mathcal{I}_{cal}$. Denote the density cutoff points as $\hat{c}(\vect{X}_i)$, whereas their true counterparts w.r.t. $f(\cdot\mid \vect{X}_i)$ are denoted as $c(\vect{X}_i)$. The unadjusted highest density set would then be,
\[
\{y: \hat{f}(y|\vect{X}_{n+1}) > \hat{c}(\vect{X}_{n+1}) \}.
\]

Instead of using the set with no coverage guarantees, we use the density cutoff points and the calibration set to compute the scores,
\begin{equation}
  V_{i} = \hat{f}(Y_i\mid \vect{X}_i) - \hat{c}(\vect{X}_i) \text{, } \forall i \in \mathcal{I}_{cal}. \label{eq: defn of V}  
\end{equation}

Next, compute $\hat{q} =\lfloor \alpha (n_{cal} + 1) \rfloor$th smallest value of $\{ \vect{V} \}$. For the new data point, $\vect{X}_{n+1}$, compute the unadjusted $1 - \alpha$ highest density prediction set using $\hat{f}(\cdot\mid \vect{X}_{n+1})$. Denote the estimated cutoff point as $\hat{c}(\vect{X}_{n+1})$. The conformal prediction set then becomes
\[
\vect{C}(\vect{X}_{n+1}) = \{y: \hat{f}(y\mid \vect{X}_{n+1}) > \hat{c}(\vect{X}_{n+1}) + \hat{q} \}.
\]

For reference, the procedure is outlined in Algorithm~\ref{alg:CHCDS_split}. Intuitively, this method starts with an estimate of the highest density set. It then adjusts the cutoff point based on the scores to achieve the desired coverage. This is similar in nature to conformalized quantile regression, but it adjusts vertically based on the density estimate instead of horizontally based on the true response \citep{romanocqr}. A visual of how our score works can be seen in Section~\ref{sec: conformal prediction} in the Supplementary Material. If there is substantive knowledge that the predictive distribution is unimodal, then existing unimodal density estimators may be used to ensure CHCDS outputs a prediction interval. See~\cref{sec:unimodal density estimation} in the Supplementary Material for further discussion on unimodal density estimation and a short discussion on the benefits of a multi-modal density estimator when there is a non-standard relationship, for example multifunctional covariate-response relationships.


\begin{algorithm}
    \caption{CHCDS}\label{alg:CHCDS_split}
    \textbf{Input:} level $\alpha$, data = $\mathcal{Z} = (Y_i, \vect{X}_i)_{i \in \mathcal{I}}$, test point $(\vect{x})$, and conditional density algorithm $\mathcal{B}$ \newline
    \textbf{Procedure:}
    \begin{algorithmic}[1]

    \State Split $\mathcal{Z}$ into a training fold $\mathcal{Z}_{tr} \overset{\Delta}{=} (Y_i, \vect{X}_i)_{i \in \mathcal{I}_{tr}}$ and a calibration fold $\mathcal{Z}_{cal} \overset{\Delta}{=} (Y_i, \vect{X}_i)_{i \in \mathcal{I}_{cal}}$
    \State Fit $\hat{f} = \mathcal{B}(\{ (\vect{X}_i, Y_i): i \in \mathcal{I}_{tr} \})$
    \State For each $i \in \mathcal{I}_{cal}$, use a root finding technique to estimate the shortest $1 - \alpha$ upper-level density set using $\hat{f}(Y|\vect{X}_i)$. Denote the density cutoff points as $\hat{c}(\vect{X}_i)$
    \State For each $i \in \mathcal{I}_{cal}$, compute the scores $V_i = \hat{f}(Y_i|\vect{X}_i) - \hat{c}(\vect{X}_i)$
    \State Compute $\hat{q} =\lfloor \alpha (n_{cal} + 1) \rfloor$th smallest value of $\{ \vect{V} \}$ 
    \State For the test point, use a root finding technique to estimate the shortest $1 - \alpha$ set for $\hat{f}(Y|\vect{x})$. Denote the density cutoff point as $\hat{c}(\vect{x})$
    \end{algorithmic}
    
    \textbf{Output:} $\hat{C}(\vect{x}) = \{y: \hat{f}(y|\vect{x}) >\hat{c}(\vect{x}) + \hat{q} \}$ 
\end{algorithm}

An immediate question is: what is the advantage of CHCDS over using the conditional density as the score, $V_i = \hat{f}(Y_i \mid \vect{X}_i)$? Using the conditional density on its own can be viewed as a special case of CHCDS with $\hat{c}(\vect{x}) = 0$, $\forall x$. When viewed this way, the advantage of CHCDS is clear. CHCDS starts with a density set that is already attempting to control conditional coverage. It then provides an adjustment to the density set cutoff  to guarantee finite sample coverage. On the other hand, the conditional density score gives the same density set cutoff for each set, regardless of the covariate value. A representative simulation is given in the Supplementary Materials~\cref{sec:density_vs_chcds} to demonstrate that CHCDS has better conditional coverage than using the conditional density as the score.

A rare problem can occur when the prediction level is high and the conditional density estimator overestimates the estimated density value cutoff for the unadjusted $1 - \alpha$ prediction set. That is, the estimated prediction set under covers before the conformal adjustment. When this occurs, the final density cutoff for the new prediction point is a negative value, or an infinite prediction set. The problem can be mitigated by using a multiplicative conformal adjustment in lieu of an additive conformal adjustment. We chose not to use the multiplicative conformal adjustment in our numerical studies because we did not encounter any instances of infinite prediction sets.
However, the two approaches generally yield similar prediction sets, see~\cref{sec:division_subtraction_comparison} in the Supplementary Material for a detailed comparison.

\section{Theoretical Properties}~\label{sec: theories}

In this section, we study two theoretical properties of the proposed method. The following result is a standard conformal result that establishes the coverage rate of the proposed method. All proofs are relegated to Section~\ref{sec: proofs} in the Supplementary Material. 
 
\begin{theorem}~\label{thm: conformal validity}
If $(Y_i, \vect{X}_i), i = 1, \ldots, n$ are exchangeable, then the prediction interval $\hat{C}(\vect{X}_{n+1})$ constructed by CHCDS satisfies
\[
pr\{Y_{n + 1} \in \hat{C}(\vect{X}_{n+1})\} \geq 1 - \alpha.
\]
\noindent If the $V_i$'s are almost surely distinct, then
\[
pr\{Y_{n+1} \in \hat{C}(\vect{X}_{n+1})\} \leq 1 - \alpha + (n_{cal} + 1)^{-1}.
\]
\end{theorem}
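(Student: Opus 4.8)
This is the classical split-conformal coverage guarantee specialized to the score $V$ of~\eqref{eq: defn of V}, so the plan is to reduce it to the exchangeable-quantile lemma. First I would condition on the training fold $\mathcal{Z}_{tr}$: given $\mathcal{Z}_{tr}$, the fitted conditional density $\hat{f}(\cdot\mid\cdot)$ and the cutoff map $\vect{x}\mapsto\hat{c}(\vect{x})$ are fixed measurable functions, so each $V_i=\hat{f}(Y_i\mid\vect{X}_i)-\hat{c}(\vect{X}_i)$ is a fixed function of $(\vect{X}_i,Y_i)$ alone. Since the points $(\vect{X}_i,Y_i)$ for $i\in\mathcal{I}_{cal}\cup\{n+1\}$ remain exchangeable after conditioning on the disjoint fold $\mathcal{Z}_{tr}$, the pooled scores $\{V_i:i\in\mathcal{I}_{cal}\}\cup\{V_{n+1}\}$ are exchangeable, where $V_{n+1}:=\hat{f}(Y_{n+1}\mid\vect{X}_{n+1})-\hat{c}(\vect{X}_{n+1})$. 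The coverage event then collapses to a single comparison: by the definition of $\hat{C}$ we have $Y_{n+1}\in\hat{C}(\vect{X}_{n+1})$ iff $\hat{f}(Y_{n+1}\mid\vect{X}_{n+1})>\hat{c}(\vect{X}_{n+1})+\hat{q}$, i.e.\ iff $V_{n+1}>\hat{q}$, where $\hat{q}$ equals the $k$-th smallest calibration score $V^{cal}_{(k)}$ with $k:=\lfloor\alpha(n_{cal}+1)\rfloor$.

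Next I would apply the standard conformal quantile inequality to the exchangeable pooled scores. The event $\{V_{n+1}>V^{cal}_{(k)}\}$ holds iff at least $k$ of the $n_{cal}$ calibration scores lie strictly below $V_{n+1}$, i.e.\ iff the rank of $V_{n+1}$ within the $n_{cal}+1$ pooled scores is at least $k+1$; by exchangeability this rank is uniform on $\{1,\dots,n_{cal}+1\}$ when the $V_i$ are almost surely distinct, and the lower bound persists with ties by the usual tie-handling. Hence, conditionally on $\mathcal{Z}_{tr}$,
\[
pr\{Y_{n+1}\in\hat{C}(\vect{X}_{n+1})\mid\mathcal{Z}_{tr}\}\ \ge\ 1-\frac{k}{n_{cal}+1}\ \ge\ 1-\alpha ,
\]
the last step because $k=\lfloor\alpha(n_{cal}+1)\rfloor\le\alpha(n_{cal}+1)$; averaging over $\mathcal{Z}_{tr}$ yields the first bound. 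When the $V_i$ are almost surely distinct the rank is exactly uniform, so the first inequality above is an equality, and combining it with $k>\alpha(n_{cal}+1)-1$ gives $pr\{Y_{n+1}\in\hat{C}(\vect{X}_{n+1})\}=1-k/(n_{cal}+1)<1-\alpha+(n_{cal}+1)^{-1}$, which is the second bound.

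I do not expect a genuine obstacle here, since this is the textbook argument, but the step needing the most care is the quantile step above: making the chain of equivalences ``$Y_{n+1}\in\hat{C}(\vect{X}_{n+1})\iff V_{n+1}>V^{cal}_{(k)}\iff(\text{pooled rank}\ge k+1)$'' fully precise, and treating ties among the $V_i$ correctly, since ties are exactly what the almost-sure-distinctness hypothesis removes (and that hypothesis holds automatically whenever $\hat{f}(\cdot\mid\vect{x})$ has no atoms). A second, entirely routine point is the degenerate case $\alpha(n_{cal}+1)<1$, i.e.\ $k=0$, in which $\hat{q}$ and $\hat{C}(\vect{X}_{n+1})$ are interpreted so that the prediction set is all of $\mathbb{R}$ and the lower bound holds trivially.
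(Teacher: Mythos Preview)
Your proposal is correct and follows essentially the same approach as the paper: both reduce the coverage event to $V_{n+1}>\hat{q}$ with $\hat{q}$ the $\lfloor\alpha(n_{cal}+1)\rfloor$-th smallest calibration score, and then invoke exchangeability of the pooled scores. The only cosmetic difference is that the paper packages the exchangeable-quantile step into two preliminary lemmas (an order-statistic bound and an ``inflation of quantiles'' lemma), whereas you argue directly via the uniform-rank formulation; your explicit conditioning on $\mathcal{Z}_{tr}$ is, if anything, slightly more careful than the paper's one-line appeal to exchangeability of the scores.
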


Below, we show that if the conditional density function can be well estimated, then the conformal adjustment is asymptotically negligible. Thus, a small conformal adjustment may be suggestive of a good conditional density fit  and asymptotic conditional coverage rate of $(1-\alpha)\times 100\%$ can be obtained. The latter may be established under some regularity conditions, but we shall not pursue this issue here. Thus, the proposed method preserves the asymptotic $1-\alpha$ conditional coverage rate when the underlying conditional density estimation is correctly specified, while guaranteeing $1-\alpha$ unconditional coverage, even with an incorrectly specified conditional density estimator. The following regularity conditions will be needed below:

\begin{assumption}~\label{assumption: C1}
The sample space of $(Y, \vect{X}^\intercal)^\intercal$ is a compact set of the form $ [a,b] \times \mathcal{X}$ where $a<b$ are two finite numbers, and $\mathcal{X}$ is a compact subset of the $d$-dimensional Euclidean space.     
\end{assumption}
\begin{assumption}~\label{assumption: C2}
   The conditional density function $f(y\mid\vect{x})$ is continuous in $(y, \vect{x})$ and positive, over the sample space.  
\end{assumption}
 \begin{assumption}~\label{assumption: C3}
The conditional density estimator $\hat{f}(\cdot\mid\vect{x})$ converges to $f(\cdot\mid\vect{x})$ uniformly over the sample space, at the rate of $b_{n_{train}}$, in probability,  where $b_{n_{train}}\to 0$ as the training sample size $n_{train}\to\infty$. That is, there exists a constant $K$ such that the sup-norm  
$\|f(\cdot\mid\vect{x}) - \hat{f}(\cdot\mid\vect{x})\|_\infty \le K\times b_{n_{train}}$ for all $\vect{x}\in \mathcal{X}$, except for an  event whose probability converges to 0, as $n_{train}\to\infty$. Similarly, the conditional cumulative distribution function 
$\hat{F}(\cdot\mid\vect{x})$ converges to $F(\cdot\mid\vect{x})$ uniformly at the rate of $b_{n_{train}}$, in probability.      
 \end{assumption}   
 \begin{assumption}~\label{assumption: C4}
   The population quantile function of $f(Y\mid\vect{X})-c(\vect{X})$ is continuous at $\alpha$.    
 \end{assumption}

We remark that assumptions~ \ref{assumption: C1} and ~\ref{assumption: C2}  are mild regularity conditions.  The proof of Theorem~\ref{thm: convergence rate}, given in the Supplementary Material, shows that for fixed $0<\alpha<1$, $c(\vect{x})$ is a continuous function of $\vect{x}$, hence assumption~\ref{assumption: C4} holds under very general conditions. 
Sufficient conditions for assumption~\ref{assumption: C3} have been studied in the literature of kernel density estimation, see Section~\ref{sec:validity of C3} in the Supplementary Material. 

\begin{theorem}\label{thm: convergence rate}
Suppose assumptions~\ref{assumption: C1}--\ref{assumption: C4} hold, and let $1>\alpha>0$ be fixed. Then $\hat{c}(\vect{x})$ converges to $c(\vect{x})$  at the rate of $b_{n_{train}}$ uniformly in $\vect{x}$, with probability approaching 1. Moreover, the conformal adjustment $\hat{q}$, defined below \eqref{eq: defn of V}, is of the order $O_p(b_{n_{train}}+n_{cal}^{-1/2})$, as both $n_{cal}\to\infty$ and $n_{train}\to\infty$. 
\end{theorem}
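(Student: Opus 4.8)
The plan is to separate the two claims: first obtain a uniform-in-$\vect x$ rate for $\hat c(\vect x)-c(\vect x)$, then use it to reduce $\hat q$ to the empirical $\alpha$-quantile of i.i.d.\ ``oracle'' scores built from the true conditional density. For $\vect x\in\mathcal X$ and $t\ge 0$, introduce the distribution function of the density values, $G_{\vect x}(t)=pr\{f(Y\mid\vect x)\le t\mid\vect X=\vect x\}=\int_a^b\one\{f(y\mid\vect x)\le t\}f(y\mid\vect x)\,dy$, and its analogue $\hat G_{\vect x}$ built from $\hat f$, so that $c(\vect x)=G_{\vect x}^{-1}(\alpha)$ and $\hat c(\vect x)=\hat G_{\vect x}^{-1}(\alpha)$ up to negligible root-finding error. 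Using assumptions~\ref{assumption: C1}--\ref{assumption: C2}, so $f$ is uniformly continuous and $0<f_{\min}\le f\le f_{\max}$ on the compact sample space, I would first establish the continuity lemma behind the theorem: $(\vect x,t)\mapsto G_{\vect x}(t)$ is jointly continuous and $t\mapsto G_{\vect x}(t)$ is strictly increasing in a neighbourhood of $c(\vect x)$, from which $c(\cdot)$ is continuous on the compact set $\mathcal X$, hence uniformly continuous, bounded, and bounded away from $0$ and from $f_{\max}$. This lemma also yields assumption~\ref{assumption: C4}, as noted in the remark.

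For the rate of $\hat c$, work on the event of assumption~\ref{assumption: C3} (whose probability tends to one) and write $\epsilon_n=Kb_{n_{train}}$ for the sup-norm bound $\|\hat f(\cdot\mid\vect x)-f(\cdot\mid\vect x)\|_\infty\le\epsilon_n$. Comparing superlevel sets (if $\hat f(y\mid\vect x)\le t$ then $f(y\mid\vect x)\le t+\epsilon_n$, and $|\hat f-f|\le\epsilon_n$, and symmetrically) yields the sandwich $G_{\vect x}(t-\epsilon_n)-\epsilon_n(b-a)\le\hat G_{\vect x}(t)\le G_{\vect x}(t+\epsilon_n)+\epsilon_n(b-a)$, uniformly in $\vect x$ and $t$. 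Evaluating at $t=\hat c(\vect x)$, where $\hat G_{\vect x}(\hat c(\vect x))=\alpha$, gives $G_{\vect x}(\hat c(\vect x)-\epsilon_n)\le\alpha+\epsilon_n(b-a)$ and $G_{\vect x}(\hat c(\vect x)+\epsilon_n)\ge\alpha-\epsilon_n(b-a)$; inverting $G_{\vect x}$ using the uniform local Lipschitz lower bound from the first step then gives $\sup_{\vect x\in\mathcal X}|\hat c(\vect x)-c(\vect x)|=O_p(b_{n_{train}})$, which is the first claim.

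For $\hat q$, decompose the calibration scores as $V_i=W_i+R_i$ with oracle score $W_i=f(Y_i\mid\vect X_i)-c(\vect X_i)$ and remainder $R_i=\{\hat f(Y_i\mid\vect X_i)-f(Y_i\mid\vect X_i)\}-\{\hat c(\vect X_i)-c(\vect X_i)\}$. Assumption~\ref{assumption: C3} and the preceding paragraph give $\max_{i\in\mathcal I_{cal}}|R_i|=O_p(b_{n_{train}})$, and since perturbing each of finitely many numbers by at most $\eta$ moves every order statistic by at most $\eta$, the $\lfloor\alpha(n_{cal}+1)\rfloor$-th smallest $V_i$ differs from the $\lfloor\alpha(n_{cal}+1)\rfloor$-th smallest $W_i$ by $O_p(b_{n_{train}})$. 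The $W_i$, $i\in\mathcal I_{cal}$, are i.i.d.\ copies of $W=f(Y\mid\vect X)-c(\vect X)$ and do not involve the training fold; by the defining property of $c$, $pr\{W\le 0\}=E[pr\{f(Y\mid\vect X)\le c(\vect X)\mid\vect X\}]=\alpha$, so assumption~\ref{assumption: C4} makes $0$ the unique population $\alpha$-quantile of $W$. Standard order-statistic asymptotics (the Dvoretzky--Kiefer--Wolfowitz inequality together with assumption~\ref{assumption: C4}) then give that the empirical $\alpha$-quantile of the $W_i$ is $O_p(n_{cal}^{-1/2})$, and combining with the previous display gives $\hat q=O_p(b_{n_{train}}+n_{cal}^{-1/2})$.

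I expect the continuity lemma and the accompanying uniform local Lipschitz lower bound on $G_{\vect x}$ at level $\alpha$ to be the main obstacle: upgrading consistency of $\hat c$ to a genuine rate requires that $f(\cdot\mid\vect x)$ have no near-flat stretch around its $\alpha$-cutoff, uniformly in $\vect x$, which is precisely where compactness of the sample space and positivity and continuity of $f$ (assumptions~\ref{assumption: C1}--\ref{assumption: C2}) enter, through the continuity of $c(\cdot)$. The remaining steps are routine manipulations with sup-norms and order statistics.
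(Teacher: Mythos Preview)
Your skeleton matches the paper's: establish a uniform-in-$\vect x$ rate for $\hat c(\vect x)-c(\vect x)$, write $V_i=W_i+R_i$ with oracle score $W_i=f(Y_i\mid\vect X_i)-c(\vect X_i)$ and remainder $|R_i|=O_p(b_{n_{train}})$, then use $n_{cal}^{-1/2}$ asymptotics for the empirical $\alpha$-quantile of the $W_i$. The paper phrases the last two steps via the Wasserstein distance between the empirical laws of the $V_i$ and the $\tilde V_i$ and then cites Lemma~21.2 of van der Vaart; your order-statistic/DKW route is equivalent.

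The one substantive difference is in how the rate for $\hat c$ is obtained. You work with $G_{\vect x}(t)=pr\{f(Y\mid\vect x)\le t\mid\vect X=\vect x\}$ and its plug-in $\hat G_{\vect x}$, derive a sandwich for $\hat G_{\vect x}$ from the sup-norm bound on $\hat f-f$, and then invert. The paper instead invokes the second half of assumption~\ref{assumption: C3} directly (uniform convergence of $\hat F$ to $F$) and uses the lower bound $f\ge K_1$ to get a linear margin $F(c(\vect x)+\eta\mid\vect x)-\alpha\ge K_1\eta$. Your formulation via $G_{\vect x}$ is the natural one for a density cutoff and is more self-contained; the paper's is shorter but leans on the CDF part of assumption~\ref{assumption: C3}.

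One caution on the obstacle you flag. A \emph{near-flat} stretch of $f(\cdot\mid\vect x)$ at height $c(\vect x)$ does not hurt the inversion: it makes $G_{\vect x}$ jump at $c(\vect x)$, which only helps. What can break the uniform Lipschitz \emph{lower} bound on $G_{\vect x}$ is the opposite phenomenon, namely $f(\cdot\mid\vect x)$ being very steep at every crossing of the level $c(\vect x)$, so that $\mathrm{Leb}\{c(\vect x)<f(\cdot\mid\vect x)\le c(\vect x)+s\}=o(s)$. Continuity and positivity of $f$ (assumptions~\ref{assumption: C1}--\ref{assumption: C2}) do not preclude this by themselves; you will want something like a Lipschitz bound on $y\mapsto f(y\mid\vect x)$, or simply assume the law of $W$ has a positive density at $0$, which also tightens the $n_{cal}^{-1/2}$ step. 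The paper's proof does not explicitly confront this point either, so your identification of it as the crux is apt even if the diagnosis (``near-flat'') is inverted.
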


An analogous theorem for parametric density estimation and the corresponding proof as well as a discussion on how these results compare to other conformal prediction methods can be found in the Supplementary Material Section~\ref{sec: proofs_thm3}.

For an intuitive understanding of~\cref{thm: convergence rate}, assume the true conditional density model is being used. Denote the $1 - \alpha$ highest density cutoff for $\vect{x}$ as $c(\vect{x})$. Then, $P\{{f}(Y \mid \vect{x}) < c(\vect{x})\} = \alpha$. Using the CHCDS scores, $V_i = f(Y_i \mid \vect{X}_i) - c(\vect{X}_i)$, we then know that, $P(V < 0) = \alpha$. So, with a good conditional density model and a reasonably large sample size, we expect the conformal adjustment to be small for a sufficient sample size. 

\section{Numerical Studies}\label{sec:chcds_simulation}
In this section, we demonstrate the empirical performance of CHCDS compared to that of HPD-split, DCP, CQR, and CHR from \cite{izbicki2021cdsplit, dcp, romanocqr, CHR}, in one scenario. A second simulation scenario can be found in the Supplementary Material Section~\ref{sec:further-chcds_simulation}. In the below scenario, the mixture scenario, one predictor was generated, $X \sim \text{Unif}(-1.5, 1.5)$.

     \noindent Mixture: $Y|X, p \sim p\mathcal{N}\{f(X) - g(X), \sigma^2(X)\} + (1 - p) \mathcal{N}\{f(X) + g(X), \sigma^2(X)\}$, where $f(x) = (x - 1)^2(x + 1)$, $g(x) = 2 \one (x \geq -0.5) (x + 0.5)^{1/2}$, $\sigma^2(x) = 0.25 + |x|$, and $p \sim \text{bin}(0.5)$.

We compared the coverage, average size of the prediction set, and conditional coverage absolute deviation, defined as the absolute difference between the coverage rate, $1 - \alpha = 0.90$, and the observed coverage at each value of $X$. Simulation standard errors are given in parentheses if they are larger than 0.001. For the FlexCode density estimator, used with the unadjusted highest density set and HPD-split, we used a Fourier basis and the regression functions were estimated with Nadaraya-Watson Regression. For CHCDS, we used both a conditional kernel density estimator computed only on the 75 nearest neighbors of $\vect{X}_i$ (KNN Kernel), and a Gaussian mixture distribution with four components for the joint density and two components for the marginal covariate density (Gaussian Mix), see Section~\ref{sec: conditional density estimation} in the Supplementary Material for further details. For DCP, CQR, and CHR, we used quantile forest models. All simulations had a simulation size of 10,000. Each scenario had 1,000 training samples and 500 calibration samples. Results can be found in~\cref{tab:CDE_HPD_Bimodal}. Graphical comparisons of conditional coverage can be found in Figure~\ref{fig:conditional_cov_bimodal}.
Plots showing examples of the prediction regions  can be found in the Supplementary Material (\cref{fig:regions_unadj_asymm} --~\cref{fig:regions_CHR_mixture}.) \newline

\begin{table}
\begin{center}
\caption{Comparison of the methods in the mixture scenario}
    \begin{tabular}{lccc}
         Approach &  Coverage & Set Size & Conditional Absolute Deviation   \\
        Unadjusted (FlexCode) & 925 (3) & 5878 (2) & 58 (1)\\
        HPD-split (FlexCode) & 900 (3) & 5466 (3) & 63 (1)\\
        CHCDS (KNN) & 906 (3) & 5319 (2) & 8 \\
        CHCDS (Gaussian Mix) & 903 (3) & 5156 (2) & 28 (1)\\
        DCP & 901 (3) & 5173 (1) & 5  \\
        CQR & 901 (3) & 5834 (2) & 21 (1) \\
        CHR & 899 (3) & 5741 (2) & 14 (1) \\
    \end{tabular}
    \label{tab:CDE_HPD_Bimodal}
\end{center}
    \caption {Monte Carlo error given in parentheses only if it is greater than 0.001. All values have been multiplied by $10^3$.}

\end{table}

 \begin{figure}
     \centering
     \begin{tabular}{cc}
  \includegraphics[scale = 0.28]{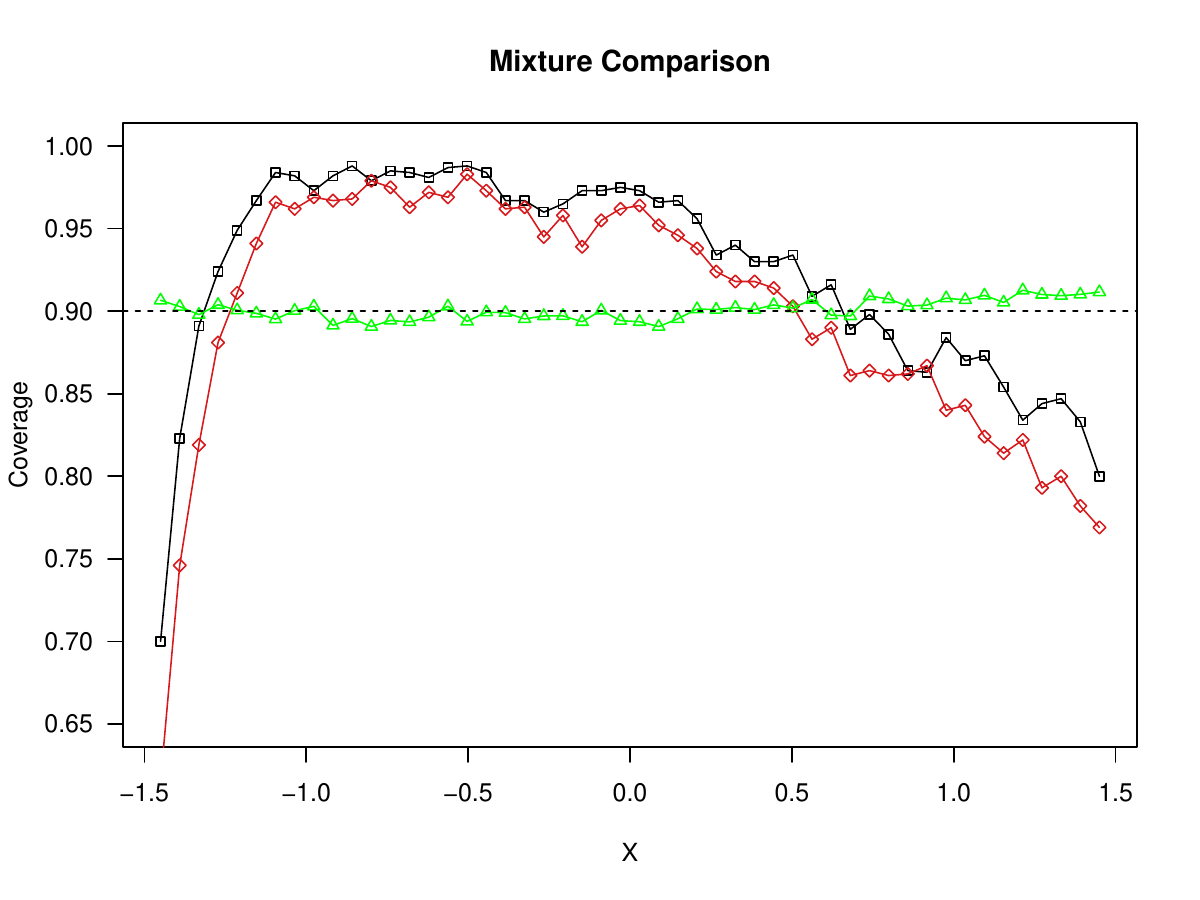}       &  \includegraphics[scale = 0.28]{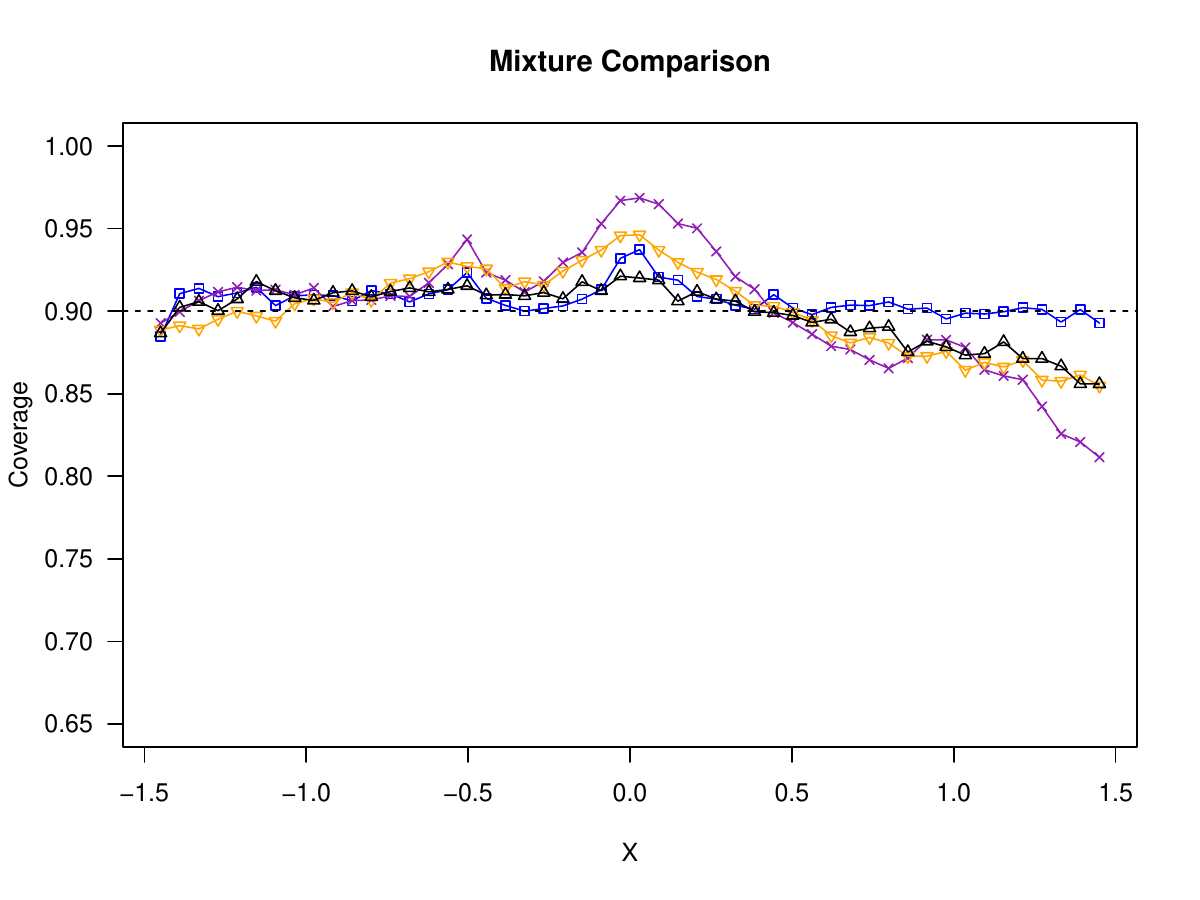}
    \end{tabular}

 \caption{A diagram showing the comparison of conditional coverage in the mixture scenario. Left: unadjusted FlexCode (black squares), HPD-split (red diamonds), DCP (green triangles). Right CHCDS (Gaussian Mix) (purple X), CHCDS (KNN) (blue circles), CQR (orange nablas), and CHR (black circles). The dashed line represents the desired 90\% coverage. The other lines represent the conditional coverage at a given value of $X$.}\label{fig:conditional_cov_bimodal}
 \end{figure}

The first note is that, clearly, the unadjusted approach has poor conditional coverage and is needlessly large. As expected, the results heavily depend on how good the initial model is. DCP and CHCDS with a KNN kernel density estimator performed very well with respect to conditional coverage. It is clear from the conditional coverage plots that all the prediction set approaches fall off in the tails except for CHCDS with the KNN kernel density estimator and optimal DCP. The KNN density estimator allows us to heavily weight nearby points, providing many of the benefits of forming a partition (conditional coverage near the nominal coverage rate), without creating variability in the observed coverage within each partition. See Section~\ref{sec:further-chcds_simulation} in the Supplementary Material for further discussion on the undercoverage seen in the tails. 

\section{Real Data Analysis}~\label{sec:real_dat_redshift}

For the Gaussian mixture density estimator, we used four mixture components for the joint density and three for the marginal covariate density on the HappyA dataset, which contains 74,950 galaxies \citep{Beck_Lin_happpya}. For all of the other models, we used the same setup as in~\cref{sec:chcds_simulation}. 
See Section~\ref{sec:further data analysis} in the Supplementary Material for an alternative implementation of the mixture model fitting, via neural network, which provided slightly better performance. 
Our goal was to predict galaxy redshift based on r-magnitude and the brightness of 4 non-red colors. It is known that the conditional density of redshift can be multi-modal \citep{Sheldon_2012_redshift, izbicki2021cdsplit, carrasco_2013_redshift}. We used 64,950 observations to train the model, 5,000 to compute non-conformity scores, and 5,000 to test out of sample prediction. This was repeated 10 times. We only compared unadjusted FlexCode, HPD-split, CHCDS-Gaussian Mix, DCP, CQR, and CHR because the KNN method was unable to handle the 7 dimensional data. For all of the FlexCode density estimators, we used 15,000 observations to train the model, 1,000 for calibration, and 1,000 for out of sample test predictions because using all of the data was not computationally feasible. Our goal was to create prediction sets with a coverage rate of 80\%, following \cite{izbicki2021cdsplit}. \cref{tab:redshift_conditional} reports the coverage, size, conditional coverage and size for bright and faint galaxies, and the time (s) to run one of the 10 iterations. We defined bright galaxies as those with an r-magnitude less than the median, 19.22, and faint as those with an r-magnitude greater than the median.

\begin{table}
\begin{center}
\caption{Conditional coverage and average size of the prediction regions for bright and faint galaxies.}

    \begin{tabular}{lccccccc}
         Type & Cov. & Size &B. Cov. & B. Size & F. Cov. & F. Size  & Time \\
         Unadjusted & 939 (1) & 332 (3) & 979 (2)  & 325 (5) & 900 (2) & 339 (3) & 66  \\
         HPD-split& 798 (3) & 189 (3)  & 865 (4)  & 183 (3) & 732 (6) & 187 (3)  & 120 \\
         CHCDS & 798 (2) & 107  & 799 (2) & 68 & 796 (3) & 146 (1) & 82 \\
         DCP & 803 (3) & 89 (1) & 816 (3) & 57  & 789 (4) & 121 (1) & 2608 \\
         CQR & 801 (2) & 96 (1) & 812 (2) & 60 & 791 (2) & 131 (1) & 1652 \\
         CHR& 798 (2) & 95  & 804 (2) & 59 (1) & 793 (2) & 131 (1) & 445 \\

    \end{tabular}
    \label{tab:redshift_conditional}
\end{center}
    \caption{
	Standard errors are given in parentheses if they are greater than 1. All values except time have been
multiplied by $10^3$. B. and F. represent bright and faint, respectively. Cov. stands for Coverage.}
\end{table}

We can see from the results that the FlexCode density estimators performed poorly. The approaches that used quantile forests (DCP, CQR, and CHR) had small set sizes, but failed to maintain approximate nominal coverage (80\%) with both bright and faint galaxies. The best conditional approach was CHCDS with a Gaussian mixture density estimator. It was able to keep a reasonably small overall set size, while having near nominal coverage for bright and faint galaxies; moreover, it was the fastest conformal approach.

\section{Discussion}~\label{sec: conclusion}

We introduced a new conformal score with  guaranteed coverage to existing highest density sets, CHCDS. The main benefits of CHCDS are that it is computationally efficient, only requires one model to be fit, captures multi-modal error terms, does not partition the data, and does not require numerical integration. Theoretically and numerically, we showed that when a good model is used, the conditional coverage is nearly perfect with prediction sets that are as small or smaller than other conformal methods. HPD-split can be used with the same conditional density estimators with similar results, but is more computationally intensive as it requires numerical integration to compute the non-conformity scores. 

A potential problem of CHCDS is  that the prediction sets are not necessarily convex. If there is substantive knowledge that the predictive distribution is unimodal, this can be solved by using a unimodal density estimator. However, when no such knowledge exists, the non-convex prediction sets can be indicative of a non-standard, for example, multifunctional, covariate-response relationship which can be further explored.

\section*{Acknowledgement}
This project was partially funded by National Institutes of Health Predoctoral Training Grant T32 HL 144461.

\newpage

\begin{center}
\textbf{\Large Supplementary material for Flexible Conformal Highest Predictive Conditional Density Sets}
\end{center}

\setcounter{equation}{0}
\setcounter{section}{0}
\setcounter{figure}{0}
\setcounter{table}{0}
\setcounter{page}{1}
\makeatletter
\renewcommand{\theequation}{S\arabic{equation}}
\renewcommand{\thefigure}{S\arabic{figure}}
\renewcommand{\thesection}{S\arabic{section}}
\renewcommand{\bibnumfmt}[1]{[S#1]}
\renewcommand{\citenumfont}[1]{S#1}

The Supplementary material comprises seven sections with the first section providing further background on conformal prediction and contrasting the conformity scores used by CHCDS, HPD-split, CQR, DCP, and CHR. The second section elaborates a multiplicative version of CHCDS and compares it with the additive version. The third section reviews conditional density estimation. The fourth section discusses some sufficient conditions for the validity of assumption~\ref{assumption: C3}. The fifth section contains all the proofs of the theorems stated in the main text as well as a comparison between the theoretical results of CHCDS and other conformal methods. The sixth section provides additional results for the simulation study. The last section elaborates an alternative implementation of the mixture model, via neural network.

\section{Conformal Prediction}~
\label{sec: conformal prediction}
Conformal prediction is a general method of creating prediction intervals that provide a non-asymptotic, distribution free coverage guarantee. Suppose we observe $n$ i.i.d. (more generally, exchangeable) copies of $\{(Y_1, \vect{X}_1), (Y_2, \vect{X}_2)\ldots, (Y_n, \vect{X}_n)\}$, with distribution $P$. Suppose that the first $n$ pairs are observed. Then, we want a set, $C(\vect{x}) = C_n((Y_1, \vect{X}_1),$ $(Y_2, \vect{X}_2),\ldots,$ $(Y_n, \vect{X}_n), \vect{x})$ such that for a new pair $(Y_{n+1}, \vect{X}_{n+1})$,
\begin{equation}
pr\{Y_{n+1} \in C(\vect{X}_{n+1})\} \geq 1-\alpha
\label{eq:p_dim_marginal_coverage}
\end{equation}
This coverage is guaranteed unconditionally \citep{conformal_book, dis_free_pred_COPS}. If we wanted to have finite sample, distribution-free, and conditional coverage for a continuous response,
\begin{equation}
pr\{Y_{n+1} \in C(\vect{X}_{n+1})\mid \vect{X}_{n+1}\} \geq 1-\alpha, \quad a.s.
\label{eq:conditional_coverage_conformal}
\end{equation}
our expected prediction set length would be infinite \citep{dis_free_pred_COPS}.

A method that attempts to approximate conditional coverage is locally valid conditional coverage. Let $\mathcal{A} = \{A_j: j\geq 1\}$ be a partition of the covariate space. A prediction set, $C(\vect{x})$, is locally valid with respect to $\mathcal{A}$ if 
\[
pr\{Y_{n+1} \in C(\vect{X}_{n+1})\mid \vect{X}_{n+1} \in A_j\} \geq 1 - \alpha, \text{  for all }  j.
\]
Local validity is achieved by computing conformal prediction sets using only the data within each partition. These sets tend to involve partitions of the data where prediction sets are formed based on density estimators within each partition \citep{izbicki2019flexible, izbicki2021cdsplit, Lei_Wasserman_Conformal_kernel}. When non-conformity score are almost surely distinct, the coverage probability when conditioning on both the training and calibration sets is a random quantity,
\[
\mathbb{P}(Y_{n+1} \in C(\vect{X}_{n+1}) \mid  (\vect{X}_i, Y_i), \> i = 1, \ldots ,n) \sim \text{Beta}(\kappa_{\alpha}, n_{cal} + 1 - \kappa_{\alpha}),
\] 
where $\kappa_{\alpha} = \lceil(1 - \alpha)(n_{cal} + 1)\rceil$ \citep{vovk_2012_conditional_validity, angelopoulos_gentle_introduction}. The idea is, if we run the conformal prediction algorithm multiple times, each time sampling a new finite observed dataset, then check the coverage on an infinite number of validation points, each coverage rate will be a draw  from the above Beta distribution. So, when using partitions  our prediction sets have more variability in coverage conditional on the observed data because we have a smaller sample size within each partition than we do overall \citep{Lei_Wasserman_Conformal_kernel, izbicki2021cdsplit, angelopoulos_gentle_introduction}. 

\subsection{Related Conformal Approaches}\label{sec:competing_methods}

Conformal quantile regression (CQR) attempts to control the conditional coverage by providing an additive conformal adjustment to an existing quantile regression prediction interval. Denote the unconformalized quantile regression interval trained on the training set as $(\hat{q}_{low}(\vect{x}), \hat{q}_{high}(\vect{x}))$. The non-conformity score is then, 
\[
V_i = \max\{\hat{q}_{low}(\vect{X}_i) - Y_i, Y_i - \hat{q}_{high}(\vect{X}_i)\}, \> i \in \mathcal{I}_{cal}.
\] 
The final conformal prediction interval is then,
\[
C(\vect{X}_{n+1}) = [\hat{q}_{low}(\vect{X}_{n+1}) - Q_{1-\alpha}(\vect{V}; \mathcal{Z}_{cal}), \hat{q}_{high}(\vect{X}_{n+1}) + Q_{1-\alpha}(\vect{V}; \mathcal{Z}_{cal})],
\]
where 
\[
Q_{1-\alpha}(\vect{V}; \mathcal{Z}_{cal}) := (1-\alpha)(1 + \frac{1}{|\mathcal{Z}_{cal}|})-\text{th empirical quantile of }\{V_i\}.
\]
and $|\mathcal{Z}_{cal}|$ is the size of the calibration set. While CQR attempts to control conditional coverage, it does not try to find the smallest intervals \citep{romanocqr}. Two methods that do attempt to find the smallest intervals are optimal distributional conformal prediction (DCP) and conformal histogram regression (CHR). 

Optimal DCP uses an estimate of the conditional CDF, which can be estimated using conditional quantile models, $\hat{F}(y, \vect{x}) = \hat{P}(Y \leq y \mid \vect{X} = \vect{x})$. 
The goal of optimal DCP is to find the initial quantiles that lead to the smallest intervals. To do this, define 
\[\hat{Q}(\tau, \vect{x}) = \inf \{y :\hat{F}(y, \vect{x}) \geq \tau \},\] and
\[
\hat{b}(\vect{x}, \alpha) = \argmin_{z \in [0, \alpha]} \hat{Q}(z + 1 - \alpha, \vect{x}) - \hat{Q}(z, \vect{x}),
\] 
which are estimated on the training set.  The non-conformity scores are then,
\[
V_i = |\hat{F}(Y_i, \vect{X}_i) - \hat{b}(\vect{X}_i, \alpha) - \frac{1}{2}(1 - \alpha)|,
\]
leading to a prediction interval of,
\[
\{y: |\hat{F}(y, \vect{X}_{n+1}) - \hat{b}(\vect{X}_{n+1}, \alpha) - \frac{1}{2}(1 - \alpha)| \leq Q_{1 - \alpha}(\vect{V}; \mathcal{Z}_{cal}) \}.
\]
\citep{dcp}. 

Conformal histogram regression is another method that attempts to find the shortest prediction intervals using quantile regression. First, a conditional quantile model is used to build a conditional histogram. See \cite{CHR} Section 2.1 for more on constructing conditional histograms. With the conditional histogram, $T$ nested unconformalized prediction intervals are formed to have coverage $\tau_t = t/T, \> t = 1, \ldots, T.$ The conformity score, $V_i$ is then the smallest $\tau_t$ that contains $Y_i$. The final conformal prediction interval is then the nested interval with coverage $\hat{\tau}$, where $\hat{\tau}$ is the $1-\alpha$ empirical quantile of $\{V\}$ \citep{CHR}.

A method that attempts to find the smallest sets, instead of intervals, using conditional density estimation is HPD-split. The HPD-split method is outlined in Algorithm~\ref{alg:HPD_split}. The non-conformity score used can be seen visually in~\cref{fig:hpd_score}. 

\begin{algorithm}
    \caption{HPD-Split}\label{alg:HPD_split}
    \textbf{Input:} level $\alpha$, data = $\mathcal{Z} = (Y_i, \vect{X}_i)_{i \in \mathcal{I}}$, test point $(\vect{x})$, and conditional density algorithm $\mathcal{B}$ \newline
    \textbf{Procedure:}
    \begin{algorithmic}[1]

    \State Split $\mathcal{Z}$ into a training fold $\mathcal{Z}_{tr} \overset{\Delta}{=} (Y_i, \vect{X}_i)_{i \in \mathcal{I}_{tr}}$ and a calibration fold $\mathcal{Z}_{cal} \overset{\Delta}{=} (Y_i, \vect{X}_i)_{i \in \mathcal{I}_{cal}}$
    \State Fit $\hat{f} = \mathcal{B}(\{ (\vect{X}_i, Y_i): i \in \mathcal{I}_{tr} \})$
    \State Let $\hat{H}$ be an estimate of the cdf of the split residuals, $\hat{f}(Y|\vect{X})$, obtained by numerical integration
    \State Let $U_{\lfloor \alpha \rfloor}$ be the $\lfloor \alpha (n_{cal} + 1) \rfloor$ smallest value of $\{ \hat{H}(\hat{f}(y_i|\vect{x}_i) |\vect{x}_i):i \in \mathcal{I}_{cal} \}$
    \State Build a finite grid over $\mathcal{Y}$ and, by interpolation, \textbf{return}  $\{ y:\hat{H}(\hat{f}(y|\vect{x}) |\vect{x}) \geq U_{\lfloor \alpha \rfloor}\}$

    \end{algorithmic}

\end{algorithm}

\begin{figure}[ht]
    \centering
\includegraphics[scale = 0.7]{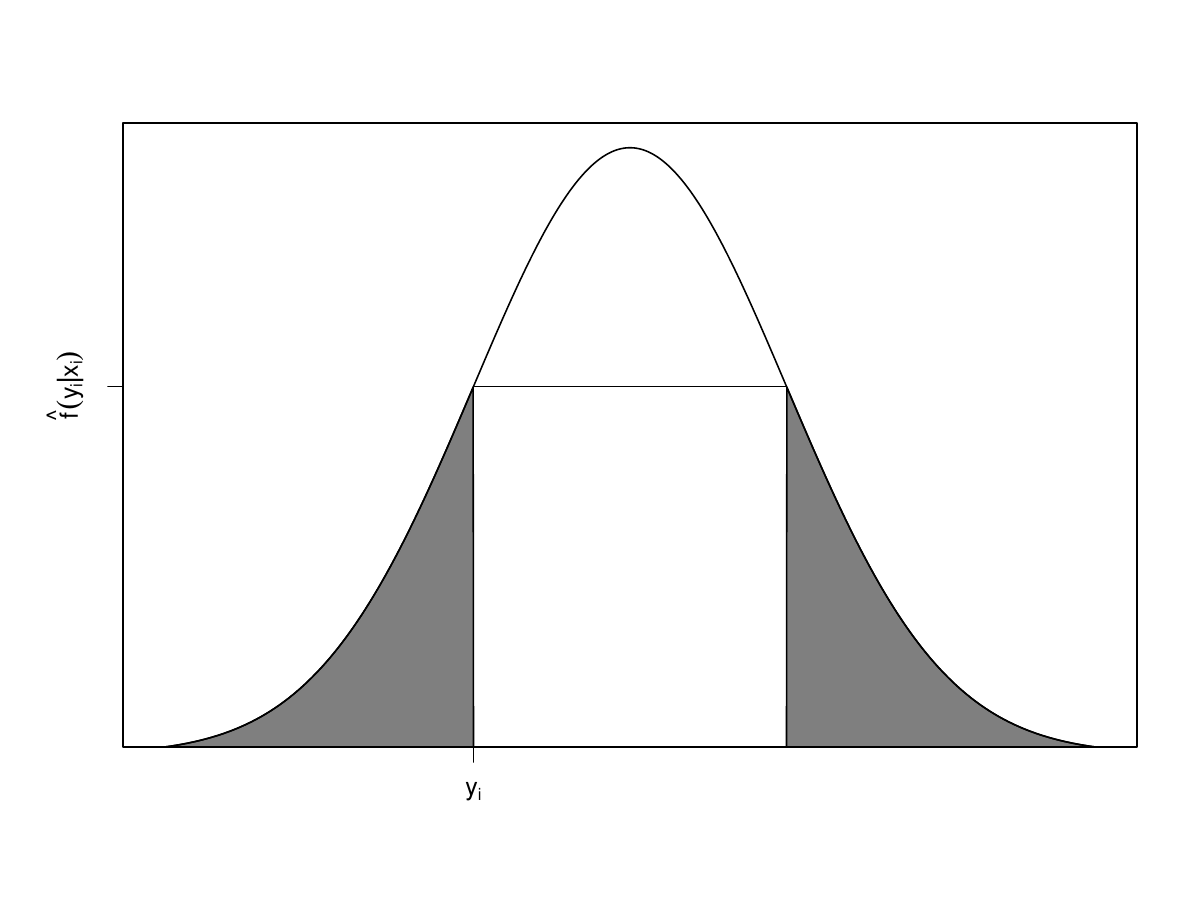}
\caption{The HPD-split score for a sample $(y_i, \vect{x}_i)$ is the shaded region of the plot.}\label{fig:hpd_score}
\end{figure}

A visual of how the CHCDS score works can be seen in~\cref{fig:our_score}. 

\begin{figure}[ht]
    \centering
\includegraphics[scale = 0.7]{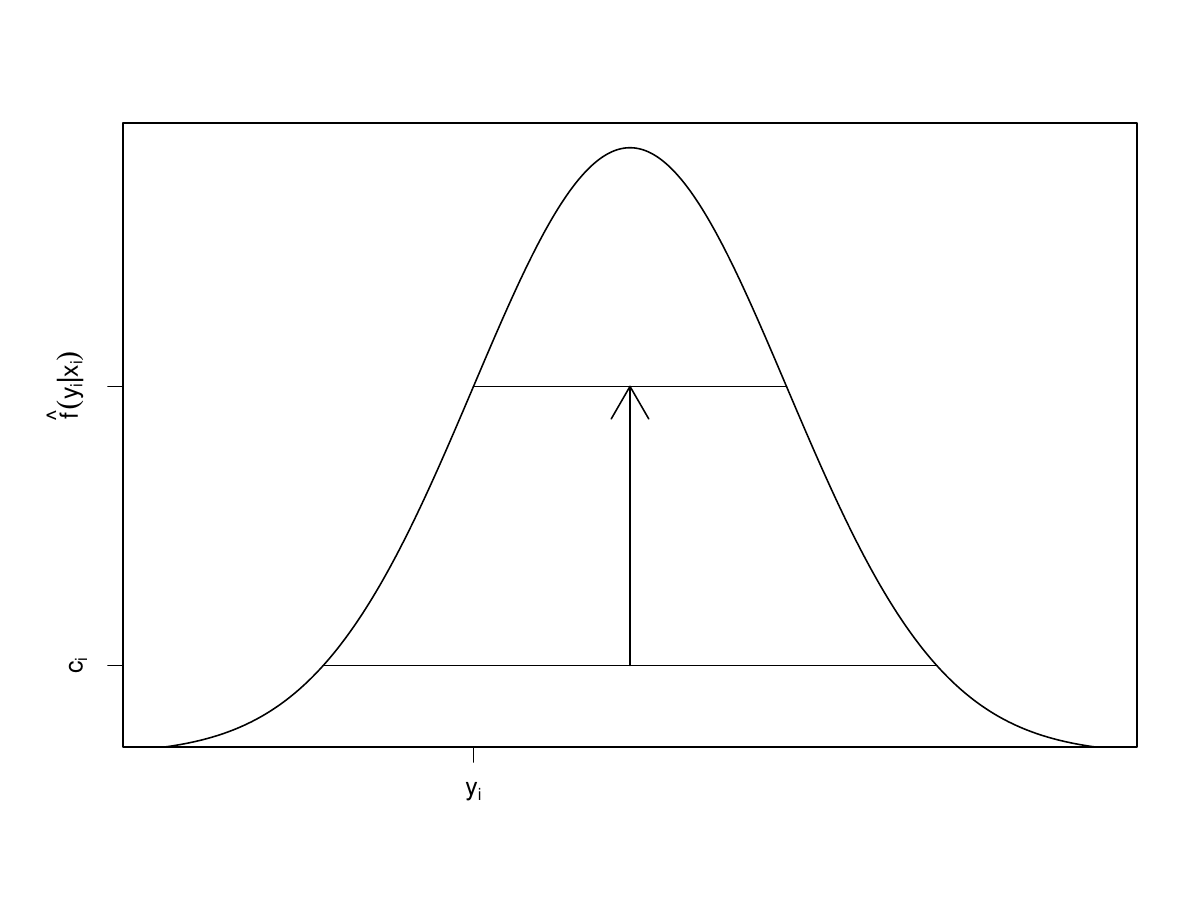}
\caption{The CHCDS score for a sample $(y_i, \vect{x}_i)$ is shown by the arrow.}\label{fig:our_score}
\end{figure}

\section{CHCDS Comparison} 
\subsection{Conditional Density Score vs CHCDS}
\label{sec:density_vs_chcds}
A small simulation is given below to demonstrate the advantage of CHCDS vs the negative density score. The data were generated as follows: \\$X \sim \text{Unif}(-1.5, 1.5)$, \> $Y \mid X \sim \mathcal{N}(5 + 2X, |X| + 0.05)$ \\
In the following simulation, the true conditional density was used, 500 data points were used to calibrate each score, the simulation size was 10,000, and the coverage rate was set to $1 - \alpha = 0.90$. The difference in conditional coverage can be seen in~\cref{fig:CHCDS_neg_dens_conditional}. Clearly, although the coverage and set sizes are similar (see \cref{tab:CHCDS_neg_dens}), CHCDS has much better conditional coverage. 

\begin{table}[ht]
\begin{center}
\caption{CHCDS vs Negative Density Comparison}
    \begin{tabular}{lcc}
         Approach &  Coverage & Set Size   \\
        CHCDS & 897 (3) & 2597 (1) \\
        Negative Density & 899 (3) & 2456 (1) \\
    \end{tabular}
    \label{tab:CHCDS_neg_dens}
\end{center}
    \caption{
	Monte Carlo error given in parentheses only if it is greater than 0.001. All values have been
multiplied by $10^3$.
}\end{table}

\begin{figure}[ht]
    \centering
\includegraphics[scale = 0.3]{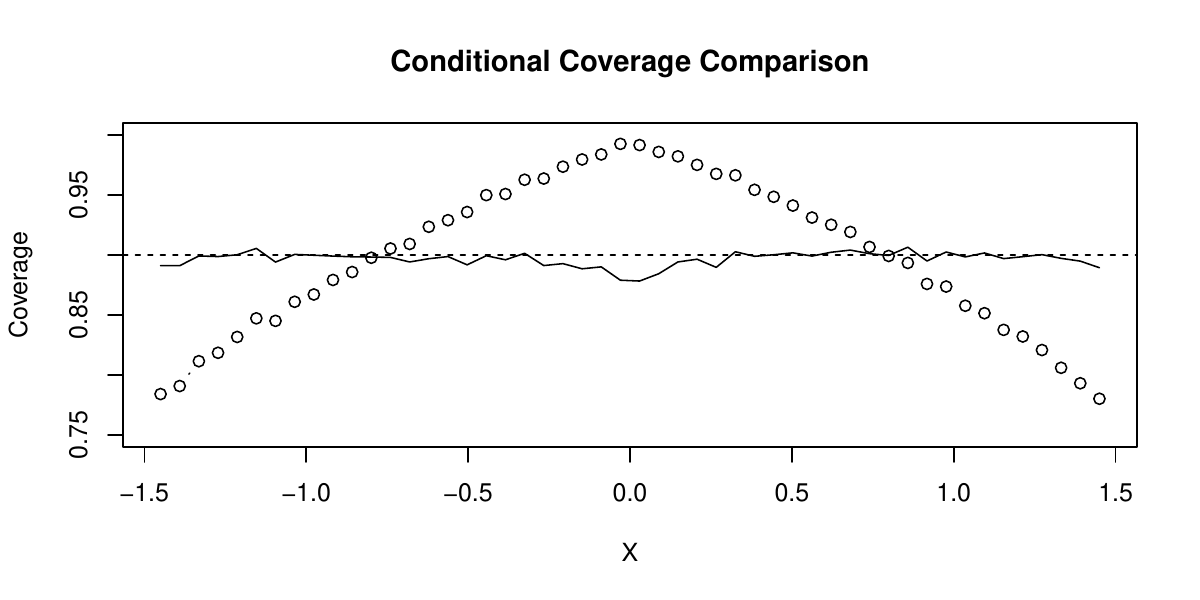}
\caption{A plot showing the conditional coverage of CHCDS (solid line) and the negative density score (open circles) at a given value of X. The dashed line represents the desired 90\% coverage.}\label{fig:CHCDS_neg_dens_conditional}
\end{figure}

\subsection{Multiplicative CHCDS}
\label{sec:division_subtraction_comparison}
A problem can occur when the prediction level is high and the conditional density estimator overestimates the estimated density value cutoff for the unadjusted $1 - \alpha$ prediction set (that is, the estimated density under covers before the conformal adjustment). When this occurs, the final density cutoff for the new prediction point is a negative value, or an infinite prediction set. To solve this issue, we propose a new method below that has the same properties and benefits as CHCDS without this potential drawback.

Let the new conformity scores be
\[
V_{i} = \hat{f}(Y_i\mid \vect{X}_i) / \hat{c}(\vect{X}_i) \text{, } \forall i \in \mathcal{I}_{cal}.
\]

Then, the final conformal prediction set will be
\[
\vect{C}(\vect{X}_{n+1}) = \{y: \hat{f}(y\mid \vect{X}_{n+1}) > \hat{c}(\vect{X}_{n+1}) \times \hat{q} \},
\]
where $\hat{q} =\lfloor \alpha (n_{cal} + 1) \rfloor$th smallest value of $\{ \vect{V} \}$. When the conditional density has a high variability, so the density cutoff values are small, we can add a small constant, $\gamma$ for numerical stability. 

\[
V_{i} = \hat{f}(Y_i\mid \vect{X}_i) / (\hat{c}(\vect{X}_i) + \gamma) \text{, } \forall i \in \mathcal{I}_{cal}.
\]

Then, the final conformal prediction set will be
\[
\hat{\vect{C}}(\vect{X}_{n+1}) = \{y: \hat{f}(y\mid \vect{X}_{n+1}) > (\hat{c}(\vect{X}_{n+1}) + \gamma) \times \hat{q} \},
\]
where $\hat{q} =\lfloor \alpha (n_{cal} + 1) \rfloor$th smallest value of $\{ \vect{V} \}$. 

We include a comparison of CHCDS-subtraction with this method for some of the simulation scenarios. The results in all cases are nearly identical. We also include a toy simulation example to demonstrate how CHCDS can cause infinite prediction sets. 

Below are two plots comparing additive conformal adjustment versus multiplicative conformal adjustment, referred to as 
CHCDS-subtraction and CHCDS-division,  
with different density estimators in the mixture scenario with the same simulation size and sample sizes as in~\cref{sec:chcds_simulation}. We set the coverage rate to be $1 - \alpha = 99\%$ to see how well both methods performed in an edge case. In~\cref{fig:div_sub_comp_kernel} and~\cref{fig:div_sub_comp_knn} the blue line is the conditional coverage for CHCDS-subtraction and the black line is the conditional coverage for CHCDS-division. In this scenario, we can see they are nearly identical. In both cases, the subtraction method did not output an infinite prediction set.

\begin{figure}[ht]
    \centering
\includegraphics[scale = 0.5]{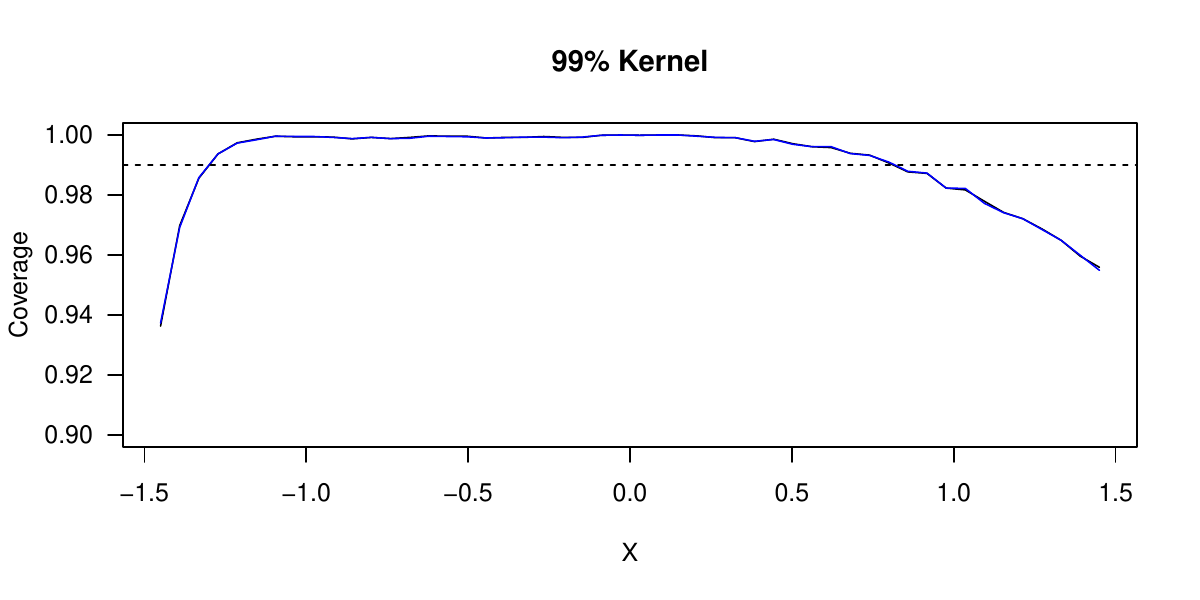}
\caption{Comparison of conditional coverage between CHCDS-subtraction Kernel (blue) and CHCDS-division Kernel (black) in the mixture scenario. The dashed line shows the desired 99\% coverage.}
\label{fig:div_sub_comp_kernel}
\end{figure}

\begin{figure}[ht]
    \centering
\includegraphics[scale = 0.5]{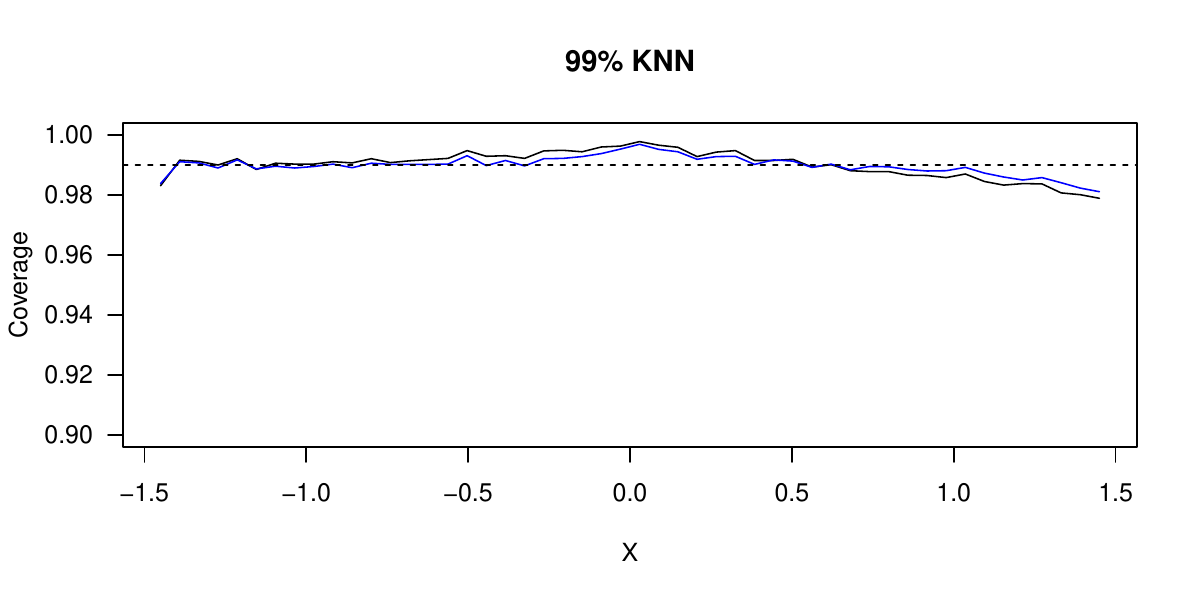}
\caption{Comparison of conditional coverage between CHCDS-subtraction (KNN) (blue) and CHCDS-division (KNN) (black) in the mixture scenario. The dashed line shows the desired 99\% coverage.}
\label{fig:div_sub_comp_knn}
\end{figure}

To show the potential failure of CHCDS-subtraction, we looked at a heteroskedastic example with data generated in the following way. $X \sim \text{Unif}(-5, 5)$, $Y\mid X \sim \mathcal{N}(0, | X|  + 0.01$. A scatterplot of the relationship between the response and covariate can be found in~\cref{fig:div_sub_comp_normal_scatterplot}. Data were split, 1000 for training and 500 for calibration. The simulation size was 10,000. We only used a conditional kernel density estimator. The target coverage rate was $1 - \alpha = 99\%$, but we found the $98.5\%$ unconformalized estimated highest density cut-off. This, along with the heteroskedastic nature of the data, was to ensure we would need to lower the cut-off with our conformal adjustment, while having high variability conditional densities (and, thus, density cut-off values near 0). The comparison of the method's conditional coverages can be found in~\cref{fig:div_sub_comp_normal_kernel}. CHCDS-subtraction gave infinite prediction intervals in 1.13\% of the out of sample test cases. As expected, these nearly all ocurred in the regions with large variability, which correspond to large values of $|X| $. CHCDS-division never gave an infinite prediction interval.  

\begin{figure}[ht]
    \centering
\includegraphics[scale = 0.5]{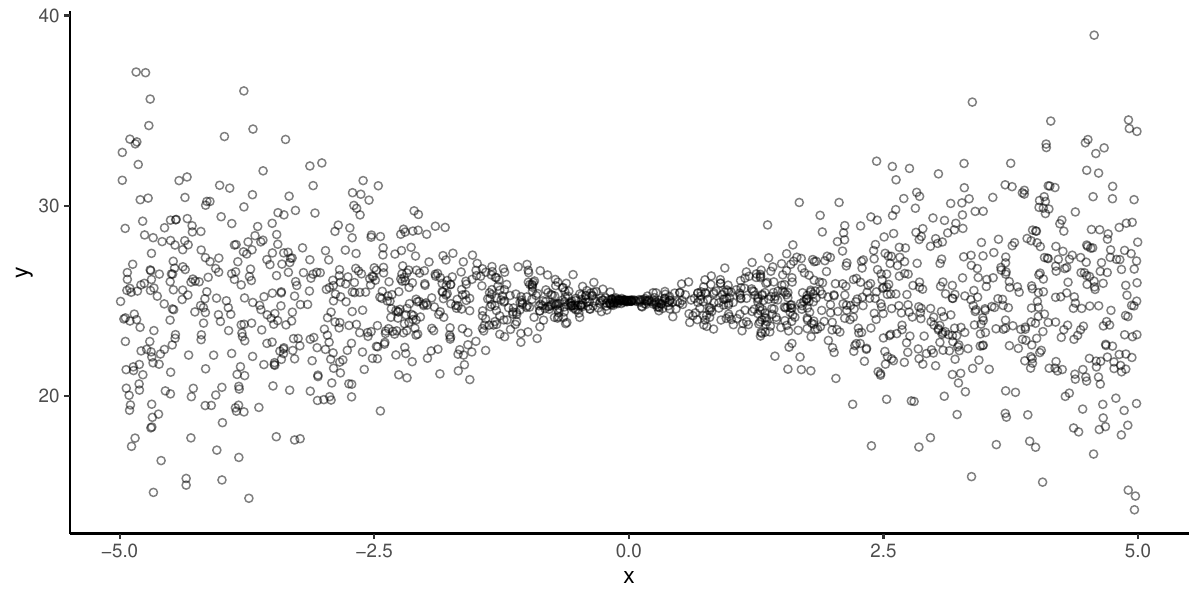}
\caption{A scatterplot of the heteroskedastic Normal distribution response covariate relationship. }
\label{fig:div_sub_comp_normal_scatterplot}
\end{figure}

\begin{figure}[ht]
    \centering
\includegraphics[scale = 0.5]{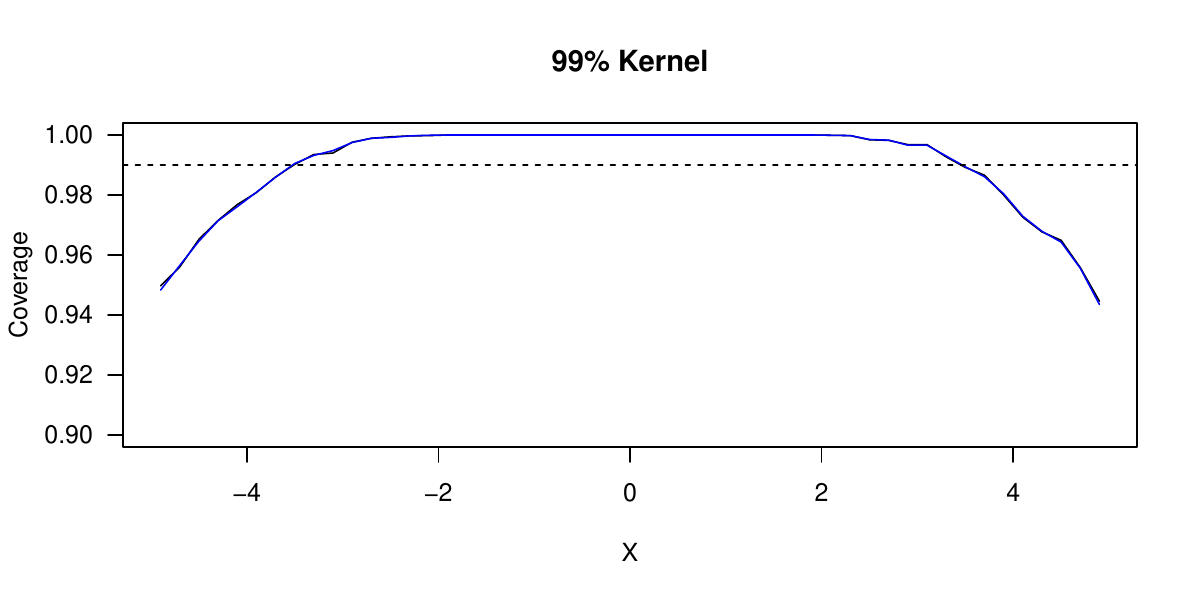}
\caption{Comparison of conditional coverage between CHCDS-subtraction Kernel (blue) and CHCDS-division Kernel (black) in the heteroskedastic Normal scenario. The dashed line shows the desired 99\% coverage.}
\label{fig:div_sub_comp_normal_kernel}
\end{figure}

\section{Conditional Density Estimation}~\label{sec: conditional density estimation}

In this section, we describe some existing methods for conditional density estimation. 
Denote the conditional density estimator as $\hat{f}(Y\mid \vect{X})$, where $\vect{X}$ are the $d$ conditioning variables. 
The simplest method for conditional density estimation comes from \cite{HiranoandImbens2004}, where they used it to estimate generalized propensity scores. They assumed that 
\[
Y\mid \vect{X} \sim \mathcal{N}\{\mu(\vect{X}), \sigma^2\}.
\]
They then estimated the function $\mu$ and $\sigma^2$ with $\hat{\mu}$ and $\hat{\sigma}^2$ using lest squares regression. Using these estimations, the conditional densities were computed as

\[
\hat{f}(y\mid \vect{X}) = (2\pi\hat{\sigma}^2)^{-1/2}
\text{exp}[-\{y - \hat{\mu}(\vect{X})\}^2/ (2\hat{\sigma}^2) ]. 
\]

A non-parametric approach to density estimation is kernel density estimation. It was originally a form of unconditional density estimation \citep{rosenblatt_density_1956, parzen_density_1962}. Let $K(y)$ be the kernel function chosen. Assume that $K(y)$ is symmetric, $\int_{-\infty}^{\infty} y^2K(y) dy < \infty$, and $\int_{-\infty}^{\infty} K(y)dy = 1$. Denote $h$ as the bandwidth, or smoothing parameter, of the chosen kernel density estimator such that $h \to 0$ and $nh \to \infty$ as $n \to \infty$. The bandwidth $h$ is commonly chosen to be a function of $n$. Then, the kernel density estimator is 
\[
\hat{f}(y) = (nh)^{-1} \sum\limits_{i = 1}^nK\{(Y_i - y)/h\}.
\]

\cite{hyndman_conditional_density_1996} extended this idea to kernel conditional density estimation. Let $f(y, \vect{x})$ denote the joint density of $(Y,\vect{X})$ and $f(\vect{x})$ denote the marginal density of $\vect{X}$. Then, we can write the conditional density of $Y\mid (\vect{X} = \vect{x})$ as $f(y\mid \vect{x}) = f(y,\vect{x}) / f(\vect{x})$. The kernel conditional density estimator is then
\begin{equation}
    \hat{f}(y\mid \vect{x}) = \hat{f}(y, \vect{x})/\hat{f}(\vect{x}), \label{eq: KDE}
\end{equation}
where
\[
\hat{f}(y, \vect{x}) = (n\times b\times \prod_{i=1}^n a_i)^{-1}\sum\limits_{j = 1}^n  K(b^{-1}\| y - Y_j\|_y)\times \prod_{i=1}^d K(a_i^{-1}\| {x}_i - {X}_{ij}\|_{\vect{x}}) 
\]
and
\[
\hat{f}(\vect{x}) = (n\times  \prod_{i=1}^n a_i)^{-1}\sum\limits_{j = 1}^n  \prod_{i=1}^d K(a_i^{-1}\| {x}_i - {X}_{ij}\|_{\vect{x}}) ,
\]
where $\vect{X}_i = (X_{i1}, \ldots, X_{id})^\intercal$. Here, $\|\cdot \| _{\vect{x}}$ and $\|  \cdot \|_y$ are distance metrics, for example the Euclidean distance. The kernel used satisfies the same properties as those used in unconditional kernel density estimation, and $a_i$ and $b$ are the bandwidth parameters \citep{rosenblatt_multivariate_kernel_1971, cacoullos_multivariate_kernel_1966}, which are often taken to be identically equal to $h$ if the the variables are of similar scale.

Kernel density estimation gives more influence to points closer to the point of interest, but that influence can be increased by computing the density estimates on the nearest $k$ neighbors, instead of on all $n$ data points. 
This can work well if we have a large number of data points and the true conditional density varies heavily depending on the covariates, for example a mixture distribution \citep{izbicki_2018_NNCDE}. Instead of a vector for the covariate's bandwidth, a bandwidth matrix along with a multivariate kernel can be used \citep{mack_rosenblatt_1979_knn_kernel}. 

Another combination of k-nearest neighbors and kernel density estimation replaces the bandwidths, $h$, $a$, and $b$, with the distance from the $k$th nearest neighbor to $\vect{x}$ or $y$ \citep{mack_rosenblatt_1979_knn_kernel, moore_yackel_consistency_knn_kernel_1977, loftsgaarden_quesenberry_knn_kernel_intro_1965}. This allows for unequal weighting of the observations. There are many other ways of choosing kernel density bandwidths and automatically selecting bandwidths that can be found in \cite{chiu_review_bandwidth_1996, heidenreich_review_bandwidth_2013}. 

FlexCode is another non-parametric density estimator. First, the response is scaled to be in the interval $[0, 1]$. Then, FlexCode uses a basis expansion to model the density. One option for the basis is the Fourier basis:
\begin{align*}
\phi_1(z) = 1; \qquad \phi_{2i + 1}(z) = 2^{1/2}\sin(2\pi i z) \qquad \phi_{2i}(z) = 2^{1/2}\cos(2\pi i z), \> i \in \mathbb{N}.
\end{align*}
For a fixed $\vect{x}$, as long as $\int_{-\infty}^{\infty }|f(y\mid \vect{x})|^2dy < \infty$, the conditional density can be written as 
\[
f(y\mid \vect{x}) = \sum\limits_{i \in \mathbb{N}} \beta_i(\vect{x}) \phi_i(y),
\]
where $\beta_i(\vect{x}) = \mathbb{E}(\phi_i(Y)\mid \vect{x})$ \citep{izbicki_lee_nonpar_cde_2016}. Now, define the FlexCode conditional density estimator as
\[
\hat{f}(y\mid \vect{x}) = \sum\limits_{i = 1}^I\hat{\beta}_i(\vect{x}) \phi_i(y),
\]
where $\hat{\beta}_i(\vect{x})$ is estimated using a conditional mean algorithm (for example, a random forest estimator) and $I$ is a tuning parameter that controls the bias-variance tradeoff \citep{flexcode_izbicki_2017}. 

Mixture models are a parametric approach to finding an estimate of the conditional density \citep{bishop_1994_mixture_density}. Let the joint density be represented as 
\[
f_{\vect{\theta}}(y, \vect{x}) = \sum_{i = 1}^K \pi_i f_{\vect{\psi}_i}(y, \vect{x}),
\]
where $K$ represents the number of mixture components, $\pi_j$ represents the weight of the $j$th mixture component, and $\vect{\psi}_j$ represents the parameters of the $j$th mixture component. For example, the mean vector and covariance matrix for a Normal distribution \citep{gaussian_mix_R_2023}. These parameters can be estimated using an iterative process, for example an EM algorithm or a deep neural network \citep{Viroli2019_gaussian_mixture}. 

We can use the same idea to estimate the marginal density of $\vect{X}$, 
\[
{g}_{{\vect{\zeta}}}(\vect{x}) = \sum_{i = 1}^J \pi_i g_{\vect{\eta_i}},
\] 
where $J$ represents the number of mixture components, $\pi_j$ represents the weight of the $j$th mixture component, and $\vect{\eta}_j$ represents the parameters of the $j$th mixture component. Combining the joint and marginal estimates, the conditional mixture density estimator is 
\[
\hat{f}(y\mid \vect{x}) = {f}_{\hat{\vect{\theta}}}(y, \vect{x})/{g}_{\hat{\vect{\zeta}}}(\vect{x}).
\]

A second approach to conditional density estimation with mixture distributions is to assume that the conditional distribution of $Y\mid \vect{X}$ is the convolution of $K$ distributions. 
\[
f_{\vect{\theta}}(y\mid \vect{x}) = \sum_{i = 1}^K \pi_i f_{\vect{\psi}_i}(y\mid \vect{x}),
\]
where $K$ represents the number of mixture components, $\pi_j$ represents the weight of the $j$th mixture component, and $\vect{\psi}_j$ represents the parameters of the $j$th mixture component. 
Though the true density may not be a mixture of, for example, Normal distributions, a Gaussian mixture distribution can provide a reasonable approximation.

Normalizing flows are another method that can be used for conditional density estimation. Start with the idea of a change of variable transformation. Let $y = g(u)$, where $g$ is a bijective and differentiable function and $\pi(\cdot)$ is the density of $U$. The density of $Y$ can then be represented as 
\[
f_Y(y) = \pi\{g^{-1}(y)\} \left|  \text{det} \frac{\partial g}{\partial y\hfill}^{-1}(y)\right| .
\]
The base density, which is typically chosen to be easy to evaluate for any input (a common choice is a standard Gaussian density) \citep{Papamakarios_normalizingflows_2017}. 

Extending this idea to a conditional distribution is fairly straightforward. Let $y = g(u, \vect{x})$, where $g$ is a bijective and differentiable function and $\pi(\cdot\mid \vect{x})$ is the conditional density of $U\mid \vect{x}$. The conditional density of $Y\mid \vect{x}$ can then be represented as
\[
f_{Y\mid \vect{X}}(y\mid \vect{x}) = \pi\{g^{-1}(y, \vect{x})\mid \vect{x}\} \left|  \text{det} \frac{\partial g}{\partial y\hfill}^{-1}(y, \vect{x})\right| .
\]
As in the unconditional case, the base density is typically chosen to be easy to evaluate \citep{winkler_conditionalflows_2023}. For example, we might have $U\mid X \sim \mathcal{N}(0, |X|^2)$ and $Y\mid X \sim \mathcal{N}(X, |X|^2)$. Then our function, $g(u, x) = u + x = y$, $g^{-1}(y, x) = y - x$, and $|\partial g^{-1}(y, x)/\partial y| = 1$ \citep{kingma_ARflow_2016}. 

When doing conditional density estimation, we typically do not know what the function $g$ is, or even if we can find a suitable transformation using only one function. Instead, let $y = g_{K}(g_{K-1} (\ldots g_1(u_0, x)))$. The log-likelihood for $f_{Y\mid \vect{x}}(y, \vect{x})$ can then be written as 
\begin{equation}\label{eq:NF_loss_function}
f_{Y\mid \vect{x}}(y, \vect{x}) = \log \{\pi(u_0\mid \vect{x})\} - \sum_{i = 1}^K \log\Big\{\left | \det \frac{\partial g_i}{\partial u_{i - 1}\hfill}(u_{i-1}, \vect{x}) \right|  \Big\},
\end{equation}
where $u_i = g_i(u_{i-1}, \vect{x})$ \citep{trippe2018normalizingflow, rezende_normalizingflow_2015}. Knowing this gives us a function to maximize, so the negative of \eqref{eq:NF_loss_function} becomes our objective loss function. The functions, $g$, typically have parameters that need to be estimated. In the earlier example of $g(u, x) = u + x$, the function might actually be $g(u, x) = u + \mu(x)$, where we do not know $\mu(\cdot)$. \citep{Papamakarios_normalizingflows_2017, winkler_conditionalflows_2023}. We can use neural networks (or other tools) along with \eqref{eq:NF_loss_function} to approximate the $g$ functions. Combining these estimates gives us an estimate for $\hat{f}_{Y\mid \vect{X}}(y\mid \vect{x})$. There are many classes of $g$'s that can be used in normalizing flows, see \cite{winkler_conditionalflows_2023, trippe2018normalizingflow, kobyzev_2021_nf_functions, Papamakarios_2021_nf_functions} for examples.  

\subsection{Unimodal Density Estimation \& Comments on Multi-modal Densities}\label{sec:unimodal density estimation}

In this subsection, we introduce a few methods of density estimation that ensure the estimated density is unimodal. The first is to assume that the conditional distribution of $Y \mid \vect{X}$ can be modeled by a unimodal parametric density, for example a Normal distribution. The parameters can be estimated in many ways, including a deep neural network or an iterative process, just as with mixture densities. 

For a non-parametric unimodal density, one approach is to change the weights used in kernel density estimation for each data point from $1/n$ to $p_i$, where $p_i$ is constrained to guarantee a single mode \citep{Hall_Heckman_2002_weighted_kernel_unimodal, hall2002unimodal}. A more common approach that can be used with most non-parametric density estimators to ensure they are unimodal is data sharpening \citep{wolters_unimodal_density_2012,unimodal_kernel_datasharp_Hall_Kang_2005, hall_choi_sharpening}. Data sharpening slightly alters the observed data to ensure a unimodal density. Under mild conditions, when the true density is unimodal, existing data sharpening methods only move points that form spurious modes in the tails closer towards the true mode \citep{unimodal_kernel_datasharp_Hall_Kang_2005}. Let $\vect{x}$ be the observed data and $\vect{y}$ be a possible sharpened data vector. The goal is the minimize $\delta(\vect{x}, \vect{y})$, where $\delta$ is an objective function that measures how different the observed and sharpened data are. For example if the data are univariate, \[
\delta(\vect{x}, \vect{y}) = \sum\limits_{i = 1}^n |x_i - y_i|.
\]
A more thorough review of sharpening the data and different objective functions can be found in \cite{wolters_unimodal_density_2012}. A discussion of data sharpening to ensure a unimodal conditional density can be found in section 4.3 of \cite{wolters_unimodal_density_2012}. 

Another approach to unimodal density estimation is Bernstein polynomials presented as a mixture of Beta kernels. Consider the following class of density estimators,
\[
\hat{f}(x; m, \vect{\omega}) = \sum\limits_{k = 1}^m \omega_k f_b(x; k, m - k + 1), \> k = 1, \ldots, m,
\]
where $\vect{\omega}$ is a vector of weights and $f_b(\cdot)$ is the Beta density function with shape parameters $k$ and $m - k + 1$. As long as the weights are non-negative, sum to 1, and 
$\omega_1 \leq \omega_2 \leq \cdots, \leq \omega_{k^*} \geq \omega_{k^* + 1} \geq \cdots \geq \omega_m$ the density estimator will be unimodal. This approach to unimodal density estimation can be extended to more general supports of $(a, b)$ by using the linear transformation, $u = (x - a) / (b - a)$ \citep{turnbull_ghosh_bernstein_unimodal_2014}.

When there is a lack of substantive knowledge that the true predictive distribution is unimodal, these methods should not be used. When a density estimate is multi-modal, it can be indicative that the true random error has a multi-modal distribution. This is often caused by a missing covariate or a multifunctional covariate-response relationship. For example, the heights of individuals based on age with gender as a missing covariate or traffice speed-flow data. When this is the case, descriptive models and summaries should be carefully chosen to ensure that the data and covariate-response relationship are sufficiently described \citep{hyndman_conditional_density_1996,einbeck2006moderegression, chen_etal_modal_2016, chen_mode_regression2018}.

\section{Sufficient conditions for assumption~\ref{assumption: C3}}~\label{sec:validity of C3}
The validity of assumption~\ref{assumption: C3} requires technical conditions on both the true density function and the kernel function.
Given assumptions \ref{assumption: C1} and \ref{assumption: C2}, and that the joint pdf $f(y, \vect{x})$ is $\zeta$-H\"older continuous, where $0<\zeta\le 1$, (i.e., there exists a constant $K'$ such that $|f(y, \vect{x}) - f(y', \vect{x}')|\le K' (|y-y'|+|\vect{x}-\vect{x}'|)^\zeta$ for all $(y, \vect{x})$ and $(y', \vect{x}')$). Consider  kernel density estimation defined by \eqref{eq: KDE} with a spherically symmetric kernel of compact support, using Euclidean distance and with identical bandwidth  equal to $h$. Let 
$b_{n}= h^\zeta+\{\log n/(n h^{d+1})\}^{1/2}$. Suppose $h > \left(\log n/n\right)^{1/(d+1)} $ and $h\to 0$ as $n\to \infty$. 
Under the preceding conditions, \citep[Theorem 2]{jiang2017_uniformkde_convergence} implies  the validity of assumption~\ref{assumption: C3}. This is because it follows from \citep[Theorem 2]{jiang2017_uniformkde_convergence} that $| \hat{f}(y,\vect{x})- {f}(y,\vect{x})| $ is  less than a fixed multiple of $b_n$ for all $\vect{x}$, except for an event of probability less than $1/n$, and the same holds for $| f(\vect{x})-\hat{f}(\vect{x})| $.   The convergence regarding $\hat{f}(\cdot\mid \vect{x})$, as stated in assumption~\ref{assumption: C3}, then follows from  the inequality
\begin{eqnarray*}
    &&| \hat{f}(y\mid \vect{x})-f(y\mid \vect{x})|  \\
    &\le & \{| \hat{f}(y,\vect{x})- {f}(y,\vect{x})| +f(y\mid \vect{x})| f(\vect{x})-\hat{f}(\vect{x})| \}/\hat{f}(\vect{x}).
\end{eqnarray*}
Note that for all sufficiently small $h$, the support of the kernel density estimator $\hat{f}(y, \vect{x})$ lies within some fixed compact set, hence the claim concerning the uniform convergence of $\hat{F}(\cdot\mid x)$ can be readily verified.

\section{Proofs}~\label{sec: proofs}

\noindent The next two lemmas come from \cite{romanocqr, conformal_shift, split_conformal_lei_2016, conformal_book}. 

\begin{lemma} \label{lemma:quantiles_exchangeability_known} (Quantiles and exchangeability). Suppose $Z_1, \ldots Z_n$ are exchangeable random variables.

\noindent For any $\alpha \in (0, 1)$,
\[
pr\{(Z_n \leq \hat{R}_n(\alpha)\} \geq \alpha,
\]

\noindent where $\hat{R}_n$ is the empirical quantile function, $\hat{R}_n(\alpha) = Z_{(\lceil\alpha n\rceil)}$.

\noindent Moreover, if the random variables $Z_1, \ldots, Z_n$ are almost surely distinct, then
\[
pr\{Z_n \leq \hat{R}_n(\alpha)\} \leq \alpha + \frac{1}{n}.
\]

\end{lemma}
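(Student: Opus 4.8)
The plan is to reduce both inequalities to the analysis of the single event $\{Z_n \le Z_{(k)}\}$, where I abbreviate $k = \lceil \alpha n\rceil$ (so that $\hat{R}_n(\alpha) = Z_{(k)}$), by combining a deterministic counting fact with the symmetry that exchangeability provides. Set $N = \sum_{j=1}^{n} \one\{Z_j \le Z_{(k)}\}$, the number of the $Z_j$ that do not exceed the $k$-th order statistic. Since order statistics are sorted, $Z_{(1)} \le \cdots \le Z_{(k)}$, so at least $k$ of the observations satisfy $Z_j \le Z_{(k)}$; hence $N \ge k$ holds surely. Moreover, if the $Z_j$ are almost surely distinct, then exactly $k$ of them lie at or below $Z_{(k)}$, so $N = k$ almost surely.

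Next I would take expectations. The $k$-th order statistic $Z_{(k)}$ is a permutation-invariant function of $(Z_1,\dots,Z_n)$, so exchangeability yields $(Z_j, Z_{(k)}) \stackrel{d}{=} (Z_n, Z_{(k)})$ for every $j$, and therefore $pr\{Z_j \le Z_{(k)}\} = pr\{Z_n \le Z_{(k)}\}$ for all $j$. Consequently $\mathbb{E}[N] = \sum_{j=1}^{n} pr\{Z_j \le Z_{(k)}\} = n\, pr\{Z_n \le Z_{(k)}\}$. Combining with $N \ge k$ gives $n\, pr\{Z_n \le Z_{(k)}\} \ge k = \lceil \alpha n\rceil \ge \alpha n$, which is exactly $pr\{Z_n \le \hat{R}_n(\alpha)\} \ge \alpha$. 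Under the distinctness hypothesis $\mathbb{E}[N] = k$, so $n\, pr\{Z_n \le Z_{(k)}\} = \lceil \alpha n\rceil \le \alpha n + 1$, and dividing by $n$ gives $pr\{Z_n \le \hat{R}_n(\alpha)\} \le \alpha + 1/n$.

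The step I expect to be the main obstacle — or at least the one deserving the most care — is justifying that $pr\{Z_j \le Z_{(k)}\}$ does not depend on $j$ even though $Z_{(k)}$ is itself built from $Z_j$. I would spell this out: for any permutation $\pi$ of $\{1,\dots,n\}$, $(Z_{\pi(1)},\dots,Z_{\pi(n)}) \stackrel{d}{=} (Z_1,\dots,Z_n)$, while the multiset $\{Z_1,\dots,Z_n\}$, and hence $Z_{(k)}$, is unchanged by relabeling, so $(Z_{\pi(j)}, Z_{(k)}) \stackrel{d}{=} (Z_j, Z_{(k)})$; choosing $\pi$ with $\pi(j) = n$ gives the desired equality of probabilities. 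The other point to handle carefully is the role of ties: the bound $N \ge k$ needs no hypothesis, but the identity $N = k$ (required for the upper bound) can fail when $Z_{(k)} = Z_{(k+1)}$, which is precisely why almost-sure distinctness is assumed there. Everything else is elementary arithmetic with the ceiling function, using $\alpha n \le \lceil \alpha n\rceil < \alpha n + 1$.
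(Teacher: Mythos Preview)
Your proof is correct. The paper does not actually supply a proof of this lemma---it is cited as a known result from the conformal literature---but your argument via $N=\sum_j \one\{Z_j\le Z_{(k)}\}$ and exchangeability is precisely the technique the paper employs when it proves the analogous Lemma~\ref{lemma:quantiles_exchangeability_ours}, so the approaches coincide.
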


\begin{lemma} \label{lemma:inflation_quantiles_known} (Inflation of Quantiles). Suppose $Z_1, \ldots Z_n$ are exchangeable random variables.

\noindent For any $\alpha \in (0, 1)$,
\[
pr[Z_{n+1} \leq \hat{R}_n\{(1 + n^{-1})\alpha\}]\geq \alpha.
\]
\noindent Moreover, if the random variables $Z_1, \ldots, Z_n$ are almost surely distinct, then
\[
pr\{Z_{n+1} \leq\hat{R}_n\{(1 + n^{-1})\alpha)\} \leq \alpha + (n + 1)^{-1}.
\]

\end{lemma}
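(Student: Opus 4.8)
The plan is to derive this from Lemma~\ref{lemma:quantiles_exchangeability_known} by passing to the augmented sample $Z_1,\ldots,Z_{n+1}$. First I would rewrite the right-hand quantile purely as an order statistic of the first $n$ observations: since $(1+n^{-1})\alpha\cdot n=\alpha(n+1)$, we have $\hat R_n\{(1+n^{-1})\alpha\}=Z_{(m)}^{\,n}$ with $m:=\lceil\alpha(n+1)\rceil$, where $Z_{(m)}^{\,n}$ denotes the $m$-th smallest of $Z_1,\ldots,Z_n$ (and $Z_{(m)}^{\,n}=+\infty$ by the usual convention when $m>n$). Applying Lemma~\ref{lemma:quantiles_exchangeability_known} to the $n+1$ exchangeable variables $Z_1,\ldots,Z_{n+1}$ at level $\alpha$ (so with $n$ there replaced by $n+1$) gives $pr\{Z_{n+1}\le Z_{(m)}^{\,n+1}\}\ge\alpha$, and $\le\alpha+(n+1)^{-1}$ when the $Z_i$ are almost surely distinct, where $Z_{(m)}^{\,n+1}$ is the $m$-th smallest of all $n+1$ variables (the index appearing there is $\lceil\alpha(n+1)\rceil=m$).

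The only non-mechanical step is to compare $Z_{(m)}^{\,n+1}$ with $Z_{(m)}^{\,n}$. Adjoining one point to a finite set can only decrease its $m$-th order statistic (if the adjoined point falls among the $m$ smallest of the union, the new $m$-th order statistic equals the old $(m-1)$-th; otherwise it is unchanged), so $Z_{(m)}^{\,n+1}\le Z_{(m)}^{\,n}$ whenever $m\le n$; hence $\{Z_{n+1}\le Z_{(m)}^{\,n+1}\}\subseteq\{Z_{n+1}\le Z_{(m)}^{\,n}\}=\{Z_{n+1}\le\hat R_n\{(1+n^{-1})\alpha\}\}$, which already gives the lower bound $pr[Z_{n+1}\le\hat R_n\{(1+n^{-1})\alpha\}]\ge\alpha$ with no distinctness needed. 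When the $Z_i$ are almost surely distinct, $Z_{n+1}\ne Z_{(m)}^{\,n}$ a.s., so $\{Z_{n+1}\le Z_{(m)}^{\,n}\}=\{Z_{n+1}<Z_{(m)}^{\,n}\}$, and I would observe that this event coincides with $\{\mathrm{rank}(Z_{n+1})\le m\}$, the rank being among all $n+1$ variables: $Z_{n+1}<Z_{(m)}^{\,n}$ iff at least $n-m+1$ of the first $n$ variables exceed $Z_{n+1}$, iff at most $m-1$ of them are strictly below $Z_{n+1}$. The same identity applied to $Z_{(m)}^{\,n+1}$ shows the two events coincide a.s., so the upper bound $\alpha+(n+1)^{-1}$ transfers verbatim; consistently, the direct computation gives $pr\{Z_{n+1}\le Z_{(m)}^{\,n}\}=m/(n+1)\in[\alpha,\alpha+(n+1)^{-1})$ by uniformity of the rank under exchangeability and $m=\lceil\alpha(n+1)\rceil$.

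Finally I would dispose of the boundary case $m=n+1$, which occurs exactly when $\alpha>n/(n+1)$: there $\hat R_n\{(1+n^{-1})\alpha\}=+\infty$, so $pr[Z_{n+1}\le\hat R_n\{(1+n^{-1})\alpha\}]=1\ge\alpha$, and $1\le\alpha+(n+1)^{-1}$ since $\alpha>1-(n+1)^{-1}$. I do not anticipate a genuine obstacle here: the argument is a short reduction to Lemma~\ref{lemma:quantiles_exchangeability_known}, and the only points requiring care are that the set inclusion in the second step becomes an \emph{equality} precisely under almost sure distinctness --- which is exactly why the two-sided bound, but not the one-sided one, needs that hypothesis --- and pinning down the $+\infty$ convention at the boundary.
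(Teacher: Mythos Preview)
Your proof is correct. The paper does not supply its own proof of this lemma; it simply states the result with a citation to earlier conformal literature. The closest in-paper analog is the proof of Lemma~\ref{lemma:inflation_quantiles_ours} (the companion statement with $>$ in place of $\le$), which proceeds by showing directly that $Z_{n+1}>Z_{(k,n)}\iff Z_{n+1}>Z_{(k,n+1)}$ holds for \emph{all} realizations and then invoking Lemma~\ref{lemma:quantiles_exchangeability_ours}.

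Your route is the same in spirit --- reduce to Lemma~\ref{lemma:quantiles_exchangeability_known} on the augmented sample --- but you split the comparison into an inclusion $\{Z_{n+1}\le Z_{(m)}^{\,n+1}\}\subseteq\{Z_{n+1}\le Z_{(m)}^{\,n}\}$ for the lower bound and then argue equality only under distinctness for the upper bound. This works, but it is slightly more than needed: the two events in fact coincide \emph{without} any distinctness assumption. If $Z_{n+1}\le Z_{(m)}^{\,n}$, then any $Z_i$ with $i\le n$ that is strictly below $Z_{n+1}$ is also strictly below $Z_{(m)}^{\,n}$, and there are at most $m-1$ such indices; hence at most $m-1$ of $Z_1,\ldots,Z_{n+1}$ lie strictly below $Z_{n+1}$, giving $Z_{(m)}^{\,n+1}\ge Z_{n+1}$. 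This mirrors the paper's argument for Lemma~\ref{lemma:inflation_quantiles_ours} and lets both bounds follow from a single application of Lemma~\ref{lemma:quantiles_exchangeability_known}; the distinctness hypothesis is then needed only because Lemma~\ref{lemma:quantiles_exchangeability_known} itself requires it for its upper bound. Your treatment of the boundary case $m=n+1$ and the direct rank computation are fine.
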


\noindent The following two lemmas are similar to those from \cite{romanocqr, conformal_shift, split_conformal_lei_2016, conformal_book}. The following two lemmas are standard in showing the validity of the unconditional coverage guaranteed when using conformal prediction.

\begin{lemma} \label{lemma:quantiles_exchangeability_ours} Suppose $Z_1, \ldots Z_n$ are exchangeable random variables.

\noindent For any $\alpha \in (0, 1)$,
\[
pr\{Z_n > \hat{Q}_n(\alpha)\} \geq 1 - \alpha,
\]

\noindent where $\hat{Q}_n$ is the empirical quantile function, $\hat{Q}_n(\alpha) = Z_{(\lfloor\alpha n\rfloor)}$.
\noindent Moreover, if the random variables $Z_1, \ldots, Z_n$ are almost surely distinct, then
\[
pr(Z_n > \hat{Q}_n(\alpha)\} \leq 1 - \alpha + n^{-1}.
\]

\end{lemma}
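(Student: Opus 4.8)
\emph{Proof proposal.} The plan is to obtain Lemma~\ref{lemma:quantiles_exchangeability_ours} as the upper‑tail mirror image of Lemma~\ref{lemma:quantiles_exchangeability_known}. Set $W_i := -Z_i$; these are again exchangeable, and their order statistics relate to those of the $Z_i$ by $W_{(k)} = -Z_{(n+1-k)}$ for each $k = 1,\dots,n$. Applying Lemma~\ref{lemma:quantiles_exchangeability_known} to $W_1,\dots,W_n$ at level $1-\alpha$ gives $pr\{W_n \le W_{(\lceil (1-\alpha)n\rceil)}\} \ge 1-\alpha$, together with the companion bound $\le 1-\alpha + n^{-1}$ when the $W_i$ (equivalently the $Z_i$) are almost surely distinct.

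The next step is to unwind the indices. Since $\lceil(1-\alpha)n\rceil = n-\lfloor\alpha n\rfloor$, we have $W_{(\lceil(1-\alpha)n\rceil)} = -Z_{(n+1-\lceil(1-\alpha)n\rceil)} = -Z_{(\lfloor\alpha n\rfloor+1)}$, so the event above is exactly $\{Z_n \ge Z_{(\lfloor\alpha n\rfloor+1)}\}$. It then remains to match this with the event $\{Z_n > \hat Q_n(\alpha)\} = \{Z_n > Z_{(\lfloor\alpha n\rfloor)}\}$ in the statement: when the $Z_i$ are almost surely distinct the two events coincide, because $Z_n$ is a.s. equal to exactly one order statistic, so $Z_n > Z_{(\lfloor\alpha n\rfloor)}$ iff $Z_n \in \{Z_{(\lfloor\alpha n\rfloor+1)},\dots,Z_{(n)}\}$ iff $Z_n \ge Z_{(\lfloor\alpha n\rfloor+1)}$. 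This delivers both displayed inequalities simultaneously. An equivalent self‑contained route, closer in spirit to the proof of Lemma~\ref{lemma:quantiles_exchangeability_known}, is to argue directly from the rank: with $\rho := \#\{i \le n : Z_i \le Z_n\}$, exchangeability makes $\rho$ uniform on $\{1,\dots,n\}$ (in the a.s.-distinct case) and $\{Z_n > Z_{(k)}\} = \{\rho \ge k+1\}$, so $pr\{Z_n > Z_{(\lfloor\alpha n\rfloor)}\} = (n-\lfloor\alpha n\rfloor)/n$; the inequalities $1-\alpha \le (n-\lfloor\alpha n\rfloor)/n \le 1-\alpha + n^{-1}$ follow from $\alpha n - 1 < \lfloor\alpha n\rfloor \le \alpha n$. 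I would use whichever presentation meshes better with the surrounding lemmas.

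The only delicate point is the bookkeeping in the middle step: converting the index $\lfloor\alpha n\rfloor$ into a ceiling of $(1-\alpha)n$, and keeping strict versus weak inequalities straight when passing between $\{Z_n > Z_{(\lfloor\alpha n\rfloor)}\}$ and $\{Z_n \ge Z_{(\lfloor\alpha n\rfloor+1)}\}$. Everything is transparent once the $Z_i$ are distinct, which is precisely the regime in which the two‑sided conclusion of Theorem~\ref{thm: conformal validity} is claimed; the lower bound by itself needs nothing beyond exchangeability and the sign flip. No new probabilistic ingredient is required — the content is entirely the floor/ceiling algebra and the identification of the relevant events.
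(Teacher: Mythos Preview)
Your approach is correct but genuinely different from the paper's. The paper treats the two inequalities separately: for the lower bound it uses the exchangeability identity $pr\{Z_n > \hat Q_n(\alpha)\} = E\bigl[1 - \hat F_n\{\hat Q_n(\alpha)\}\bigr]$ together with the pointwise bound $\hat F_n\bigl(Z_{(\lfloor\alpha n\rfloor)}\bigr) \le \alpha$; for the upper bound it invokes Lemma~\ref{lemma:quantiles_exchangeability_known} at a shifted index and uses $Z_{(\lceil\alpha n - 1\rceil)} \le Z_{(\lfloor\alpha n\rfloor)}$. Your sign-flip reduction $W_i = -Z_i$ handles both bounds at once by a single appeal to Lemma~\ref{lemma:quantiles_exchangeability_known} at level $1-\alpha$, which is cleaner and makes the mirror symmetry explicit; your rank argument is an equally tidy self-contained alternative. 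Either presentation would serve.

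One caveat: your remark that ``the lower bound by itself needs nothing beyond exchangeability and the sign flip'' is not accurate. The sign flip delivers $pr\{Z_n \ge Z_{(\lfloor\alpha n\rfloor+1)}\} \ge 1-\alpha$, and passing to the strict event $\{Z_n > Z_{(\lfloor\alpha n\rfloor)}\}$ genuinely uses distinctness --- under ties one can have $Z_{(\lfloor\alpha n\rfloor)} = Z_{(\lfloor\alpha n\rfloor+1)} = Z_n$, so the weak event holds and the strict one fails. In fact the lower bound with a strict inequality is false without the distinctness hypothesis (e.g.\ $n=2$, $Z_1,Z_2$ i.i.d.\ Bernoulli$(1/2)$, $\alpha=1/2$ gives $pr\{Z_2 > Z_{(1)}\} = 1/4 < 1/2$). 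The paper's own argument has the same gap at the step $\hat F_n\bigl(Z_{(\lfloor\alpha n\rfloor)}\bigr) \le \alpha$, so this is not a deficiency peculiar to your route; just drop the claim that the lower bound is assumption-free.
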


\begin{proof}
For the lower-bound, by exchangeability of $Z_1, \ldots, Z_n$, $pr\{Z_i > \hat{Q}_n(\alpha)\}= pr\{Z_n > \hat{Q}_n(\alpha)\}$, for all $ i$. Therefore,
\[
E[1 - \hat{F}_n\{\hat{Q}_n(\alpha)\}] = \sum\limits_{i = 1}^n n^{-1}pr\{Z_i > \hat{Q}_n(\alpha)\} = pr\{Z_n > \hat{Q}_n(\alpha)\},
\]
\noindent where $\hat{F}_n(z) := n^{-1}\sum_{i=1}^n \one(Z_i \leq z)$. 

\noindent By the definition of the empirical cdf, $1 - \hat{F}_n\{\hat{Q}_n(\alpha)\} \geq 1 - \alpha$.

\noindent Taking the expectation, we get, 
\[
pr\{Z_n > \hat{Q}_n(\alpha)\} \geq 1 - \alpha,
\]
as desired.

\noindent For the upper-bound, apply~\cref{lemma:quantiles_exchangeability_known} with $\hat{R}_n(\alpha) = Z_{(\lceil\alpha n - 1 \rceil)}$ instead of $Z_{(\lceil\alpha n \rceil)}$. 
Then we have that 
\[
\alpha - n^{-1} \leq pr\{Z_n \leq Z_{(\lceil\alpha n - 1 \rceil)}\}.
\]
Notice that $Z_{(\lceil\alpha n - 1 \rceil)} \leq Z_{(\lfloor\alpha n \rfloor)}$, so
\[
1 - (\alpha - n^{-1}) \geq pr\{Z_n > Z_{(\lceil\alpha n \rceil - 1)}\} \geq pr\{Z_n > \hat{Q}_n(\alpha)\}.
\] 
So,
\[
1 - \alpha + n^{-1} \geq pr\{Z_n > \hat{Q}_n(\alpha)\}.
\]
\end{proof}

\begin{lemma} \label{lemma:inflation_quantiles_ours} Suppose $Z_1, \ldots Z_n$ are exchangeable random variables.

\noindent For any $\alpha \in (0, 1)$,
\[
pr[Z_{n+1} > \hat{Q}_n\{(1 + n^{-1} )\alpha\}] \geq 1 - \alpha.
\]
    
\end{lemma}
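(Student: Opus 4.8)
The plan is to mirror the exchangeability‑averaging argument used for the lower bound of Lemma~\ref{lemma:quantiles_exchangeability_ours}, but symmetrizing over the index that plays the role of the ``new'' observation. (Here, as in Lemma~\ref{lemma:inflation_quantiles_known}, $Z_1,\dots,Z_{n+1}$ are exchangeable.) First I would record the bookkeeping identity
\[
\hat{Q}_n\{(1+n^{-1})\alpha\}=Z_{(\lfloor\alpha(n+1)\rfloor)},
\]
the $k$‑th smallest of $Z_1,\dots,Z_n$ with $k:=\lfloor\alpha(n+1)\rfloor$, together with the elementary fact $k\le\alpha(n+1)$, i.e.\ $k/(n+1)\le\alpha$. (The degenerate case $k=0$ is trivial since then the right‑hand set is all of the real line.)

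Next, for $j=1,\dots,n+1$ let $\hat{Q}^{(-j)}$ denote the $k$‑th smallest value of $\{Z_i:i\ne j\}$, so that $\hat{Q}^{(-(n+1))}=\hat{Q}_n\{(1+n^{-1})\alpha\}$. Since $Z_1,\dots,Z_{n+1}$ are exchangeable and $\hat{Q}^{(-j)}$ is a symmetric function of $\{Z_i:i\ne j\}$, the pair $(Z_j,\hat{Q}^{(-j)})$ has the same distribution for every $j$; in particular $pr\{Z_j>\hat{Q}^{(-j)}\}=pr\{Z_{n+1}>\hat{Q}_n((1+n^{-1})\alpha)\}$ for all $j$. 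Averaging,
\[
pr\{Z_{n+1}>\hat{Q}_n((1+n^{-1})\alpha)\}=\frac{1}{n+1}\,E\Big[\,\sum_{j=1}^{n+1}\one\{Z_j>\hat{Q}^{(-j)}\}\,\Big].
\]
The combinatorial heart of the proof is then to show $\sum_{j=1}^{n+1}\one\{Z_j>\hat{Q}^{(-j)}\}\ge n+1-k$ almost surely: the event $\{Z_j>\hat{Q}^{(-j)}\}$ says at least $k$ of the other $n$ values are strictly below $Z_j$, i.e.\ (with distinct values) $Z_j$ has rank at least $k+1$ among all $n+1$ values, and at most $k$ indices can fail this. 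Combining with the first step gives $pr\{Z_{n+1}>\hat{Q}_n((1+n^{-1})\alpha)\}\ge (n+1-k)/(n+1)\ge 1-\alpha$, as claimed; equivalently, one may observe that the rank of $Z_{n+1}$ among $Z_1,\dots,Z_{n+1}$ is uniform on $\{1,\dots,n+1\}$ and read off the probability directly.

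The one delicate point is ties. The equivalence ``$Z_j$ exceeds the $k$‑th smallest of the others $\iff$ $Z_j$ is not among the $k$ smallest overall'' can fail when values coincide — in the extreme where all $Z_i$ are equal, the strict inequality $Z_{n+1}>\hat Q_n(\cdot)$ never holds — so the bound $\sum_j\one\{Z_j>\hat Q^{(-j)}\}\ge n+1-k$ (and hence the stated inequality) requires the $Z_i$ to be almost surely distinct, or the usual randomized tie‑breaking, an assumption that already underlies the lower‑bound half of Lemma~\ref{lemma:quantiles_exchangeability_ours} and is automatic for the continuous scores used in our application. Granting that, the argument above is routine; I regard the tie bookkeeping, rather than any analytic difficulty, as the only place needing care.
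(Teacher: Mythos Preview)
Your argument is correct (under the almost-sure distinctness caveat you flag), but it is not the route the paper takes. The paper proves Lemma~\ref{lemma:inflation_quantiles_ours} by first establishing the set equality $\{Z_{n+1}>Z_{(k,n)}\}=\{Z_{n+1}>Z_{(k,n+1)}\}$ for $k=\lfloor\alpha(n+1)\rfloor$ via a short case check, and then invoking Lemma~\ref{lemma:quantiles_exchangeability_ours} with $n$ replaced by $n+1$ to get $pr\{Z_{n+1}>\hat Q_{n+1}(\alpha)\}\ge 1-\alpha$. Your approach instead symmetrizes directly over the $n+1$ leave-one-out statistics $\hat Q^{(-j)}$ and counts ranks, which is self-contained and does not appeal to Lemma~\ref{lemma:quantiles_exchangeability_ours} at all. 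The paper's reduction is slightly more modular (it recycles the earlier lemma and also immediately yields the upper bound $1-\alpha+(n_{cal}+1)^{-1}$ used in Theorem~\ref{thm: conformal validity}); your averaging argument is more explicit about the combinatorics and, as you rightly note, makes the dependence on distinctness visible --- a dependence that is equally present, though unstated, in the paper's proof of the lower bound in Lemma~\ref{lemma:quantiles_exchangeability_ours}.
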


\begin{proof}
Let $Z_{(k, m)}$ denote the $k$th smallest value in $Z_1, \ldots, Z_m$. 

\noindent For any $0 \leq k \leq n$, we have
\[
Z_{n+1} > Z_{(k, n)} \text{ if and only if } Z_{n+1} > Z_{(k, n + 1)}
\]

\noindent First, assume $Z_{n+1} > Z_{(k, n)}$, then $Z_{(k, n+1)} = Z_{(k, n)}$, so $Z_{n+1} > Z_{(k, n + 1)}$.

\noindent Second, assume $Z_{n+1} > Z_{(k, n + 1)}$, then $Z_{(k, n)} = Z_{(k, n+1)}$, so $Z_{n+1} > Z_{(k, n)}$.

\noindent Now, because $\hat{Q}_n\{(1 + n^{-1})\alpha\} = Z_{(\lfloor\alpha(n+1)\rfloor, n)}$ and $\hat{Q}_{n+1}(\alpha) = Z_{(\lfloor \alpha(n+1)\rfloor, n + 1)}$, we have

\[
Z_{n+1} >\hat{Q}_n\{(1 + n^{-1})\alpha\} \text{ if and only if } Z_{n+1} > \hat{Q}_{n+1}(\alpha).
\]
So, 
\[
pr[Z_{n+1} >\hat{Q}_n\{(1 + n^{-1})\alpha\}]= pr\{ Z_{n+1} > \hat{Q}_{n+1}(\alpha)\} \geq 1 - \alpha,
\]
and
\[
pr[Z_{n+1} >\hat{Q}_n\{(1 + n^{-1})\alpha\}]= pr\{ Z_{n+1} > \hat{Q}_{n+1}(\alpha)\} \leq 1 - \alpha + (n + 1)^{-1},
\]
where the last inequality follows by applying~\cref{lemma:quantiles_exchangeability_ours} with $n = n+1$.
\end{proof}

\subsection{Proof of Theorem~1}

\begin{proof}
Let $V_{n+1}$ be the score at the test point $\vect{X}_{n+1}$ and  $k$ denote the smallest $\lfloor(n_{cal} + 1)(\alpha)\rfloor$ value in $\{V_i; i \in \mathcal{I}_{cal}\}$.

\noindent By construction of the prediction interval, we have that
\noindent $V_{n+1} > k$ is equivalent to
\begin{eqnarray*}
 &&\hat{f}(Y_{n+1}\mid \vect{X}_{n+1}) - c(\vect{X}_{n+1}) > k   \\
 &\iff& \hat{f}(Y_{n+1}\mid \vect{X}_{n+1}) > k + c(\vect{X}_{n+1})\\
 &\iff& Y_{n+1} \in \{y: \hat{f}(y\mid \vect{X}_{n+1}) > k + c(\vect{X}_{n+1})\}.
\end{eqnarray*}

\noindent So,
\[
Y_{n+1} \in \hat{C}(\vect{X}_{n+1}) \iff V_{n+1} > k. 
\]

\noindent We then have that
\[
pr\{Y_{n+1} \in \hat{C}(\vect{X}_{n+1})\} = pr(V_{n+1} > k).
\]

\noindent Because the original pairs ($\vect{X}_i, Y_i$) are exchangeable, so are the scores $V_i$ for $i \in \mathcal{I}_{cal}$ and $i = n + 1$. So, by~\cref{lemma:inflation_quantiles_ours}, 
\[
pr\{Y_{n+1} \in \hat{C}(\vect{X}_{n+1})\} \geq 1 - \alpha,
\]
\noindent and, under the assumption that the $V_i$'s are almost surely distinct,
\[
pr\{Y_{n+1} \in \hat{C}(\vect{X}_{n+1})\} \leq 1 - \alpha + (n_{cal} + 1)^{-1}.
\]
\end{proof}

\subsection{Proof of Theorem 2}
\begin{proof}
Recall $0<\alpha<1$ is a fixed number. 
    We first claim that:
    \newline {\bf Claim 1}:
    it holds in probability that $\hat{c}(\vect{x})$ converges to $c(\vect{x})$, uniformly in $\vect{x}$ at the rate of $b_{n_{train}}$, as $n_{train}\to \infty$.
    \newline The  proof of the claim will be given below.  
Assuming the validity of the preceding claim and upon noting (C3),  there exists a constant $K$, such that    $V_{i} = \hat{f}(Y_i\mid \vect{X}_i) - \hat{c}(\vect{X}_i)= \tilde{V}_i +\Delta_i$ where $\tilde{V}_i=f(Y_i\mid \vect{X}_i) -c(\vect{X}_i)$ and $\mid \Delta_i\mid \le K \times b_{n_{train}}$, with probability approaching 1, as $n_{train}\to\infty$. 
Consider the empirical distribution $P_{V,n_{cal}}=\sum_{i,=1}^{n_{cal}} n_{cal}^{-1} \delta_{V_i}$, where $\delta$ denotes the Dirac delta probability measure. Its Wasserstein distance from the empirical distribution $P_{\tilde{V},n_{cal}}=\sum_{i=1}^{n_{cal}} n_{cal}^{-1} \delta_{\tilde{V}_i}$ is less than $K\times b_{n_{train}}$, with probability approaching 1. Notice that $P_{V,n_{cal}}$ converges weakly to the distribution of $f(Y\mid \vect{X})-c(\vect{X})$. 
Note that the $\alpha$ quantile $P_{\tilde{V},n_{cal}}$ converges to that of $f(Y\mid \vect{X})-c(\vect{X})$, which is zero, at the rate of $O_p(n_{cal}^{-1/2})$ \citep[Lemma 21.2]{van1998asymptotic}, hence  the stated order of the conformal adjustment $\hat{q}$. 

It remains to verify Claim 1. First note that $c(\vect{x})$ is a continuous function in $\vect{x}$. To see this, let $\vect{x}_m \to \vect{x}$ as $m\to\infty$ and let $\tau_m$ be a sequence of positive numbers that approaches 0 as $m\to \infty$. Furthermore, write $c_m$ for  $c(\vect{x}_m)$. Let $F(\cdot\mid \cdot)$ be the true conditional cdf, and $\hat{F}(\cdot\mid \cdot)$ be that based on $\hat{f}(\cdot\mid \cdot)$. By the definition of $c(\vect{x})$, we have 
\[
\alpha-\tau_m < F(c_m\mid \vect{x}_m
)\le \alpha.\]
Thanks to assumptions ~\ref{assumption: C1} and \ref{assumption: C2}, $f(y\mid \vect{x})$ is a continuous and positive function. Consequently, for all $\tau>0$, $F(c(\vect{x})-\tau\mid \vect{x})<\alpha <F(c(\vect{x})+\tau\mid \vect{x})$. 
It follows from assumptions ~\ref{assumption: C1} and \ref{assumption: C2} and a result of Scheffe\'e \citep[Corollary 2.30]{van1998asymptotic} that $F(c_m\mid \vect{x}_m)-F(c_m\mid \vect{x})\to 0$, as $m\to \infty$. We can then conclude that $\limsup_m F(c_m\mid \vect{x})\le\limsup_m F(c_m\mid \vect{x}_m) + \lim_m \{F(c_m\mid \vect{x})- F(c_m\mid \vect{x}_m) \}\le \alpha$, therefore $c_m<c(\vect{x})+\tau$ eventually. Similarly, we can show that $\liminf F(c_m\mid \vect{x}) \ge \alpha$, hence $c_m>c(\vect{x}-\tau)$ eventually. Since $\tau>0$ is arbitray, we conclude that $c_m\to c(\vect{x})$ as $m\to\infty$. This completes the proof that $c(\vect{x})$ is a continuous function of $\vect{x}$.

Next, we proceed to prove that $\hat{c}(\vect{x})$ converges to $c(\vect{x})$ uniformly in $\vect{x}\in \mathcal{X}$, at the rate of $b_{n_{train}}$, with probability approaching 1. Since $\alpha$ is strictly between 0 and 1, $c(\vect{x})$ lies in $(a,b)$, hence there exists $\psi>0$ such that $\forall \vect{x}\in \mathcal{X}$, $[c(\vect{x})-\psi, c(\vect{x})+\psi]\subset (a,b)$. Since $f(y\mid \vect{x})$ is a continuous, positive function, there exists a positive constant $K_1$ such that 
$f(\cdot\mid \vect{x})$ is bounded below by  $K_1$ over the interval $[c(\vect{x})-\psi, c(\vect{x})+\psi]$. Thus, for all $0\le \eta \le \psi$, we have $F(c(\vect{x})-\eta)\le \alpha -K_1 \eta$ and $F(c(\vect{x})+\eta)\ge \alpha +K_1 \eta$. 
Owing to assumption~\ref{assumption: C3},  $\hat{F}(\cdot\mid \vect{x})$ converges to $F(\cdot\mid \vect{x})$ in sup norm, at the rate of $b_{n_{train}}$ uniformly in $\vect{x}\in \mathcal{X}$, with probability approaching 1. Thus, there exists another positive constant $K_2$ such that for all $0\le \eta \le \psi$, we have $\hat{F}(c(\vect{x})-\eta)\le \alpha -K_1 \eta+ K_2 b_{n_{train}} $ and $\hat{F}(c(\vect{x})+\eta)\ge \alpha +K_1 \eta-K_2 b_{n_{train}} $, hence, for $\eta=2 K_2 K_1^{-1} b_{n_{train}}$, $\hat{F}(c(\vect{x})-\eta)<\alpha$ and $\hat{F}(c(\vect{x})+\eta)>\alpha$, on an event with probability approaching 1. This shows that with probability approaching 1, $\mid \hat{c}(\vect{x})-c(\vect{x})\mid  < 2 K_2 K_1^{-1} b_{n_{train}}$. This completes the proof.
\end{proof}
\subsection{Statement and Proof of Theorem 3}\label{sec: proofs_thm3}
The preceding result can be generalized to the case of parametric estimation of the conditional pdf over a possibly non-compact sample space. Assume the true conditional pdf  belongs  to the model:  $\{f(y\mid\vect{x}; \theta), \theta \in \Theta\}$, with the parameter estimate from the training data denoted as $\hat{\theta}=\hat{\theta}_{train}$. Write the true cutoff as $c(\vect{x};\theta)$, if $\theta$ were the true parameter. Define $H(\eta, \vect{x}, \theta)=\int_{-\infty}^\eta f(y\mid\vect{x}, \theta)dy-\alpha$. Then, $c(\vect{x};\theta)$ is the unique solution of  $H(\cdot, \vect{x},\theta)=0$. The smoothness property of $c(\cdot;\cdot)$ can be deduced via the implicit function theorem.  The following regularity conditions are mild conditions, generally satisfied with maximum likelihood estimation of $\theta$. Below, denote the true parameter as $\theta_0$.
\setcounter{assumption}{4}
\begin{assumption}~\label{assumption: C5}
    The estimate $\hat{\theta}\to \theta_0$ almost surely. Furthermore, 
   $f(y\mid\vect{x}; \hat{\theta})\to f(y\mid\vect{x};\theta_0)$ uniformly for $y, \vect{x}$ in any fixed compact set, as $n_{train}\to\infty$.
\end{assumption}
\begin{assumption}~\label{assumption: C6}
 The conditional pdf $f(y\mid\vect{x}, \theta)$ is a positive, continuous function of $y, \vect{x}, \theta$. The functions $\partial H/\partial x= \int_{-\infty}^\eta \partial f(y\mid\vect{x};\theta)/\partial \vect{x}\,\,dy$ and 
$\partial H/\partial \theta= \int_{-\infty}^\eta \partial f(y\mid\vect{x};\theta)/\partial \theta \,\,dy$ are well-defined and continuous functions of  $\eta, \vect{x}, \theta$.    
\end{assumption}

\begin{theorem}\label{thm: convergence}
Suppose 
assumptions~\ref{assumption: C4}--\ref{assumption: C6}
 hold, and let $1>\alpha>0$ be fixed. Then $\hat{c}(\vect{x})=c(\vect{x}, \hat{\theta})$ converges to $c(\vect{x}, \theta_0)$  uniformly for $\vect{x}$ in any compact set.  Moreover, the conformal adjustment $\hat{q}\to 0$, as both $n_{cal}\to\infty$ and $n_{train}\to\infty$. 
\end{theorem}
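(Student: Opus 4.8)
The plan is to follow the proof of Theorem~\ref{thm: convergence rate} closely, but with two substitutions forced by the parametric, possibly non-compact, setting: the global sup-norm control of assumption~\ref{assumption: C3} is replaced by a localization (truncation) argument on compact sets, and the continuity of $c(\vect{x})$ is replaced by joint continuity of $c(\vect{x},\theta)$.

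First I would establish that $c(\vect{x},\theta)$ is jointly continuous in $(\vect{x},\theta)$. Since $\partial H(\eta,\vect{x},\theta)/\partial\eta = f(\eta\mid\vect{x},\theta)$ is strictly positive and continuous by assumption~\ref{assumption: C6}, and $\eta=c(\vect{x},\theta)$ is the unique root of $H(\cdot,\vect{x},\theta)=0$, the implicit function theorem (equivalently, the monotonicity-plus-continuity argument used for $c(\vect{x})$ in the proof of Theorem~\ref{thm: convergence rate}) yields joint continuity of $c$. Hence, for any compact $\mathcal{X}_0$ in the covariate space and any $\delta>0$, $c$ is uniformly continuous and bounded on the compact set $\mathcal{X}_0\times\{\theta:\|\theta-\theta_0\|\le\delta\}$; combined with $\hat\theta\to\theta_0$ a.s.\ (assumption~\ref{assumption: C5}), this gives $\sup_{\vect{x}\in\mathcal{X}_0}|\hat c(\vect{x})-c(\vect{x},\theta_0)|=\sup_{\vect{x}\in\mathcal{X}_0}|c(\vect{x},\hat\theta)-c(\vect{x},\theta_0)|\to 0$ almost surely, which proves the first assertion. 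In particular the cutoffs over $\mathcal{X}_0$ eventually stay in a fixed compact interval.

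For the conformal adjustment, write $\tilde V=f(Y\mid\vect{X})-c(\vect{X},\theta_0)$. By the defining property of the highest-density cutoff, $pr(\tilde V\le 0)=E\{pr(f(Y\mid\vect{X})\le c(\vect{X},\theta_0)\mid\vect{X})\}=\alpha$, and assumption~\ref{assumption: C4} then forces $0$ to be the $\alpha$-quantile of $\tilde V$ and makes the quantile map of $\tilde V$ continuous at $\alpha$. Decompose the calibration scores as $V_i=\tilde V_i+\Delta_i$ with $\Delta_i=\delta_{n_{train}}(Y_i,\vect{X}_i)$, where $\delta_{n_{train}}(y,\vect{x})=\{\hat f(y\mid\vect{x})-f(y\mid\vect{x})\}-\{\hat c(\vect{x})-c(\vect{x},\theta_0)\}$; by assumption~\ref{assumption: C5} together with the first assertion, $\rho_{n_{train}}(\mathcal{K}):=\sup_{(y,\vect{x})\in\mathcal{K}}|\delta_{n_{train}}(y,\vect{x})|\to 0$ a.s.\ for every fixed compact $\mathcal{K}$. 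Given $\varepsilon>0$, pick a compact $\mathcal{K}$ with $pr\{(Y,\vect{X})\in\mathcal{K}\}>1-\varepsilon$; splitting the empirical distribution of $\{V_i\}$ over $\{(Y_i,\vect{X}_i)\in\mathcal{K}\}$ and its complement, sandwiching $\one(\tilde V_i+\Delta_i\le t)$ between $\one(\tilde V_i\le t-\rho_{n_{train}}(\mathcal{K}))$ and $\one(\tilde V_i\le t+\rho_{n_{train}}(\mathcal{K}))$ on the first set, bounding the complement by the empirical fraction outside $\mathcal{K}$, and invoking Glivenko--Cantelli for $\{\tilde V_i\}$ and the law of large numbers for that fraction, one gets that the empirical distribution of $\{V_i\}$ converges, in probability, to the law of $\tilde V$ at every continuity point. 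Letting $\varepsilon\downarrow 0$ and using continuity of the quantile map of $\tilde V$ at $\alpha$, the empirical $\alpha$-quantile of $\{V_i\}$ — which coincides asymptotically with $\hat q$ — converges to the $\alpha$-quantile of $\tilde V$, namely $0$; hence $\hat q\to 0$ as $n_{cal},n_{train}\to\infty$.

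The main obstacle is precisely the loss of compactness: without assumption~\ref{assumption: C3}'s uniform rate one cannot bound $\Delta_i$ globally, so the empirical-law convergence must be forced through a truncation to a compact $\mathcal{K}$, and the convergence of $\hat f$ and $\hat c$ can only be asserted on such $\mathcal{K}$. A secondary point to handle with care is that $\hat q$ may be negative; this is harmless here because the argument controls the whole empirical law of the scores and lets the quantile convergence fix the sign. Finally, no rate is claimed, since assumption~\ref{assumption: C5} only provides consistency of $\hat\theta$; if one additionally assumed $\hat\theta-\theta_0=O_p(n_{train}^{-1/2})$ and local Lipschitzness of $\theta\mapsto c(\vect{x},\theta)$ and $\theta\mapsto f(y\mid\vect{x},\theta)$ on compacts, the same scheme would yield $\hat q=O_p(n_{train}^{-1/2}+n_{cal}^{-1/2})$.
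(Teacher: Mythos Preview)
Your proposal is correct and follows essentially the same route as the paper: implicit function theorem for joint continuity of $c(\vect{x},\theta)$, uniform convergence of $\hat c$ on compacts via $\hat\theta\to\theta_0$, and then a truncation-to-compacts argument to show the empirical law of the scores converges to that of $\tilde V$, after which assumption~\ref{assumption: C4} gives $\hat q\to 0$. The only cosmetic difference is that the paper establishes the weak convergence by testing against bounded continuous functions (a Portmanteau-style decomposition $S_1+S_2+S_3$ with $S_2$ the off-compact piece), whereas you work directly with the empirical distribution function via an indicator sandwich plus Glivenko--Cantelli; both arguments are standard and equivalent here.
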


\begin{proof}
    We first verify that $c(\vect{x}; \theta)$ is a continuous function. Recall $c(\vect{x}; \theta)$ is the unique solution to the equation $H(\eta, \vect{x}, \theta)=0$. Now, ${\partial H}/{\partial \eta} =f(\eta\mid \vect{x}; \theta)>0$, for any $\eta$. It follows from the implicit function theorem and assumption~\ref{assumption: C6} that   $c(\vect{x}; \theta)$ is a  continuously differentiable function of $\vect{x}, \theta$. Hence, $c(\vect{x}; \hat{\theta})\to c(x, \theta_0)$ uniformly for $\vect{x}$ in any fixed compact set, thanks to assumption~\ref{assumption: C5}.

Consider the empirical distribution $P_{V,n_{cal}}=\sum_{i,=1}^{n_{cal}} n_{cal}^{-1} \delta_{V_i}$, where $V_{i} = f(Y_i\mid \vect{X}_i; \hat{\theta}) - c(\vect{X}_i; \hat{\theta})$ and $\delta$ denotes the Dirac delta probability measure. 
Next, we show that  as $n_{train}\to \infty$ and $n_{cal}\to \infty$,  $P_{V,n_{cal}}$ converges weakly to the distribution of $f(Y\mid \vect{X};\theta_0) - c(\vect{X}; \theta_0)$. It suffices to show that for any bounded continuous function $g$, 
\begin{equation}
 \sum_{i,=1}^{n_{cal}} g\{f(Y_i\mid \vect{X}_i,\hat{\theta}) - c(\vect{X}_i; \hat{\theta})\}/n \to E\{g(f(Y\mid \vect{X}; \theta_0) -c(\vect{X}; \theta_0)\}.
    \label{eq: weak convergence}
\end{equation}
Let $g$ be upper bounded by a constant $M>0$.
Let $\epsilon>0$ be given and  $\mathcal{R}$ be a compact set such that with probability greater than $1-\epsilon/M$, $(Y, \vect{X}^\intercal)^\intercal\in \mathcal{R}$. Decompose
$\sum_{i,=1}^{n_{cal}} g\{\hat{f}(Y_i\mid \vect{X}_i) - c(\vect{X}_i; \hat{\theta})\}/n =S_1+S_2+S_3$ where 
\begin{eqnarray*}
    S_1 &=& \sum_{i,=1}^{n_{cal}} [g\{f(Y_i\mid \vect{X}_i; \hat{\theta}) - c(\vect{X}_i; \hat{\theta})\}-
    g\{f(Y_i\mid \vect{X}_i; \theta_0) - c(\vect{X}_i; \theta_0)\}]
    I\{(Y_i, \vect{X}_i^\intercal)^\intercal\in \mathcal{R}\} /n, \\
    S_2 &=& \sum_{i,=1}^{n_{cal}} [g\{f(Y_i\mid \vect{X}_i; \hat{\theta}) - c(\vect{X}_i; \hat{\theta})\}-
    g\{f(Y_i\mid \vect{X}_i; \theta_0) - c(\vect{X}_i; \theta_0)\}]
    I\{(Y_i, \vect{X}_i^\intercal)^\intercal\not \in \mathcal{R}\} /n, \\
    S_3&=& \sum_{i,=1}^{n_{cal}} g\{f(Y_i\mid \vect{X}_i; \theta_0) - c(\vect{X}_i; \theta_0)\}/n.
\end{eqnarray*}
 Because of assumption~\ref{assumption: C5}, $T_1\to 0$, almost surely, as $n_{train}\to \infty$.
 Note that $T_2\to E(T_2)$ and $E(\mid T_2\mid )\le \epsilon$, while $T_3\to E[\{g(f(Y\mid \vect{X}; \theta_0) -c(\vect{X}; \theta_0)\}$, as  $n_{cal}\to\infty$.   Since $\epsilon>0$, \eqref{eq: weak convergence} holds. 

 The preceding weak convergence result and assumption~\ref{assumption: C4} then imply that the $\alpha$ quantile of $P_{\tilde{V},n_{cal}}$ converges to that of $f(Y\mid \vect{X}; \theta_0)-c(\vect{X}; \theta_0)$, which is zero \citep[Lemma 21.2]{van1998asymptotic}, hence  the conformal adjustment $\hat{q}\to 0$. 
\end{proof}

\subsection{Comparison of Theoretical Results}

The result of Theorem 2 and Theorem 3 are similar to Theorem 25 in \cite{izbicki2021cdsplit} and Theorem 2 part 1 in \cite{CHR}, though we have a slightly stronger assumption on the conditional estimator (density and quantile, respectively). Theorem 1 in \cite{CQR_theory} and Theorem 5 in \cite{dcp} are slightly stronger results, that the Lebeguse measure of the set difference vanishes asymptotically, though it only applies to general quantiles for CQR and not the shortest interval. It does apply to the oracle shortest interval for optimal DCP. 

\section{Numerical Studies}\label{sec:further-chcds_simulation}

An R package that implements CHCDS can be found on \href{http://www.github.com/maxsampson/CHCDS}{GitHub here}. In this section, we include a second simulation scenario to go along with the mixture scenario found in Section~\ref{sec:chcds_simulation}. The covariate was generated in the same way, $X \sim \text{Unif}(-1.5, 1.5)$. The conditional response is given below. A scatterplot of the covariate-response relationship can be found in~\cref{fig:scatterplot_asymm}. A scatterplot of the covariate-response relationship in the mixture scenario can be found in~\cref{fig:scatterplot_mix}.

\noindent Asymmetric: $Y|X = 5 + 2X + \epsilon|X$,  $\epsilon|X \sim \text{Gamma}(\text{Shape} = 1 + 2 |X|, \text{ Rate} = 1 + 2|X|)$ \newline

\begin{figure}[ht]
    \centering
\includegraphics[scale = 0.5]{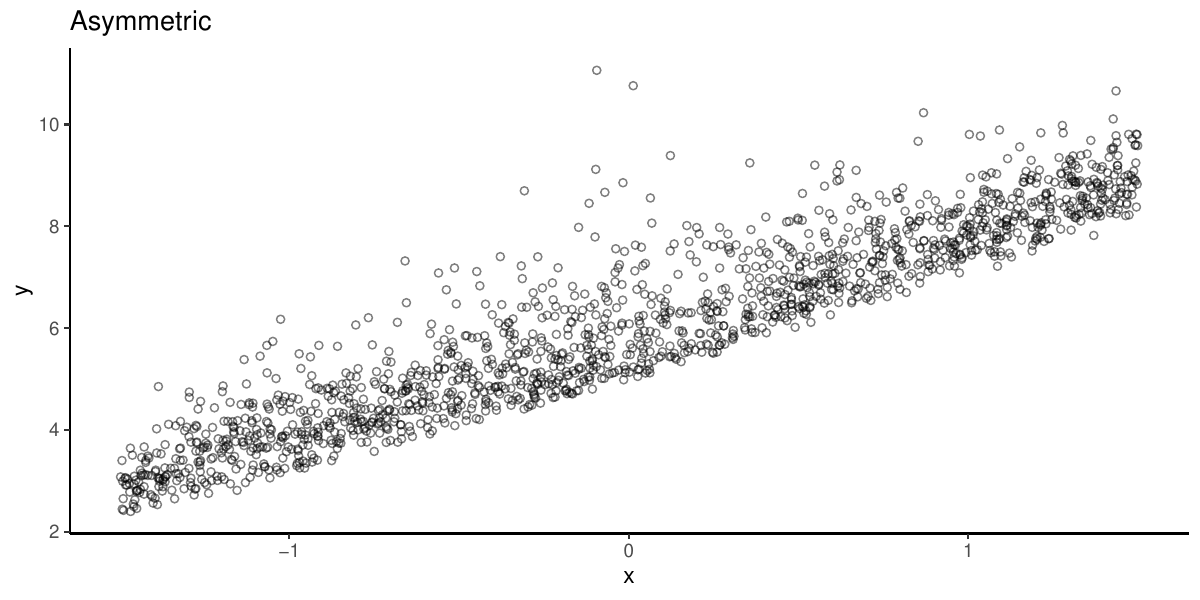}
\caption{A plot showing the covariate response relationship from one simulation in the asymmetric scenario.}\label{fig:scatterplot_asymm}
\end{figure}

\begin{figure}[ht]
    \centering
\includegraphics[scale = 0.5]{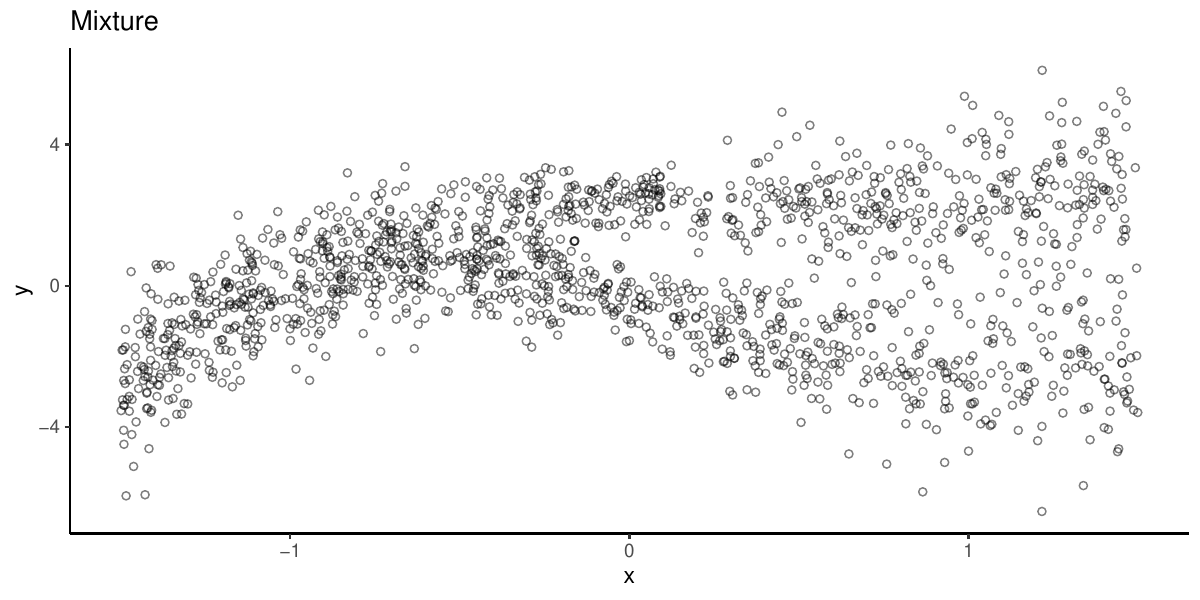}
\caption{A plot showing the covariate response relationship from one simulation in the mixture scenario.}\label{fig:scatterplot_mix}
\end{figure}

Our setup was the same as in the main text, 1,000 points for training the model and 500 for calibrating the non-conformity scores. The models used for HPD-split, DCP, CQR, and CHR were also the same. We also included how well CHCDS performed with two additional conditional density estimators, a conditional kernel density estimator (Kernel) and the FlexCode density estimator (FlexCode). We also included the results of these conditional density estimators with CHCDS in the mixture scenario. Results can be found for the asymmetric scenario in~\cref{tab:CDE_HPD_asymm} and the mixture scenario in~\cref{tab:CDE_HPD_Bimodal_full}. Conditional coverage plots can be found in~\cref{fig:conditional_cov_asymmetric} and~\cref{fig:conditional_cov_mixture_full}

\begin{table}[ht]
\begin{center}
\caption{Comparison of the methods in the asymmetric scenario}

    \begin{tabular}{lccc}
         Approach &  Coverage & Set Size & Conditional Absolute Deviation   \\
        Unadjusted (FlexCode) & 969 (2) & 4054 (2) & 74 (1)\\
        HPD-split (FlexCode) & 908 (3) & 2307 (2) & 30 (1)\\
        CHCDS (FlexCode) & 907 (3) & 2253 (2) & 29 (1) \\
        CHCDS (Kernel) & 900 (3) & 1957 (1) & 14\\
        CHCDS (KNN) & 900 (3) & 1944 (1) & 12\\
        CHCDS (Gaussian Mix) & 902 (3) & 2005 (1) & 3 (1)\\
        DCP & 901 (3) & 2031 (1) & 30 (1) \\
        CQR & 900 (3) & 2246 (1) & 8  \\
        CHR & 898 (3) & 2334 (2) & 11 \\
    \end{tabular}
    \label{tab:CDE_HPD_asymm}
\end{center}
    \caption{
	Monte Carlo error given in parentheses only if it is greater than 1. All values have been
multiplied by $10^3$.
}

\end{table}

\begin{table}[ht]
\begin{center}
\caption{Comparison of the methods in the mixture scenario with additional methods}
    \begin{tabular}{lccc}
         Approach &  Coverage & Set Size & Conditional Absolute Deviation   \\
        Unadjusted (FlexCode) & 925 (3) & 5878 (2) & 58 (1)\\
        HPD-split (FlexCode) & 900 (3) & 5466 (3) & 63 (1)\\
        CHCDS (FlexCode) & 904 (3) & 5526 (3) & 64 (1)  \\
        CHCDS (Kernel) & 904 (3) & 5668 (3)  & 79 (1) \\
        CHCDS (KNN) & 906 (3) & 5319 (2) & 8 \\
        CHCDS (Gaussian Mix) & 903 (3) & 5156 (2) & 28 (1)\\
        DCP & 901 (3) & 5173 (1) & 5  \\
        CQR & 901 (3) & 5834 (2) & 21 (1) \\
        CHR & 899 (3) & 5741 (2) & 14 \\
    \end{tabular}
    \label{tab:CDE_HPD_Bimodal_full}
\end{center}
    \caption{
	Monte Carlo error given in parentheses only if it is greater than 0.001. All values have been
multiplied by $10^3$.
}
\end{table}

 \begin{figure}[ht]
     \centering
     \begin{tabular}{cc}
  \includegraphics[scale = 0.32]{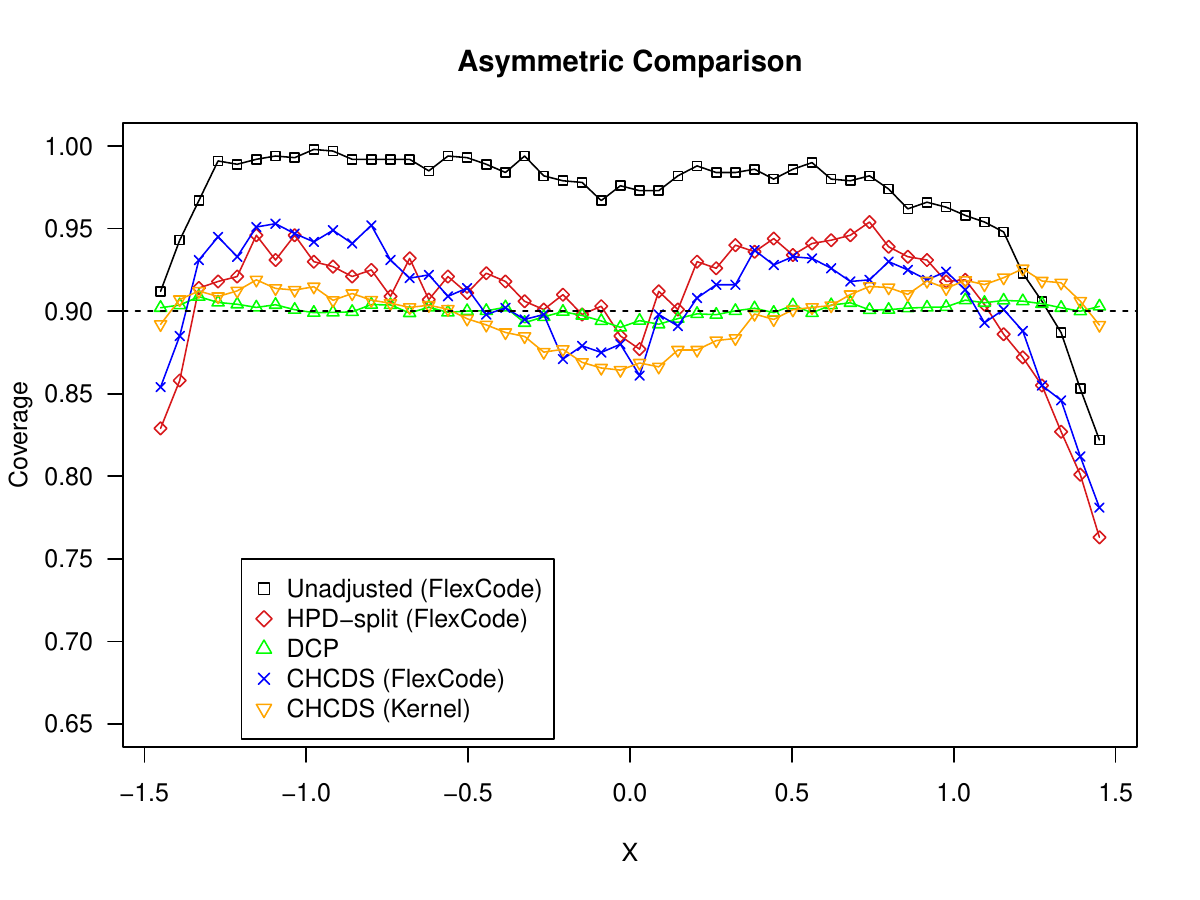}       &  \includegraphics[scale = 0.32]{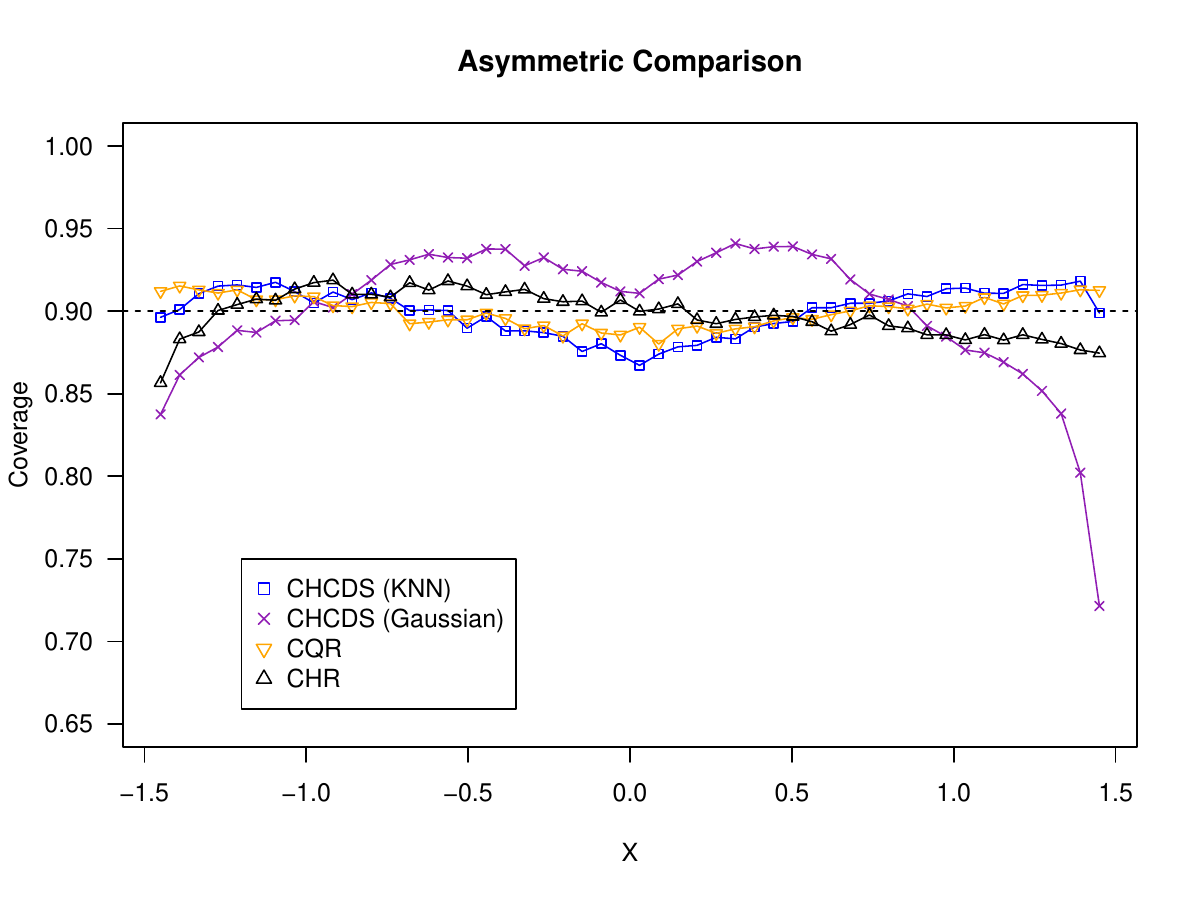}\\
  a &b 
    \end{tabular}
     \caption{A diagram showing the comparison of conditional coverage in the mixture scenario. The dashed line represents the desired 90\% coverage. The other lines represent the conditional coverage at a given value of $X$.}\label{fig:conditional_cov_asymmetric}
\end{figure}

 \begin{figure}[ht]
     \centering
     \begin{tabular}{cc}
  \includegraphics[scale = 0.32]{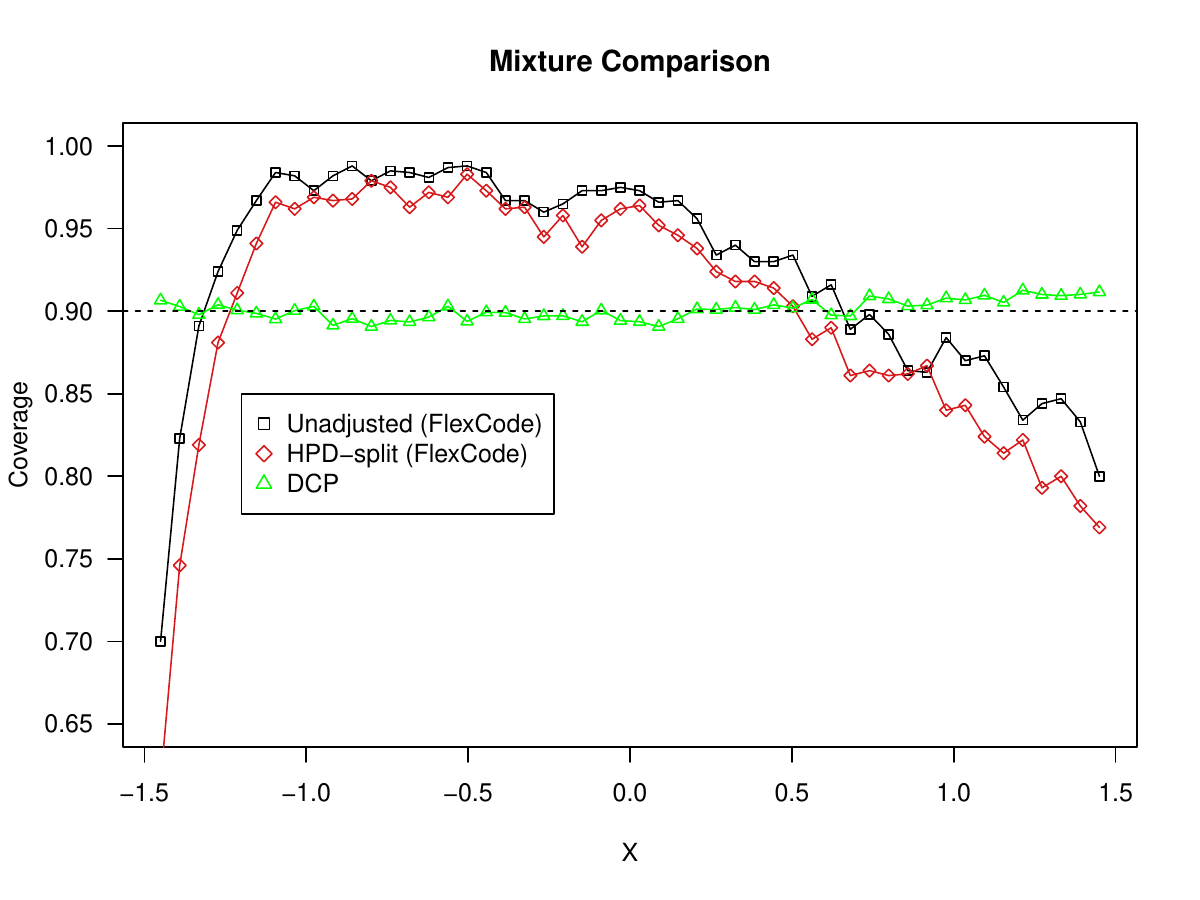}       &  \includegraphics[scale = 0.32]{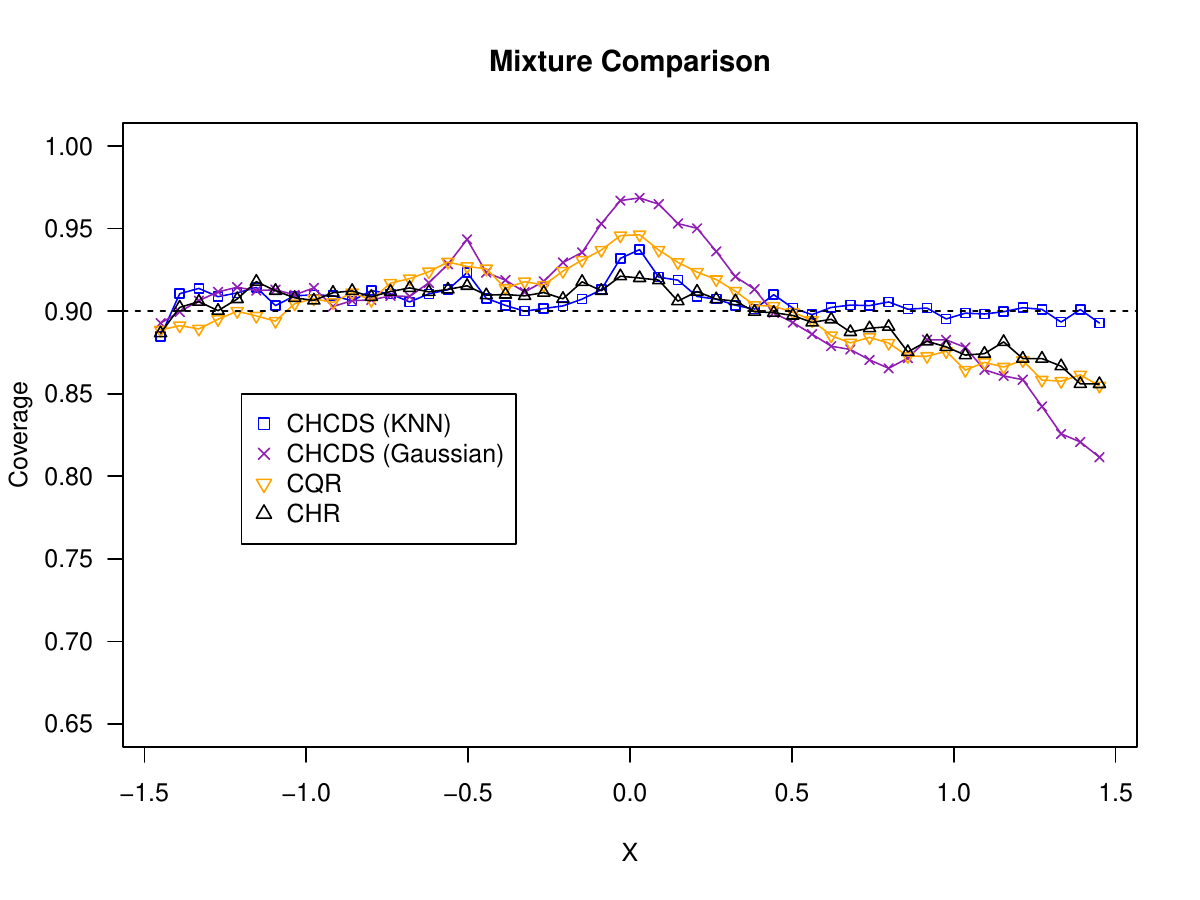}\\
  a &b 
    \end{tabular}
    \caption{A diagram showing the comparison of conditional coverage in the mixture scenario. The dashed line represents the desired 90\% coverage. The other lines represent the conditional coverage at a given value of $X$.}\label{fig:conditional_cov_mixture_full}
\end{figure}

As observed in Section~\ref{sec:chcds_simulation} 
in the main text, nearly all of the methods have some undercoverage in the tails in at least one of the scenarios. 
In the mixture scenario, if we wanted to predict our response at $X = 0.5$, an observed data point with a covariate whose value is $-0.5$ does not provide value because the conditional distributions are completely different. This is why the method that applied the heaviest local weight in the mixture scenario, KNN kernel conditional density estimator with our conformal adjustment, performed very well. 

The conditional coverage seen in~\cref{fig:conditional_cov_asymmetric} and~\cref{fig:conditional_cov_mixture_full}, where the coverage rates fall off in the tails of distributions, is common with conformal prediction \citep{Lei_Wasserman_Conformal_kernel, dis_free_pred_COPS}. It can be thought of as a bit of extrapolation, even though we observe data in the region. The smaller coverage can be seen well in~\cref{fig:regions_HPD_asymm} and~\cref{fig:regions_our_asymm}. Even though we used a weighted regression technique to compute the FlexCode estimators, the density estimators that used larger local weights worked better. This can be seen in~\cref{fig:regions_kernel_asymm} and~\cref{fig:regions_knn_asymm}. Though, the techniques that use local weights do not always perform well in high dimensions \citep{Wang_Scott_2019_nonparametric_density}. Which is an important reminder to choose an appropriate conditional density estimator for the given data.

One other noteworthy point is that when the density estimators are the same, CHCDS and HPD-split provide similar set sizes as well as conditional coverage. The two advantages of CHCDS compared to HPD-split are the easy to understand conformal adjustment to an existing highest density set, and that it does not require numerical integration because estimating the cutoff point for highest density set does not require numerical integration \citep{hyndman_conditional_density_1996}. 

Below, we include several displays of the prediction sets in~\cref{fig:regions_unadj_asymm} - ~\cref{fig:regions_CHR_mixture}

\begin{figure}[ht]
    \centering
\includegraphics[scale = 0.5]{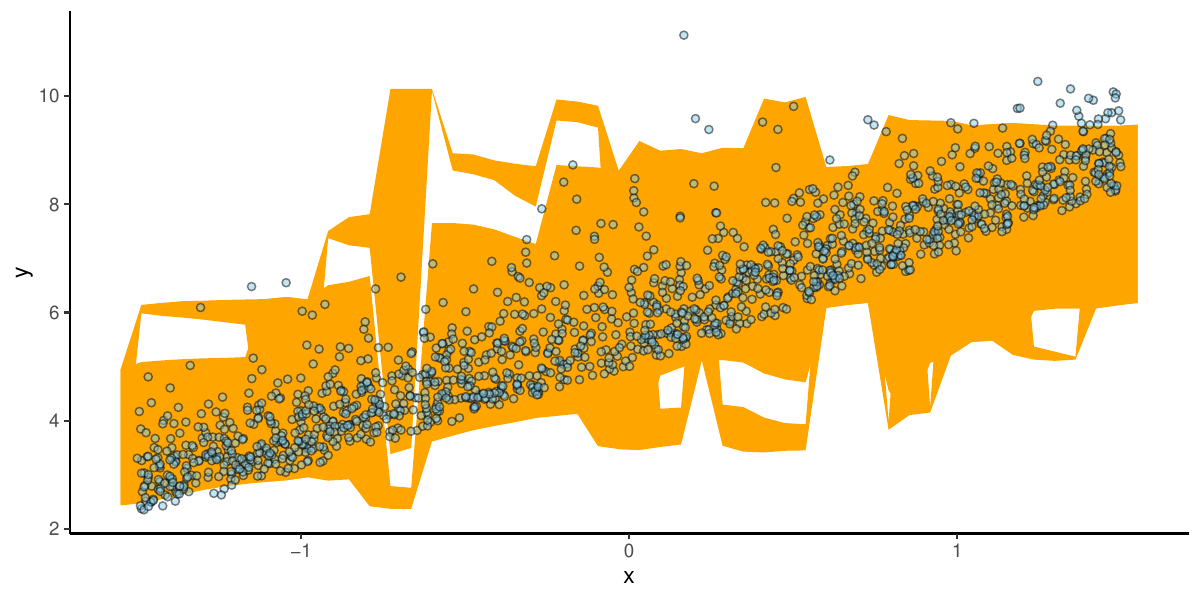}
\caption{An example of the prediction regions given by Unadjusted (FlexCode) in the asymmetric scenario.}\label{fig:regions_unadj_asymm}
\end{figure}
\begin{figure}[ht]
    \centering
\includegraphics[scale = 0.5]{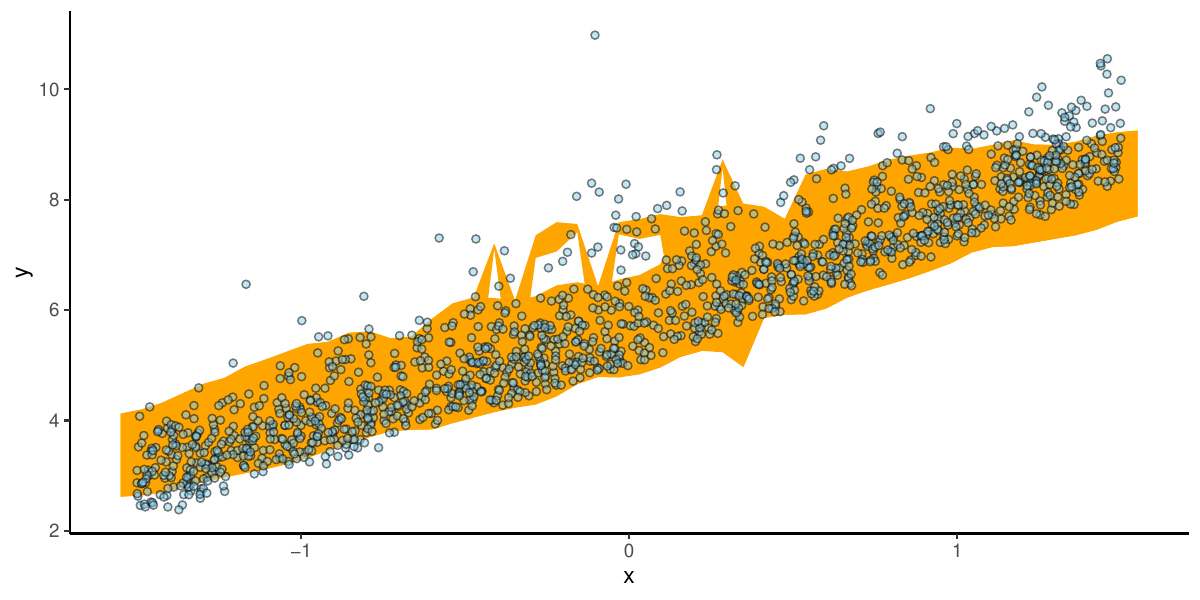}
\caption{An example of the prediction regions given by HPD-split (FlexCode) in the asymmetric scenario.}\label{fig:regions_HPD_asymm}
\end{figure}

\begin{figure}[ht]
    \centering
\includegraphics[scale = 0.5]{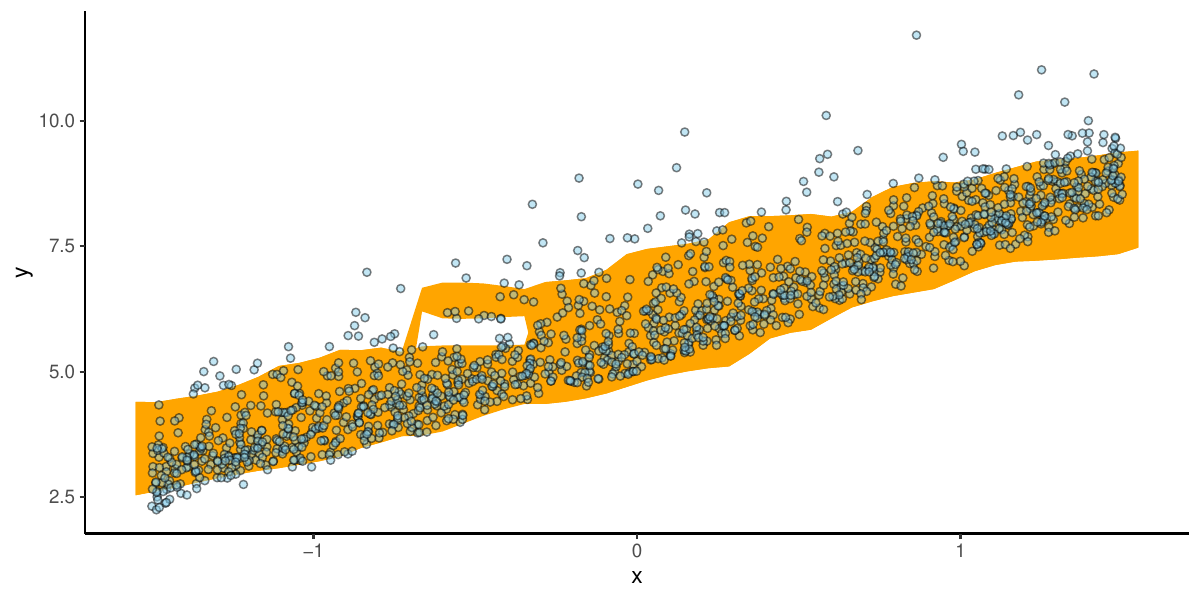}
\caption{An example of the prediction regions given by CHCDS (FlexCode) in the asymmetric scenario.}\label{fig:regions_our_asymm}
\end{figure}

\begin{figure}[ht]
    \centering
\includegraphics[scale = 0.5]{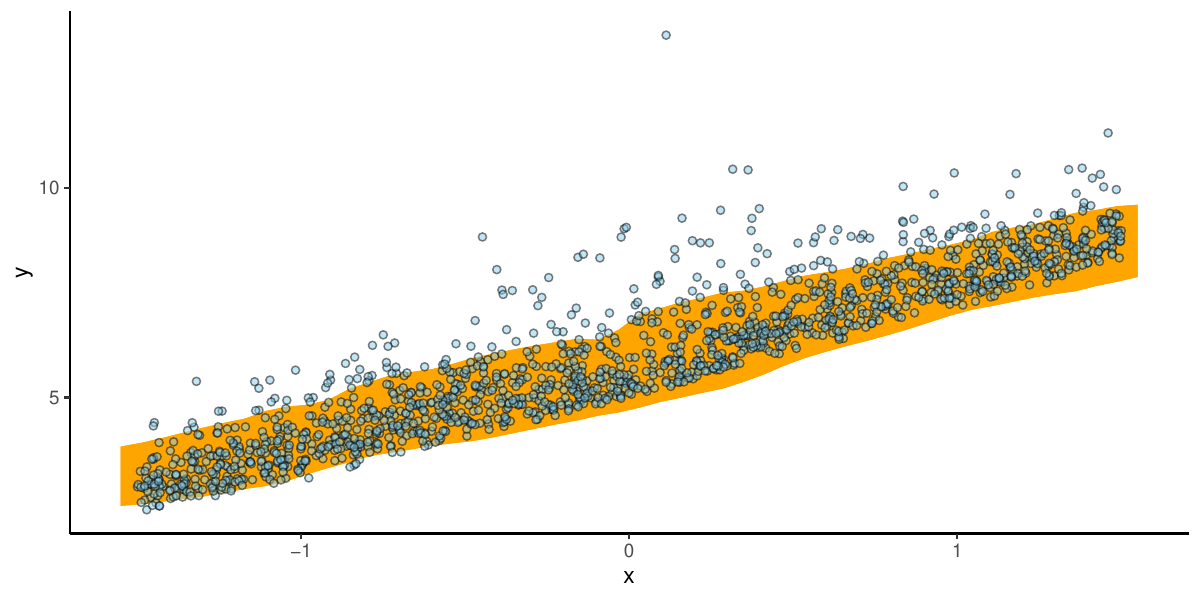}
\caption{An example of the prediction regions given by CHCDS (Kernel) in the asymmetric scenario.}\label{fig:regions_kernel_asymm}
\end{figure}

\begin{figure}[ht]
    \centering
\includegraphics[scale = 0.5]{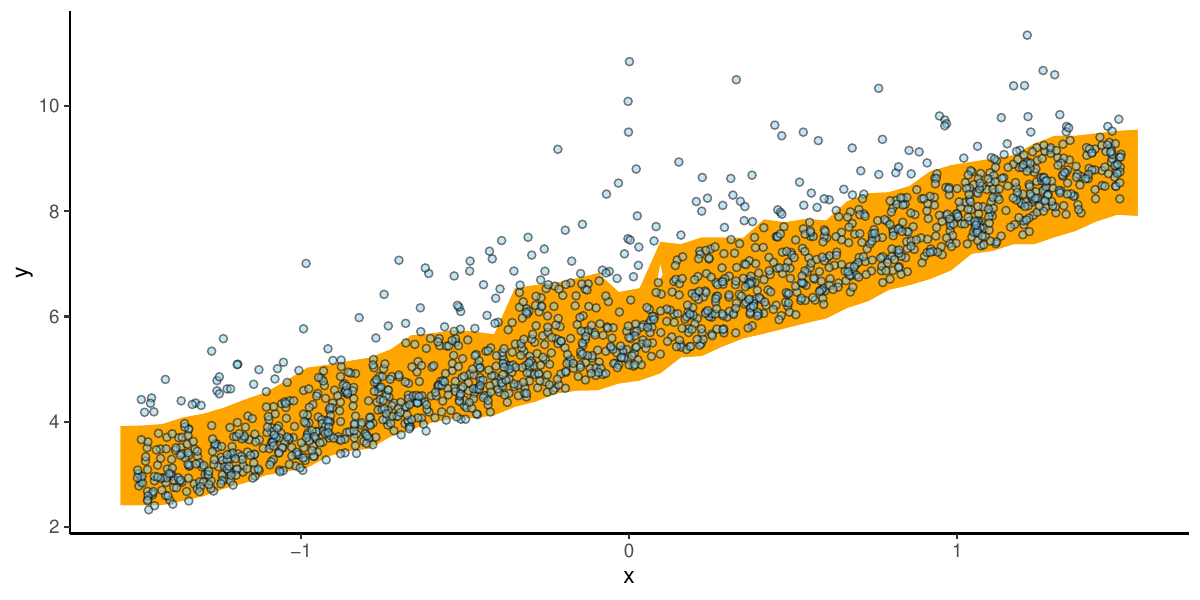}
\caption{An example of the prediction regions given by CHCDS (KNN) in the asymmetric scenario.}\label{fig:regions_knn_asymm}
\end{figure}

\begin{figure}[ht]
    \centering
\includegraphics[scale = 0.5]{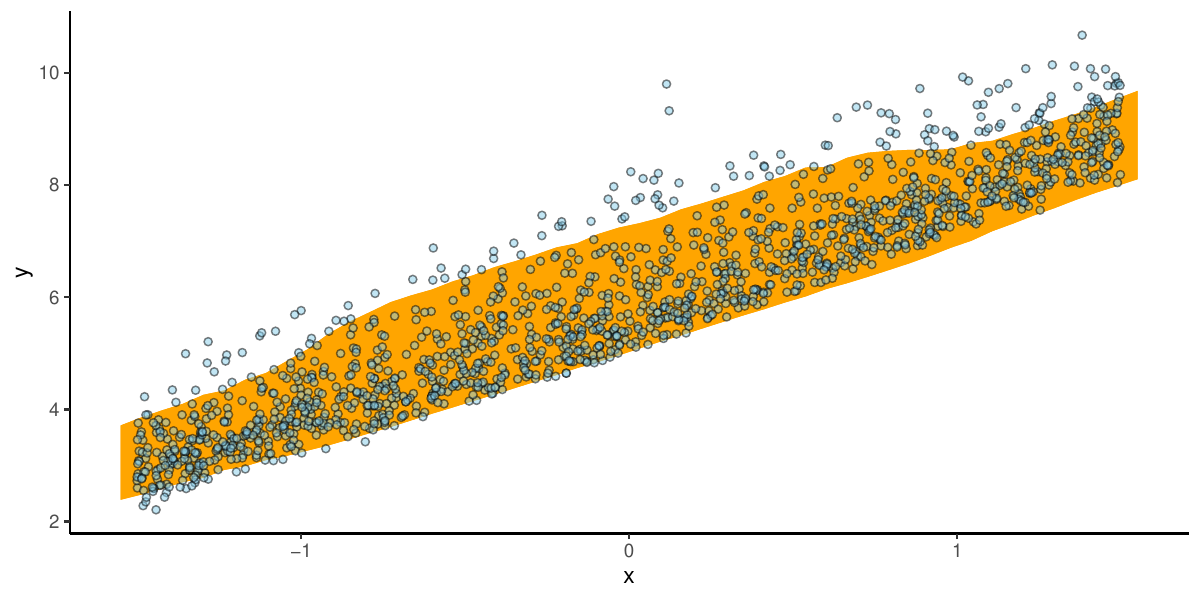}
\caption{An example of the prediction regions given by CHCDS (Gaussian Mix) in the asymmetric scenario.}\label{fig:regions_gauss_asymm}
\end{figure}

\begin{figure}[ht]
    \centering
\includegraphics[scale = 0.5]{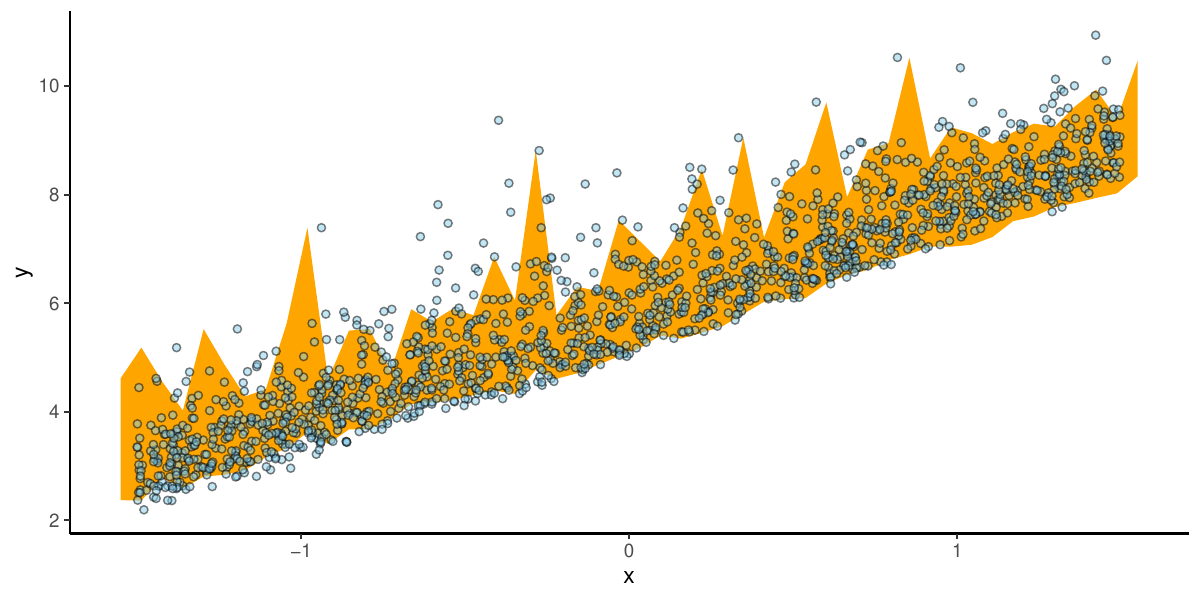}
\caption{An example of the prediction regions given by DCP in the asymmetric scenario.}\label{fig:regions_DCP_asymm}
\end{figure}

\begin{figure}[ht]
    \centering
\includegraphics[scale = 0.5]{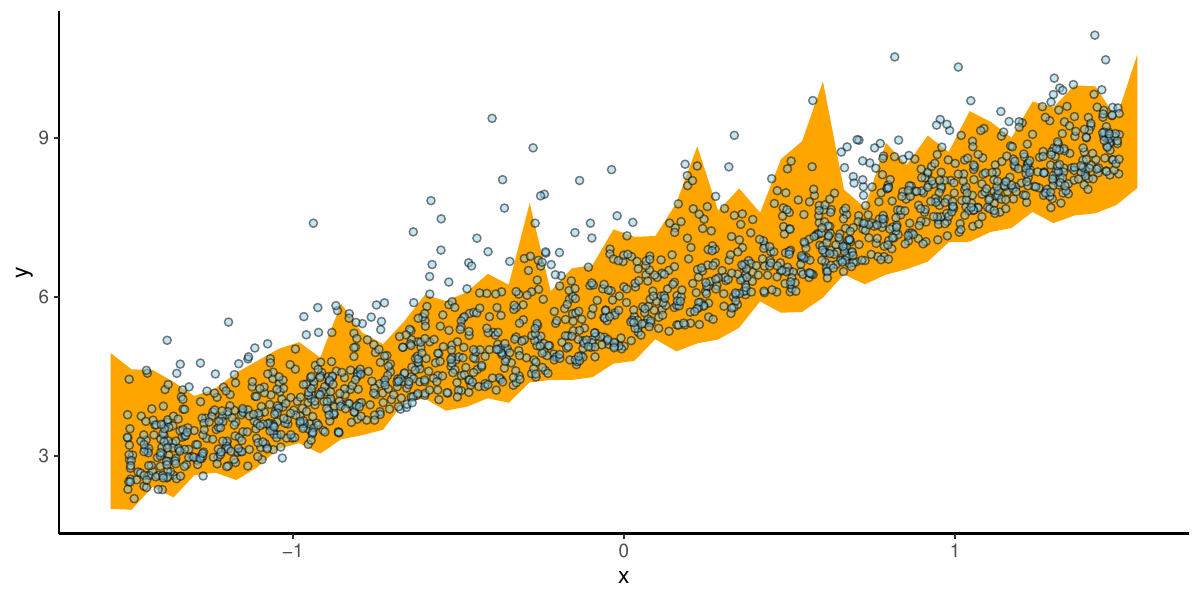}
\caption{An example of the prediction regions given by CQR in the asymmetric scenario.}\label{fig:regions_CQR_asymm}
\end{figure}

\begin{figure}[ht]
    \centering
\includegraphics[scale = 0.5]{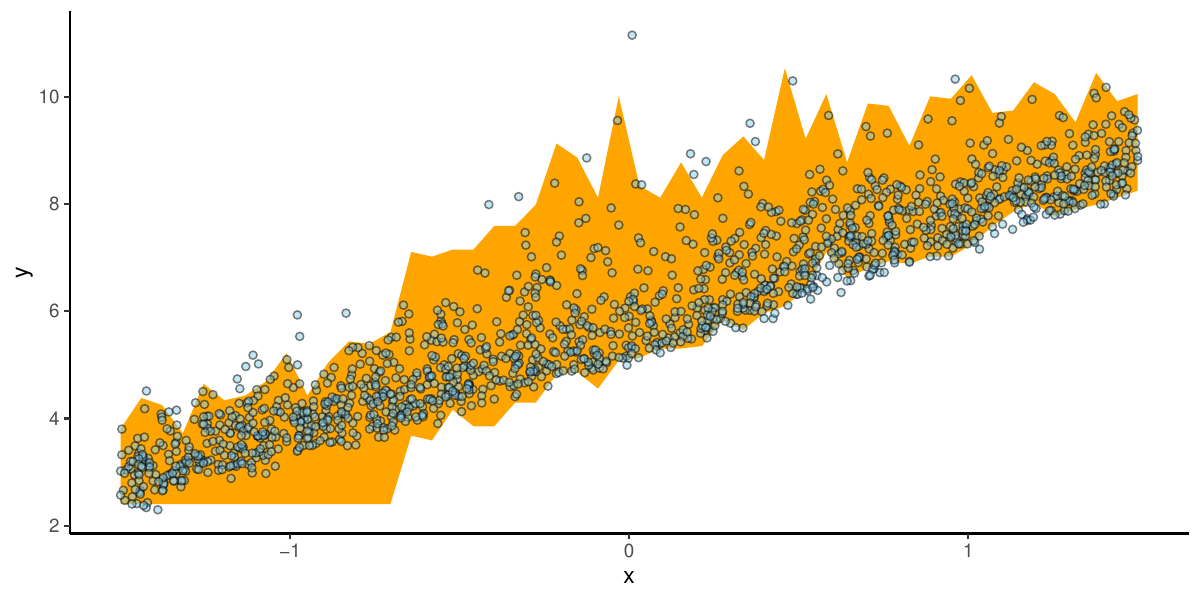}
\caption{An example of the prediction regions given by CHR in the asymmetric scenario.}\label{fig:regions_CHR_asymm}
\end{figure}

\begin{figure}[ht]
    \centering
\includegraphics[scale = 0.5]{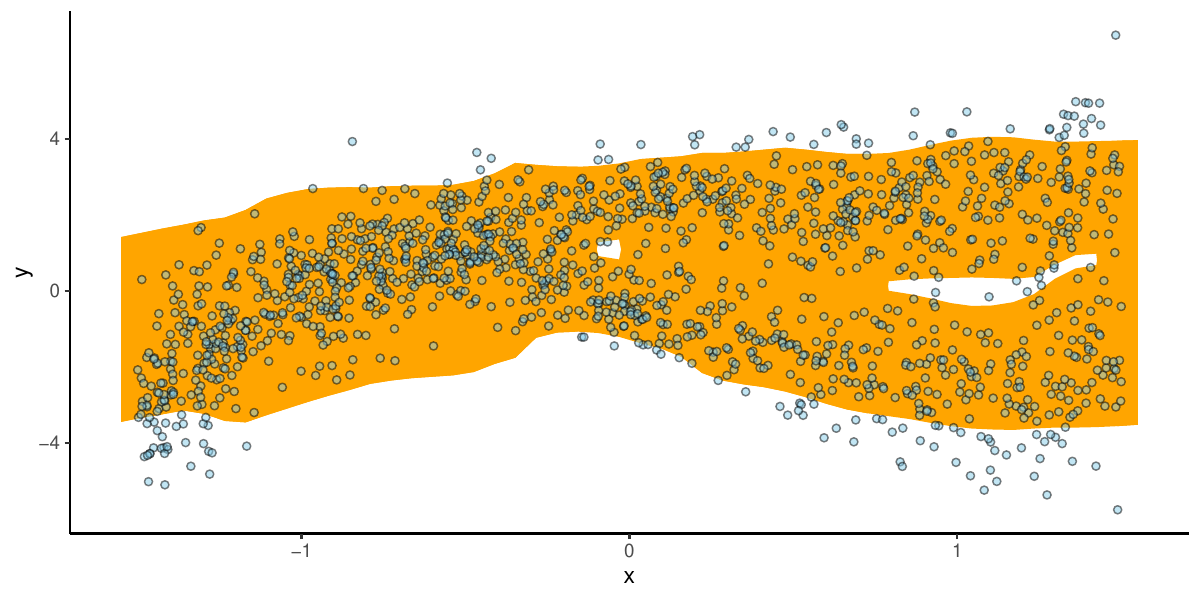}
\caption{An example of the prediction regions given by Unadjusted (FlexCode) in the mixture scenario.}\label{fig:regions_unadj_mixture}
\end{figure}

\begin{figure}[ht]
    \centering
\includegraphics[scale = 0.5]{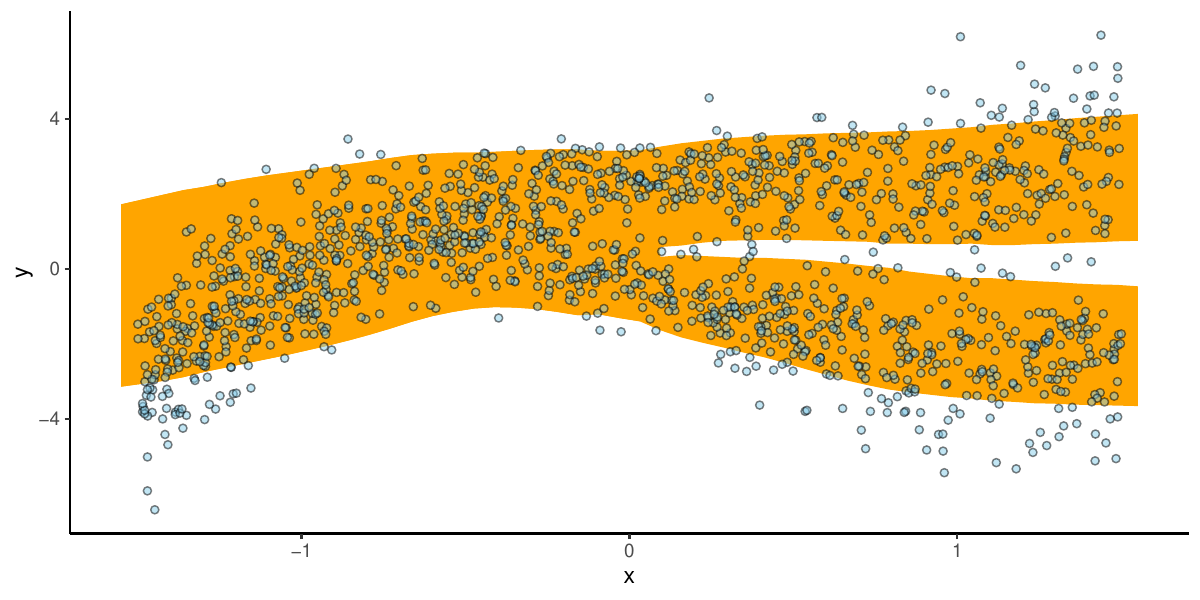}
\caption{An example of the prediction regions given by HPD-split (FlexCode) in the mixture scenario.}\label{fig:regions_HPD_mixture}
\end{figure}

\begin{figure}[ht]
    \centering
\includegraphics[scale = 0.5]{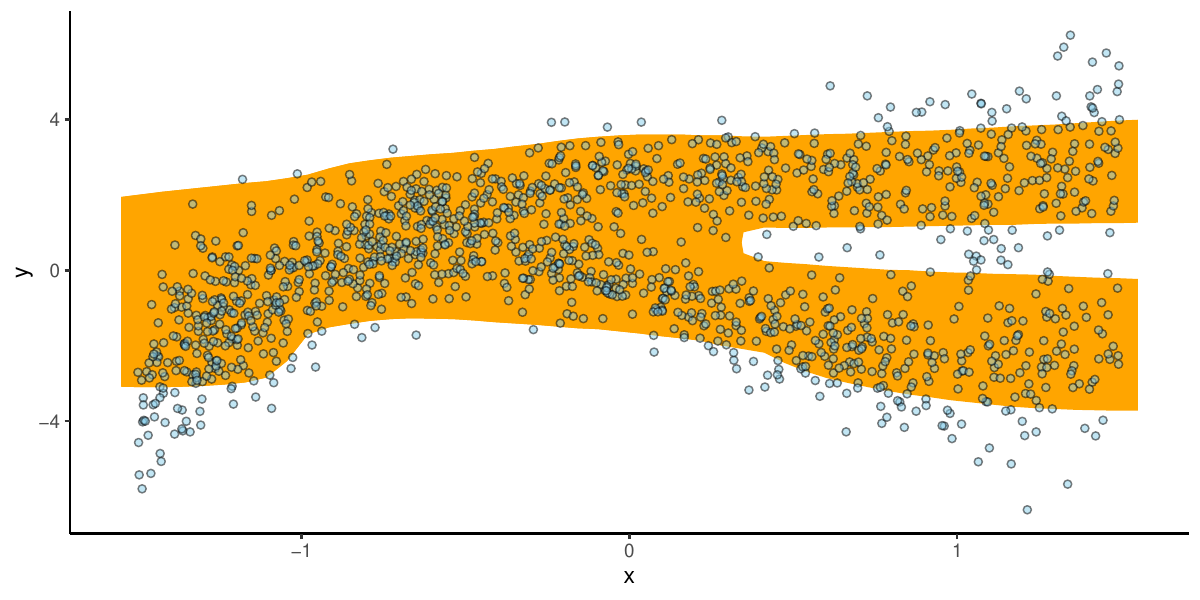}
\caption{An example of the prediction regions given by CHCDS (FlexCode) in the mixture scenario.}\label{fig:regions_our_mixture}
\end{figure}

\begin{figure}[ht]
    \centering
\includegraphics[scale = 0.5]{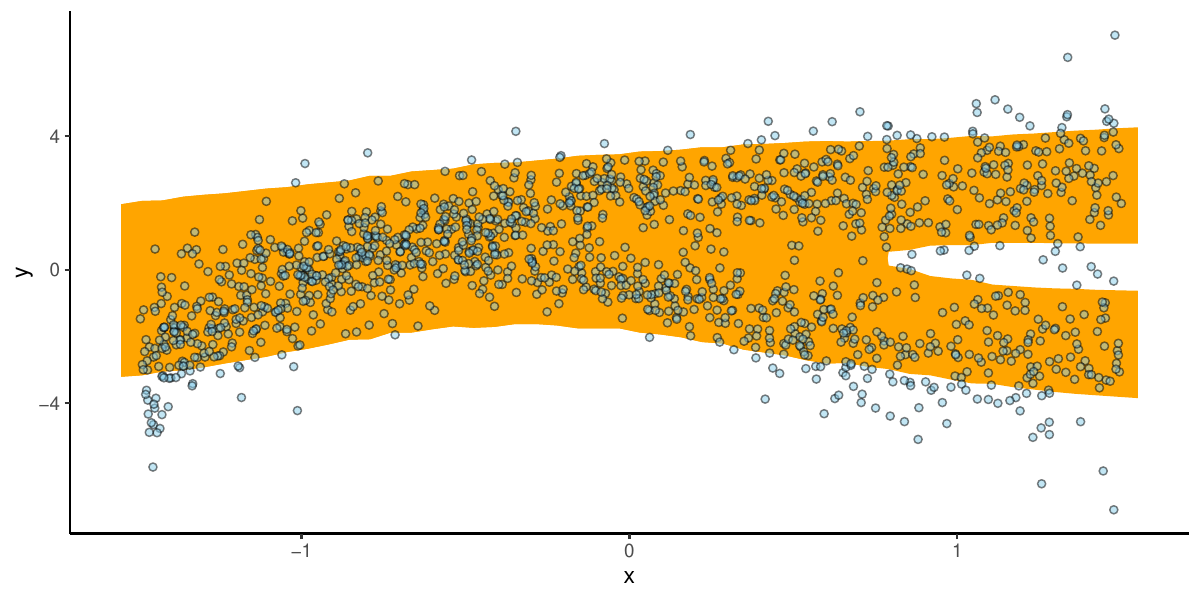}
\caption{An example of the prediction regions given by CHCDS (Kernel) in the mixture scenario.}\label{fig:regions_kernel_mixture}
\end{figure}

\begin{figure}[ht]
    \centering
\includegraphics[scale = 0.5]{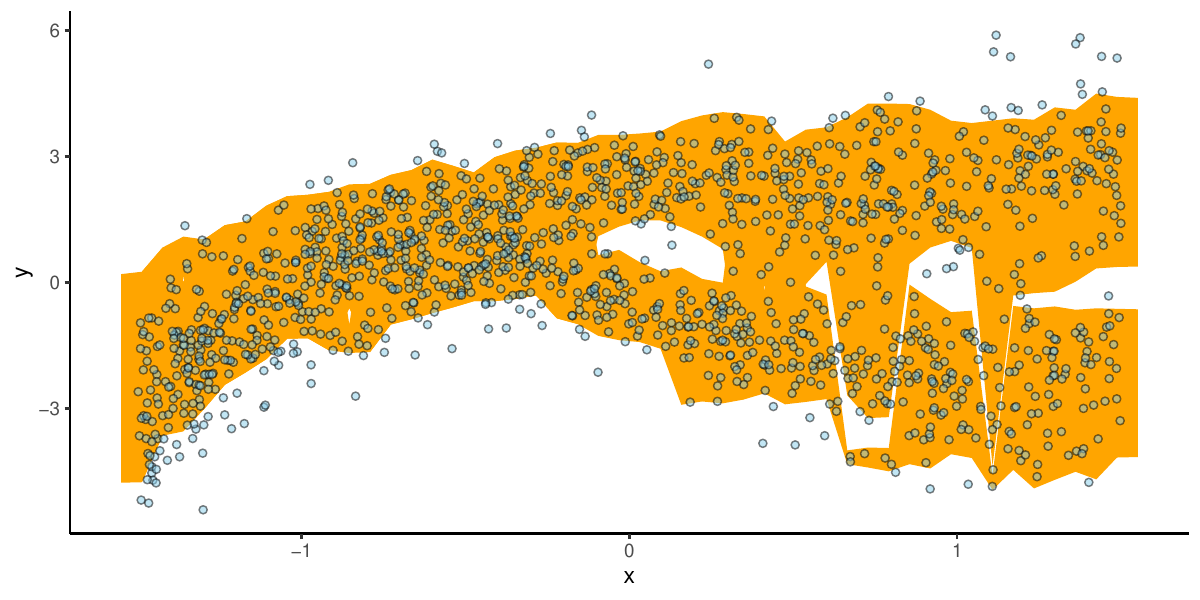}
\caption{An example of the prediction regions given by CHCDS (KNN) in the mixture scenario.}\label{fig:regions_knn_mixture}
\end{figure}

\begin{figure}[ht]
    \centering
\includegraphics[scale = 0.5]{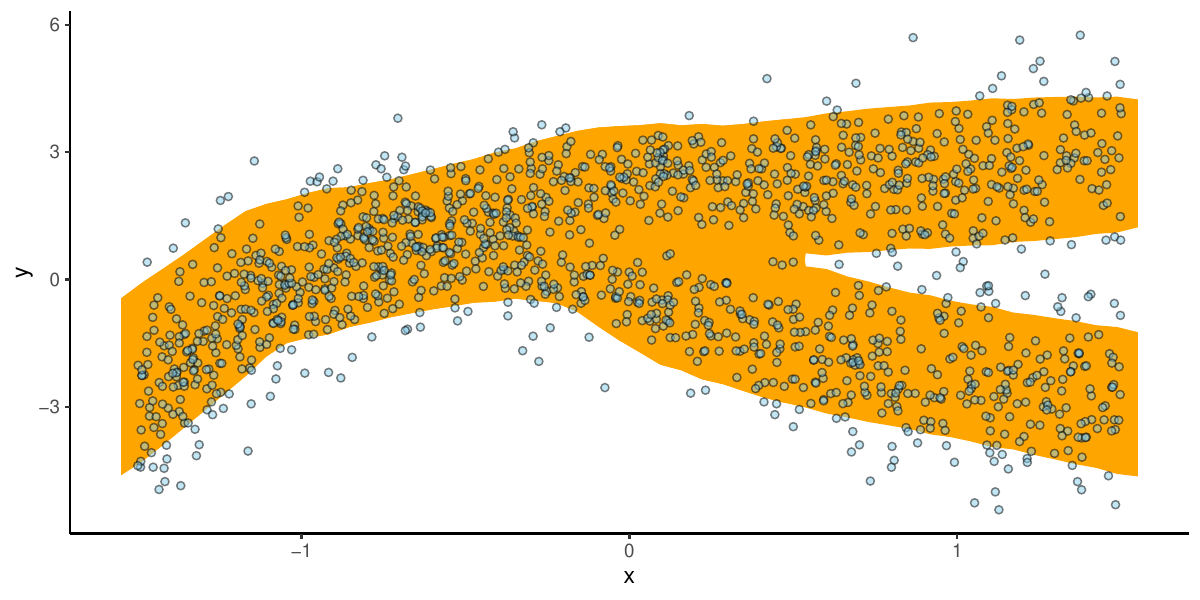}
\caption{An example of the prediction regions given by CHCDS (Gaussian Mix) in the mixture scenario.}\label{fig:regions_gaus_mixture}
\end{figure}

\begin{figure}[ht]
    \centering
\includegraphics[scale = 0.5]{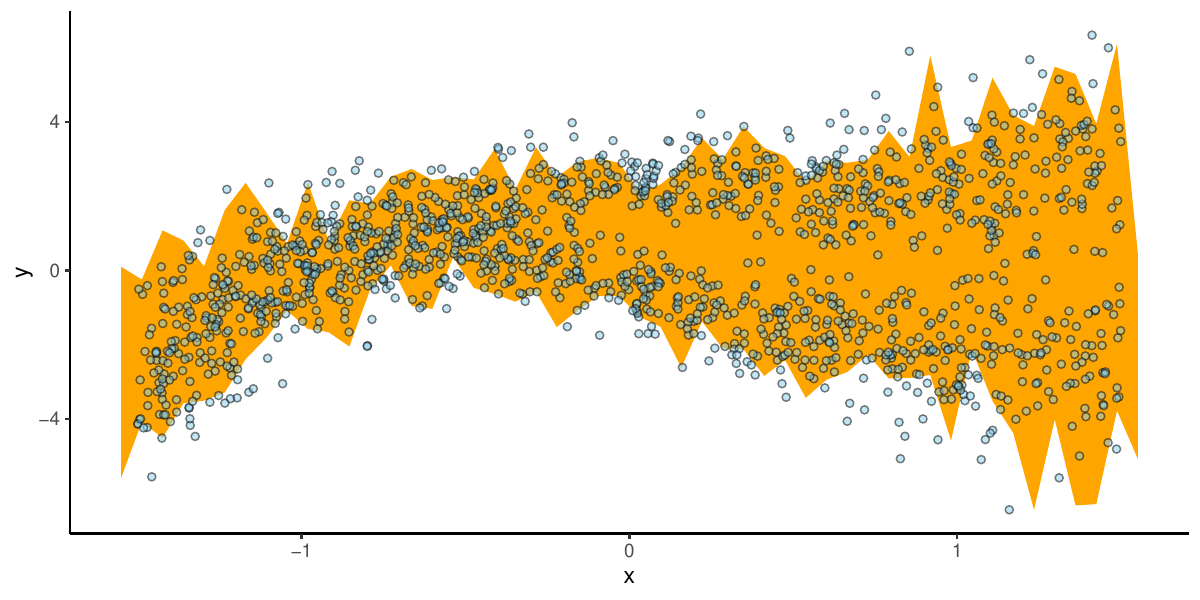}
\caption{An example of the prediction regions given by DCP in the mixture scenario.}\label{fig:regions_DCP_mixture}
\end{figure}

\begin{figure}[ht]
    \centering
\includegraphics[scale = 0.5]{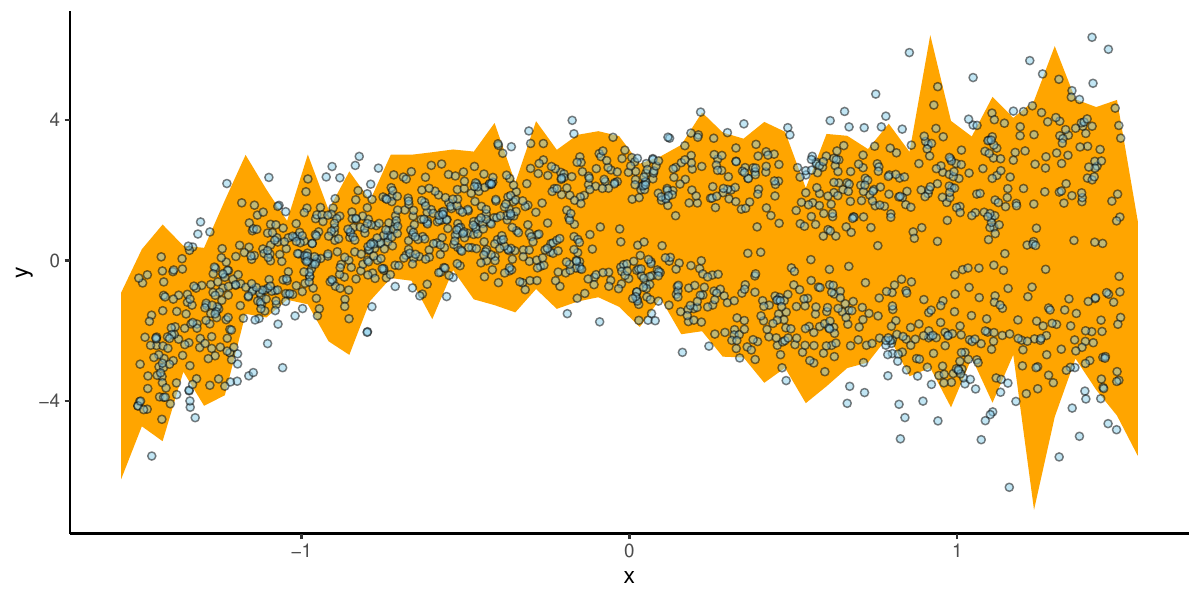}
\caption{An example of the prediction regions given by CQR in the mixture scenario.}\label{fig:regions_CQR_mixture}
\end{figure}

\begin{figure}[ht]
    \centering
\includegraphics[scale = 0.5]{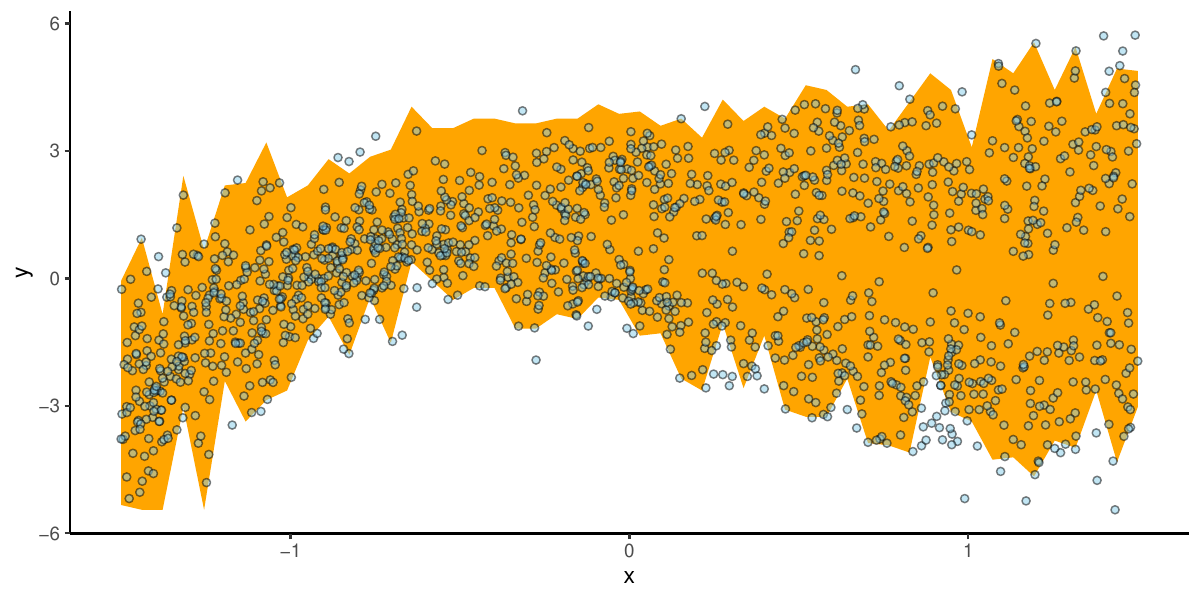}
\caption{An example of the prediction regions given by CHR in the mixture scenario.}\label{fig:regions_CHR_mixture}
\end{figure}

\section{Further Real Analysis}~\label{sec:further data analysis}
To demonstrate the flexibility of CHCDS, we also created a conditional Gaussian mixture density estimator in Pytorch \citep{pytorch_citation}. We trained the model on 64,950 observations, computed calibration scores on 5,000 observations, and computed the coverage results on 5,000 out of sample observations. Those results can be found in~\cref{tab:redshift_pytorch}. The feedforward model had 4 components, 3 hidden layers, and  2,652 parameters. The code can be found in a Jupyter notebook on  \href{https://github.com/maxsampson/CHCDS_HappyA}{GitHub here}. 

\begin{table}[ht]
\begin{center}
\caption{Coverage, conditional coverage, and average size of the prediction regions for CHCDS Pytorch conditional Gaussian mixture density estimator.}

    \begin{tabular}{cc}
         Coverage & Size \\ 
         802 (6) & 93 (1)  \\
         Coverage (Bright) & Size (Bright) \\
         800 (8) & 57 \\
         Coverage (Faint) & Size (Faint)   \\
         804 (8) & 129 (2) \\
    \end{tabular}
    \label{tab:redshift_pytorch}
\end{center}
    \caption{
	Standard errors are given in parentheses if they are greater than 1. All values have been multiplied by $10^3$.
	}
\end{table}

\newpage
\bibliography{ref}
\end{document}